\newtheorem{theorem}{Theorem}
\newtheorem{lemma}[theorem]{Lemma}
\newtheorem{definition}[theorem]{Definition}
\newtheorem{proposition}[theorem]{Proposition}
\newtheorem{corollary}[theorem]{Corollary}
\theoremstyle{definition}
\newtheorem{myclaim}{Claim}
\newcommand{\Z}{\mathbf{Z}}
\newcommand{\nat}{\mathbb{N}}
\newcommand*{\krst}{\textsc{$k$-FGC}\xspace}
\newcommand*{\onerst}{\textsc{FGC}\xspace}
\newcommand*{\I}{\ensuremath{I}\xspace}
\newcommand*{\formatmathnames}[1]{\textnormal{\small #1}}
\newcommand*{\OPT}{\formatmathnames{OPT}}
\newcommand*{\algo}{\ensuremath{\mathcal{A}}\xspace}
\newcommand*{\algoA}{A\xspace}
\newcommand*{\algoB}{B\xspace}
\newcommand*{\algoC}{C\xspace}
\newcommand*{\groundset}{\ensuremath{X}\xspace}
\newcommand*{\Alphas}{W\xspace}
\newcommand{\unsafe}{\text{unsafe}\xspace}
\newcommand{\safe}[1]{\ensuremath{\overline{#1}}}
\newcommand{\sce}{\ensuremath{E'}}
\newcommand{\WTAP}{\textsc{WTAP}\xspace}
\newcommand{\utap}{\textsc{TAP}\xspace}
\newcommand{\ECSS}{2-\textsc{ECSS}\xspace}
\newcommand{\flex}{\textsc{FGC}\xspace}
\newcommand{\tecs}{\textsc{$2$-ECSS}\xspace}
\newcommand{\tap}{\textsc{WTAP}\xspace}
\newcommand{\boundtwo}{\ensuremath{2.523}\xspace}
\newcommand{\boundthreehalf}{\ensuremath{2.4036}\xspace}
\newcommand{\boundthreehalflow}{\ensuremath{2.4035}\xspace}
\author[1]{David Adjiashvili}
\author[2]{Felix Hommelsheim\thanks{Research partially supported by the German Research Foundation (DFG), RTG 1855}}
\author[3]{Moritz M\"uhlenthaler}
\affil[1]{Department of Mathematics,  ETH Z\"urich, Switzerland }
\affil[2]{Fakult\"at f\"ur Mathematik, TU Dortmund University, Germany}
\affil[3]{Laboratoire G-SCOP, Grenoble INP, Univ.~Grenoble-Alpes, France}
\date{}
\title{Flexible Graph Connectivity: Approximating Network Design Problems Between 1- and 2-connectivity}
\tikzset{
	edge/.style={thick, gray},
	medge/.style={decorate,very thick,decoration={snake}},
	aedge/.style={very thick,dashed,black},
	dedge/.style={thick,->},
	availedge/.style={thick,blue},
	vertex/.style={shape=circle,thick,draw,node distance=3em}
}
\begin{document} 
   
\maketitle

\begin{abstract}
Graph connectivity and network design problems are among the most fundamental problems in combinatorial optimization. The minimum spanning tree problem, the two edge-connected spanning subgraph problem (\tecs) and the tree augmentation problem (\tap) are all examples of fundamental well-studied network design tasks that postulate different initial states of the network and different assumptions on the reliability of network components.
In this paper we motivate and study \emph{Flexible Graph Connectivity} (\flex), a problem that mixes together both the modeling power and the complexities of all aforementioned problems and more. 
In a nutshell, \flex asks to design a connected network, while allowing to
specify different reliability levels for individual edges. 
While this non-uniform nature of the problem makes it appealing from  the modeling perspective, it also renders most existing algorithmic tools for dealing with network design problems unfit for approximating \flex.

In this paper we develop a general algorithmic approach for approximating \flex that yields approximation algorithms with ratios that are close to the known best bounds for many special cases, such as \tecs and \tap. Our algorithm and analysis combine various techniques including a weight-scaling algorithm, a charging argument that uses a variant of exchange bijections between spanning trees and a factor revealing min-max-min optimization problem.
\end{abstract}

\newpage
\tableofcontents
\newpage

\renewcommand{\arraystretch}{1.25}

\section{Introduction}
\label{sec:introduction}

Many real-world design and engineering problems can be modeled as either graph connectivity or network design problems. Routing, city planning, communication infrastructure are just a few examples where network design problems are omnipresent. To realistically model a real-life problem as a network design problem, it is imperative to consider issues of \emph{reliability}, namely the capacity of the systems' resources to withstand disturbances, failures, or even adversarial attacks. This aspect motivates the study of many classical network design problem, as well as robust counterparts of many connectivity problem. 

The problem studied in this paper, which we call \emph{Flexible Graph
Connectivity} (\flex), lies in the intersection of classical network design and
robust optimization. It encapsulates several well studied problems that have
received significant attention from the research community, such as the minimum
spanning tree problem, the 2-edge-connected spanning subgraph problem
(\tecs)~\cite{frederickson_jaja_81,Jain2001,Goemans:94,SV:14,ccivril20195}, the weighted
tree augmentation problem
(\tap)~\cite{adjiashvili2018beating,fiorini2018approximating,frederickson_jaja_81,grandoni2018improved,kortsarz2018lp,kortsarz2016simplified,nutov2017tree}, and the matching augmentation problem~\cite{MAP:18}.
As such, \flex is \APX-hard and it encompasses all of the technical challenges
associated with approximating these problems simultaneously. 
In a sense, minimum spanning tree and \tecs represent two far ends of a 
spectrum of possible network design tasks that can be modeled with \flex and
the other mentioned problems lie in between. We argue that by translating
attributes of a real-life network design problem, one is much more likely to
encounter a problem from the aforementioned spectrum, as opposed to one of its
more famous extreme cases.

The problem \flex is formally defined as follows. The input is given by an undirected connected graph $G = (V,E)$, non-negative edge weights $w \in \mathbb{Q}_{\geq 0}^E$ and a set of edges $\overline F \subseteq E$ called \emph{safe} edges. Let $F := E\setminus \overline F$ be the \emph{\unsafe} edges. The task is to compute a minimum-weight edge set $S\subseteq E$ with the property that $(V,S-f)$ is a connected graph for every $f\in F$. 

We briefly illustrate why minimum spanning tree, \tecs, and \tap are all special cases of \flex.
Clearly, if all edges of the input graph are safe, then an optimal solution is a minimum-weight spanning tree. 
If, on the other hand, all edges of the input graph are unsafe, then an optimal solution is a minimum-weight 2-edge-connected spanning subgraph.
Finally, if the \unsafe edges form a weight-zero spanning tree $T$ of the input graph, then an optimal solution is a minimum-cost tree augmentation of $T$. 
The goal of this paper is to provide approximation algorithms for \flex and our main result is the following theorem.

\begin{theorem}
    \label{thm:flex}
    \flex admits a polynomial-time  $\boundtwo$-approximation algorithm.
\end{theorem}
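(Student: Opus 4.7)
The plan is a two-phase algorithm driven by a weight-scaling parameter $\alpha > 1$, followed by a charging analysis based on matroid exchange and a factor-revealing optimization that fixes $\alpha$.

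In the first phase, define scaled weights $w'$ by inflating the weight of every unsafe edge by a factor $\alpha$ while leaving safe-edge weights unchanged, and compute a minimum spanning tree $T$ under $w'$. Since every feasible \flex solution contains a spanning tree, and since each unsafe tree edge $f$ will later require an extra edge to cover its fundamental cut, penalizing unsafe edges at the tree-selection stage biases the backbone towards safe edges when these are not much more expensive than their unsafe competitors. In the second phase, for every unsafe tree edge $f \in T \cap F$ we must include in the final solution $S$ a non-tree edge crossing the $f$-cut so that $(V, S - f)$ remains connected; non-tree unsafe edges need no coverage. This yields a partial tree-augmentation instance on $T$ in which only the unsafe tree edges must be covered, which reduces to a standard \tap instance by contracting each maximal safe sub-path of $T$ to a single vertex. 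Solving it with any known \tap approximation produces an augmentation $A$ whose weight is at most $\beta$ times the cost of the best cover of $T$, where $\beta$ is the employed \tap ratio.

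The analysis is the main obstacle. Fix an optimal solution $\opt$ of value $\val$, choose any spanning tree $T^* \subseteq \opt$, and write $A^* := \opt \setminus T^*$. A Kruskal-style matroid exchange between $T$ and $T^*$ in the scaled weights produces a bijection that yields
\[
  w(T) + (\alpha - 1)\, w(T \cap F) \;\le\; w(T^*) + (\alpha - 1)\, w(T^* \cap F),
\]
bounding $w(T)$ in terms of $w(T^*)$ and the discrepancy of unsafe weight between the two trees. Since $A^*$ covers $T^*$ rather than $T$, bounding $w(A)$ requires reusing the same bijection: every unsafe edge of $T \setminus T^*$ can be covered by routing along its swap partner through $T^* \cup A^*$, giving a feasible augmentation of $T$ of weight at most $\gamma \, w(A^*) + \delta\, w(T^* \cap F)$ for explicit constants $\gamma, \delta$ depending on $\alpha$. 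Combined with the $\beta$-guarantee of phase two, this controls $w(A)$.

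Summing the two bounds gives $w(T) + w(A) \le \rho(\alpha)\cdot \val$, where the worst case of $\rho(\alpha)$ is obtained as the value of a min-max-min factor-revealing program whose inner variables model the adversarial allocation of $\val$ across safe tree edges, unsafe tree edges, and augmentation edges in $\opt$. Numerically minimizing $\rho(\alpha)$ over $\alpha$, and plugging in the best known ratio $\beta$ for \tap, produces the bound \boundtwo claimed in the theorem. The technical crux is the construction of the exchange bijection: it must simultaneously preserve the spanning-tree property and enable a clean double charging that bounds both the backbone cost \emph{and} the augmentation cost against a single $\opt$, since a naive matroid exchange optimized only for the tree weight would leave phase two uncontrolled.
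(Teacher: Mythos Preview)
Your proposal is essentially what the paper calls Algorithm~\algoB: compute a minimum spanning tree under scaled weights, contract its safe edges, and solve the resulting \WTAP instance. The paper's proof of the \boundtwo bound, however, does \emph{not} use \WTAP at all. The ratio is obtained by taking the better of Algorithm~\algoA (compute a $\lambda$-approximate \tecs of $G$) and Algorithm~\algoC (compute an $\alpha$-MST, contract its safe edges, and compute a $\lambda$-approximate \tecs of the contracted graph), run over \emph{all} threshold values of the scaling parameter. The factor-revealing program in Section~\ref{sec:onerst:5/2} then reduces to a one-parameter maximization whose optimum is $\frac{\lambda(\lambda + 2\sqrt{\lambda})}{2\sqrt{\lambda} + \lambda - 1}$, which equals \boundtwo at $\lambda = 2$. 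In fact, the paper devotes a separate section (Section~\ref{sec:onerst:2.8}) to showing that combining Algorithm~\algoA \emph{and} Algorithm~\algoB (your scheme, plus an additional \tecs call you do not have) yields only $14/5 = 2.8$.

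There is a concrete obstruction to your plan. Take any instance in which every edge is unsafe, so that \flex specializes to \tecs. Your scaling multiplies all edges by the same $\alpha$ and is therefore vacuous; the tree you compute is simply an MST $T$ under $w$, and phase two is a $\beta$-approximate \WTAP on all of $T$. Since $w(T)\le\OPT$ and the cheapest augmentation of $T$ is at most $\OPT$ (because $\OPT$ itself is a 2-edge-connected supergraph of $T$), your output costs at most $(1+\beta)\cdot\OPT$; with the best known $\beta=2$ for general weights, that is $3\cdot\OPT$, not \boundtwo. Your factor-revealing program cannot beat $1+\beta$ on this family regardless of the unspecified constants $\gamma,\delta$ or the shape of $\rho(\alpha)$, because the inner adversary can always place all of $\OPT$ in unsafe edges. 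The paper escapes this trap precisely by invoking a \tecs algorithm directly (Algorithms~\algoA and~\algoC), which gives $\lambda\cdot\OPT=2\cdot\OPT$ on such instances; the min-max-min analysis then plays this off against the scaled-MST bounds to land at \boundtwo. Without a \tecs subroutine somewhere in your pipeline, the claimed numerical conclusion cannot hold.
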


In the spirit of recent advancements for \tap our results also extend to bounded-weight versions of the problem. A bounded-weight \flex instance is one whose weights all range between $1$ and some fixed constant $M \in \nat$. For bounded-weight \flex we obtain the following result. 

\begin{theorem}
    \label{thm:flexbounded}
    Bounded-weight
    \flex admits a polynomial-time $\boundthreehalf$-approximation algorithm.
  \end{theorem}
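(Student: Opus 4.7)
The plan is to reuse the algorithmic framework that establishes Theorem~\ref{thm:flex} while instantiating two of its ingredients more sharply under the bounded-weight hypothesis. Recall that this framework combines a weight-scaling step with a black-box approximation for a tree-augmentation style subproblem, and that the resulting ratio of $\boundtwo$ comes from solving a factor-revealing min-max-min program whose parameters are precisely the approximation guarantees of these ingredients. The improvement from $\boundtwo$ to $\boundthreehalf$ should therefore arise by replacing the general-weight \tap subroutine with a bounded-weight \tap approximation of ratio close to $3/2$, and from the fact that the weight-scaling loses less when edge weights differ by at most a constant factor $M$.

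I would proceed in three steps. First, I would exploit that bounded weights allow the edge set $E$ to be split into $O(\log M)=O(1)$ geometrically spaced weight classes; within each class the ratio of maximum to minimum weight is at most a constant, so the weight-scaling step of the algorithm for Theorem~\ref{thm:flex} incurs only constant-factor slack and contributes a provably smaller loss to the overall ratio. Second, I would plug in the bounded-weight \tap approximation of Adjiashvili~\cite{adjiashvili2018beating} (or one of its refinements from \cite{fiorini2018approximating,grandoni2018improved,kortsarz2018lp}) in place of the general \tap subroutine; the crucial check is that the augmentation instances produced by the reduction from \flex inherit bounded weights, which they do because the relevant scaling factors depend only on $M$. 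Third, I would re-solve the factor-revealing min-max-min optimization problem with these two improved parameters substituted in, which is a routine numerical task and yields the target value $\boundthreehalflow$.

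The main obstacle I anticipate is the first step: ensuring that the intermediate augmentation instances produced by the weight-scaling still lie within a bounded-weight regime suitable for the improved \tap subroutine. A clean way to handle this is to iterate over weight classes, applying the full framework within each class while charging the residual weight from the other classes to the optimum; an alternative is to argue directly that the scaling factors and the bucketing interact benignly, so that after scaling the weight range is still $O(M)$. Once either resolution is in place, the charging argument based on exchange bijections between spanning trees carries over essentially unchanged from the proof of Theorem~\ref{thm:flex}, and only the numerical optimum of the factor-revealing program has to be recomputed in order to obtain the desired $\boundthreehalf$ ratio.
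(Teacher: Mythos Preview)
Your proposal contains a genuine misconception about how the bounded-weight hypothesis is used. Your Step~1 (geometric bucketing of edges into $O(\log M)$ weight classes, arguing that the weight-scaling ``loses less'') does not appear in the paper and is not needed; the scaling factors $\alpha$ in Algorithm~\ref{alg:onerst:approx} serve to make safe edges more attractive in the MST computation, and their analysis via $\alpha$-monotone exchange bijections is entirely oblivious to the weight range. The bounded-weight assumption enters in exactly one place: it allows the \tap subroutine in Algorithm~\algoB to be instantiated with ratio $\tau = 3/2$ rather than $\tau = 2$.

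This leads to the second gap: the proof of Theorem~\ref{thm:flex} that you propose to modify does \emph{not} use Algorithm~\algoB (and hence no \tap subroutine) at all---Section~\ref{sec:onerst:5/2} analyzes only Algorithms~\algoA and~\algoC. The improvement to $\boundthreehalf$ comes precisely from \emph{adding} Algorithm~\algoB, with the sharper $\tau=3/2$, into the combination. So you are not ``replacing'' a \tap call with a better one; you are introducing a third upper bound $f_i^\algoB$ (Lemma~\ref{lemma:onerst:refinedbound:B}) into the inner minimum of \eqref{eq:minmaxmin}. The concern you raise about whether the produced \tap instances inherit bounded weights is then trivially resolved: Algorithm~\algoB solves \tap on $(G, T', w)$ with the original weights $w$, so the bound $M$ carries over directly.

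Finally, the paper does not solve the resulting factor-revealing program analytically. It fixes $N=60$ explicit values $\alpha_1,\ldots,\alpha_N$, writes the max-min program \eqref{optforbound} with all three families of bounds, and obtains the upper bound $\boundthreehalf$ computationally via the baron NLP solver. So your Step~3 is correct in spirit but should be stated as a computational rather than a routine closed-form calculation.
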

 
We elaborate on the techniques used to prove these theorems as well as their connection to results on \tecs and \tap later on. We start by giving our motivation for studying this problem.

\subsection{Importance of non-uniform models for network reliability}

The vast majority of network design problems are motivated by reliability
requirements imposed on real-life networks. For example, the $k$-edge connected
spanning subgraph problem~\cite{gabow_et_al_kECSS_09,Gabow:12} asks to
construct a connected network that can withstand a failure of at most $k-1$
edges. In the \emph{$k$-edge connectivity augmentation problem} we
are given a $k$-edge connected graph and the goal is to add a minimum
weight-set of edges such that the resulting graph is ($k+1$)-edge connected. It
is shown in \cite{dinits1976structure} that $k$-edge connectivity augmentation can be reduced to \tap for odd $k$ and to cactus
augmentation for even $k$~\cite{byrka2019breaching,galvez2019cycle}.
Thus, the recent result for cactus augmentation~\cite{byrka2019breaching} implies a $1.91$-approximation for unweighted $k$-edge connectivity augmentation (together with the $1.5$-approximation~\cite{grandoni2018improved} for \utap).
The more general \emph{survivable network design problem}
asks to construct a network that admits a prescribed number of edge-disjoint
paths between every pair of nodes~\cite{Goemans:94,Jain2001}.

While all of the above problems are important for modeling reliability in the design of real-world networks, they also neglect the inherent inhomogeneity of resources (such as nodes, links etc.) in such environments. This inhomogeneity stems from various factors, such as geographical region, available construction material, proximity to hazards, the ability to defend the asset and more. To incorporate this aspect of real-world problems it is imperative to differentiate between different resources, not only in terms of their cost, but also in terms of their reliability. In the simplest setup, we would like to distinguish between \emph{safe} and \emph{\unsafe} resources (edges, nodes etc.) and postulate that failures can only occur among the \unsafe resources. With \flex we adopt this model and study the basic graph connectivity problem.
 
In the robust optimization literature, several non-uniform failure models have
been proposed. Adjiashvili, Stiller and Zenklusen~\cite{bulk} proposed the
bulk-robust model, in which a solution has to be chosen that withstands the
failure of any input prescribed set of scenarios, each comprising a subset of
the resources. Since subsets can be specified arbitrarily, bulk-robustness can
be used to model high correlations between failures of individual elements, as
well as highly non-uniform scenarios. \flex falls into the category of bulk-robust network
design problems, since we can model the reliability criterion by creating
failure scenarios, one per unsafe edge in the bulk-robust setup and require the
graph to be connected. In fact, the existing results~\cite{bulk} on bulk-robust
optimization imply the existence of $\log n$ ratio for the problem. In this
paper we improve this significantly.

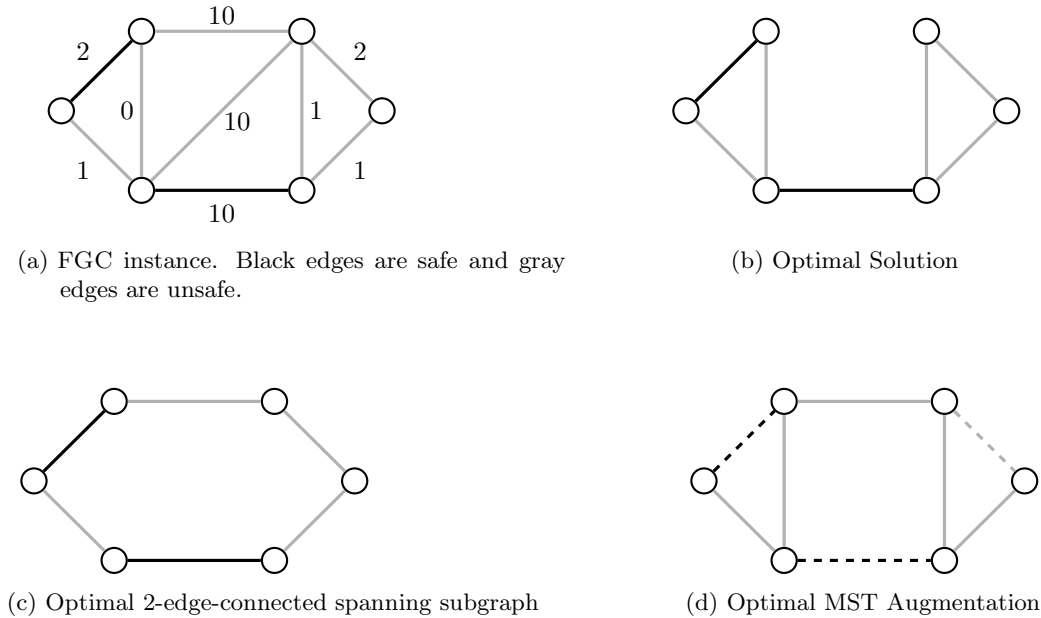
\begin{figure}[t]
 \begin{center}
    \begin{subfigure}[t]{.45\linewidth}
      \centering
      \begin{tikzpicture}[vertex/.style={shape=circle,thick,draw,node distance=3em,fill=white}]
        \node[vertex,label=left:] (v1) {};
        \node[vertex,right of =v1, draw=white] (a) {};
        \node[vertex,above of =a] (v2) {};
	\node[vertex,right of =a, draw=white] (b) {};
        \node[vertex,below of =a] (v3) {};
	\node[vertex,right of =b, draw=white] (c) {};
        \node[vertex,above of =c] (v4) {};
        \node[vertex,right of =c] (v5) {};
        \node[vertex,below of =c] (v6) {};

	\node[vertex,right of =v5, draw=white] (d10) {};
	\node[vertex,right of =d10, draw=white] (d) {};
        \draw[-,very thick, color=black] (v1) -- (v2) node [midway,
	  color=black, label ={[label distance=-0.5em]above left: \textcolor{black}{$2$}}] {};
        \draw[-,very thick, color=black!30!white] (v1) -- (v3) node [midway,
	  color=black, label={[label distance=-0.5em]below left: \textcolor{black}{$1$}}] {};
        \draw[-,very thick, color=black!30!white] (v2) -- (v3) node [midway,
color=black, label ={[label distance=-0.5em]left:\textcolor{black}{$0$}}] {};
	\draw[-,very thick, color=black] (v3) to  node [midway,label={[label distance=-2em]:\textcolor{black}{$10$}}] {} (v6);
        \draw[-,very thick, color=black!30!white] (v3) -- (v4) node [midway,
	  label ={[label distance=-1em]below right: \textcolor{black}{$10$}}] {};
        \draw[-,very thick, color=black!30!white] (v2) -- (v4) node [midway,
	  label ={[label distance=-0.5em]above: \textcolor{black}{$10$}}] {};
        \draw[-,very thick, color=black!30!white] (v6) -- (v5) node [midway,
	  label = {[label distance=-0.5em]below right: \textcolor{black}{$1$}}] {};
        \draw[-,very thick, color=black!30!white] (v4) -- (v5) node [midway,
	  label = {[label distance=-0.5em]above right: \textcolor{black}{$2$}}] {};
        \draw[-,very thick, color=black!30!white] (v6) -- (v4) node [midway,
	  label = {[label distance=-0.5em]right: \textcolor{black}{$1$}}] {};
	\end{tikzpicture}
	\caption{\flex instance. Black edges are safe and gray edges are \unsafe.\label{fig:example:instance}}
    \end{subfigure}
    \begin{subfigure}[t]{.45\linewidth}
      \centering
      \begin{tikzpicture}[vertex/.style={shape=circle,thick,draw,node distance=3em,fill=white}]
        \node[vertex,right of =d] (u1) {};
        \node[vertex,right of =u1, draw=white] (e) {};
        \node[vertex,above of =e] (u2) {};
	\node[vertex,right of =e, draw=white] (f) {};
        \node[vertex,below of =e] (u3) {};
	\node[vertex,right of =f, draw=white] (g) {};
        \node[vertex,above of =g] (u4) {};
        \node[vertex,right of =g] (u5) {};
        \node[vertex,below of =g] (u6) {};

        \draw[-,very thick, color=black] (u1) -- (u2);
        \draw[-,very thick, color=black!30!white] (u1) -- (u3);
        \draw[-,very thick, color=black!30!white] (u2) -- (u3);
        \draw[-,very thick, color=black] (u3) -- (u6);

        \draw[-,very thick, color=black!30!white] (u6) -- (u5);
        \draw[-,very thick, color=black!30!white] (u4) -- (u5);
        \draw[-,very thick, color=black!30!white] (u6) -- (u4);

	\phantom{\draw[-,very thick, color=black] (u3) to  node [midway,label={[label distance=-2em]:\textcolor{black}{$10$}}] {} (u6);}
      \end{tikzpicture}
      \caption{Optimal Solution\label{fig:example:opt}}
    \end{subfigure}
    \\
    \begin{subfigure}[t]{.48\linewidth}
      \centering
      \begin{tikzpicture}[vertex/.style={shape=circle,thick,draw,node distance=3em}]
	\node[vertex,below of =v3, fill=white, draw=white] (z) {};

        \node[vertex,below of =z, fill=white] (w2) {};
        \node[vertex, below of =w2, fill=white, draw=white] (a1) {};
        \node[vertex,left of =a1, fill=white] (w1) {};
	\node[vertex,right of =a1, fill=white, draw=white] (b1) {};
        \node[vertex,below of =a1, fill=white] (w3) {};
	\node[vertex,right of =b1, fill=white, draw=white] (c1) {};
        \node[vertex,above of =c1, fill=white] (w4) {};
        \node[vertex,right of =c1, fill=white] (w5) {};
        \node[vertex,below of =c1, fill=white] (w6) {};

	\node[vertex,right of =w5, fill=white, draw=white] (d11) {};
	\node[vertex,right of =d11, fill=white, draw=white] (d1) {};
        \draw[-,very thick, color=black] (w1) -- (w2);
        \draw[-,very thick, color=black!30!white] (w1) -- (w3);
        \draw[-,very thick, color=black] (w3) -- (w6);
        \draw[-,very thick, color=black!30!white] (w2) -- (w4);
        \draw[-,very thick, color=black!30!white] (w6) -- (w5);
        \draw[-,very thick, color=black!30!white] (w4) -- (w5);
      \end{tikzpicture}
      \caption{Optimal 2-edge-connected spanning subgraph\label{fig:example:2ecss}}
    \end{subfigure}
    \begin{subfigure}[t]{.48\linewidth}
      \centering
      \begin{tikzpicture}[vertex/.style={shape=circle,thick,draw,node distance=3em}]
        \node[vertex,right of =d1, fill=white] (x1) {};
        \node[vertex,right of =x1, fill=white, draw=white] (e1) {};
        \node[vertex,above of =e1, fill=white] (x2) {};
	\node[vertex,right of =e1, fill=white, draw=white] (f1) {};
        \node[vertex,below of =e1, fill=white] (x3) {};
	\node[vertex,right of =f1, fill=white, draw=white] (g1) {};
        \node[vertex,above of =g1, fill=white] (x4) {};
        \node[vertex,right of =g1, fill=white] (x5) {};
        \node[vertex,below of =g1, fill=white] (x6) {};

        \draw[-,very thick, color=black, dashed] (x1) -- (x2);
        \draw[-,very thick, color=black!30!white] (x1) -- (x3);
        \draw[-,very thick, color=black!30!white] (x2) -- (x3);
        \draw[-,very thick, color=black, dashed] (x3) -- (x6);
        \draw[-,very thick, color=black!30!white] (x2) -- (x4);
        \draw[-,very thick, color=black!30!white] (x6) -- (x5);
        \draw[-,very thick, color=black!30!white,dashed] (x4) -- (x5);
        \draw[-,very thick, color=black!30!white] (x6) -- (x4);
      \end{tikzpicture}
      \caption{Optimal MST Augmentation}
    \end{subfigure}
  \end{center}
 \caption{The versatility of \flex. In (a) an instance of \flex, with \unsafe edges colored gray. In (b) the optimal solution. In (c) the optimal solution to the same instance, except where all edges are \unsafe. The problem coincides with \tecs on the same graph. In (d) the dashed edges are an optimal \flex, solution for the instance where the gray edges have cost zero and are the \unsafe edges. It is equivalent to the tree augmentation problem for the gray minimum spanning tree.}\label{fig:example}
\end{figure}

\subsection{On the complexity of \flex and its relationship to \tecs and \tap}

As was pointed out before, some classical and well studied network design problems are special cases of \flex, including \tecs, and \tap. Thus, approximating \flex, is at least as challenging as approximating the latter two problems. In this section we present some evidence that \flex might actually be significantly harder. For the general case of both \tecs and \tap the iterative rounding algorithm of Jain~\cite{Jain2001} provides an approximation factor of $2$, which is best known. It is important to note that \tecs subsumes \tap in the general (weighted) case. Nevertheless, it is both instructive and useful to relate \flex to both problems.
In particular, this enables us to improve the approximation ratio for the bounded-weight version of \flex.

In contrast, for the unweighted versions of both problems (and in the case of
\tap more generally for bounded weights), a long line of results has generated
numerous improvements beyond ratio two, leading to the currently best known
bounds of $5/4$ for unweighted \tecs~\cite{ccivril20195}, $1.46$ for unweighted tree
augmenation~\cite{grandoni2018improved} and $1.5$ for \tap with
bounded weights~\cite{fiorini2018approximating,grandoni2018improved}. The case
of unit (or bounded) weights is where techniques for approximating \tecs
and \tap start to differ significantly. In both cases, the known best bounds
are achieved by combining LP-based techniques with clever combinatorial tools.
Nevertheless, there seems to be very little intersection in both the nature of
used LPs and the overall approaches, as techniques suitable for one problem do
not seem to provide competitive ratios for the other.

Consequently, there are several implications for approximating \flex. Firstly,
achieving an approximation factor better than two is an ambitious task, as it
would simultaneously improve the long-standing best known bounds for both \tecs
and \tap. At the same time, for achieving a factor two, it may be possible to
use classical tools for survivable network design~\cite{Jain2001}. We show
that, at least with the natural LPs, this is impossible, as the integrality gap
of such LPs can be significantly larger than $2$. Consider the
following natural generalization of the cut-based formulation for survivable
network design to \flex. 

\begin{equation}
  \begin{aligned}
    \text{minimize }\; & w^T x \\
    \text{subject to }\; & \sum_{f \in \delta(S) \cap F} x_f + \sum_{e \in \delta(S) \cap \overline {F}} 2 x_e \geq 2\; \text{ for all } \emptyset \subsetneq S \subsetneq V\\
    & x_e \in \{0, 1\}\; \text{ for all } e \in E.
  \end{aligned}
  \label{eq:ilp}
\end{equation}

In essence, the IP formulation~\eqref{eq:ilp} states that each cut in the
graph needs to contain at least one safe edge or at least two edges, which
indeed is the feasibility condition for \flex.  One can show that many
important properties that are central in Jain's~\cite{Jain2001} analysis still
hold, e.g., the possibility to perform uncrossing for tight constraints at a
vertex LP solution. These properties might become useful for devising a pure
LP-based algorithm for \flex. However, for the instance shown in
Figure~\ref{fig:integrality-gap}, the integrality gap of formulation~\eqref{eq:ilp} is at
least $8/3$.  

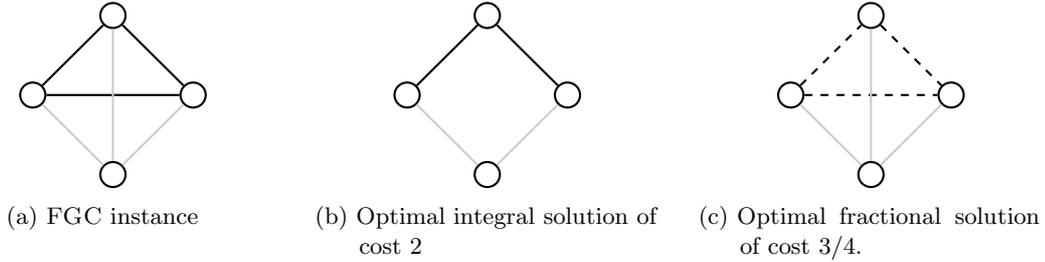
\begin{figure}
  \begin{center}
    \begin{subfigure}[t]{.28\linewidth}
      \centering
      \begin{tikzpicture}[vertex/.style={shape=circle,thick,draw,node distance=3em,fill=white}]
	\node[vertex,label=left:] (v1) {};
	\node[vertex, right of =v1, draw=white] (a) {};
	\node[vertex,label=left:, above of =a] (v2) {};
	\node[vertex,label=left:, right of =a] (v3) {};
	\node[vertex,label=left:, below of =a] (v4) {};

	\draw[-,thick, color=black] (v1) -- (v2);
	\draw[-,thick, color=black] (v1) -- (v3);
	\draw[-,thick, color=black] (v2) -- (v3);
	\draw[-,thick, color=black!20!white] (v1) -- (v4);
	\draw[-,thick, color=black!20!white] (v2) -- (v4);
	\draw[-,thick, color=black!20!white] (v3) -- (v4);
      \end{tikzpicture}
      \caption{\flex instance}
    \end{subfigure}
    \hspace{1em}
    \begin{subfigure}[t]{.28\linewidth}
      \centering
      \begin{tikzpicture}[vertex/.style={shape=circle,thick,draw,node distance=3em,fill=white}]
	\node[vertex,fill=white] (u1) {};
	\node[vertex, right of =u1, draw=white] (b) {};
	\node[vertex,label=left:, above of =b] (u2) {};
	\node[vertex,label=left:, right of =b] (u3) {};
	\node[vertex,label=left:, below of =b] (u4) {};	

	\draw[-,thick, color=black] (u1) -- (u2);
	\draw[-,thick, color=black] (u2) -- (u3);
	\draw[-,thick, color=black!20!white] (u1) -- (u4);
	\draw[-,thick, color=black!20!white] (u3) -- (u4);
      \end{tikzpicture}
      \caption{Optimal integral solution of cost 2}
    \end{subfigure}
    \hspace{1em}
    \begin{subfigure}[t]{.28\linewidth}
      \centering
      \begin{tikzpicture}[vertex/.style={shape=circle,thick,draw,node distance=3em,fill=white}]
	\node[vertex] (w1) {};
	\node[vertex, right of =w1, draw=white] (c) {};
	\node[vertex,label=left:, above of =c] (w2) {};
	\node[vertex,label=left:, right of =c] (w3) {};
	\node[vertex,label=left:, below of =c] (w4) {};	        

	\draw[-,thick, color=black, dashed] (w1) -- (w2);
	\draw[-,thick, color=black, dashed] (w1) -- (w3);
	\draw[-,thick, color=black, dashed] (w2) -- (w3);
	\draw[-,thick, color=black!20!white] (w1) -- (w4);
	\draw[-,thick, color=black!20!white] (w2) -- (w4);
	\draw[-,thick, color=black!20!white] (w3) -- (w4);
      \end{tikzpicture}
      \caption{Optimal fractional solution of cost $3/4$.}
    \end{subfigure}
    \caption{\flex instance for which~\eqref{eq:ilp} has integrality gap $8/3$. Gray edges are unsafe and
      have cost 0. Black edges are safe and have cost 1. The dashed (safe)
      edges are fractional with value $1/4$.\label{fig:integrality-gap}}
    \end{center}
\end{figure}

Finally, the recent advances achieved for unweighted and bounded-weight versions of \tecs and \tap seem to be unsuitable to directly tackle \flex, as they were not found to provide good ratios for both \tecs and \tap simultaneously. It is natural to conclude that a good ratio for \flex can only be achieved by a combination of techniques suitable for both \tecs and \tap. 

To summarize, it seems that while the only known techniques for simultaneously approximating both \tecs and \tap within a factor two rely on rounding natural linear programs relaxations of a more general network design problem (such as the survivable network design problem~\cite{Goemans:94,Jain2001}), the integrality gaps of such natural LPs for \flex are significantly larger than two. At the same time, due to its strong motivation, it is desirable to achieve a factor close to two, which is the state of the art for both \tecs and \tap. 

In this paper we show that this goal can be achieved by properly combining algorithms for \tecs, and \tap. Our algorithms are simple and black-box to such an extent that results for restricted versions of \tap (e.g., bounded cost) can be directly applied to \flex with the same restrictions, thus leading to improved bounds for these restricted versions of \flex as well. 
At the same time, the analysis is complex and requires careful charging arguments, generalizations of the notion of exchange bijections of spanning trees and factor revealing optimization problems, as we elaborate next.

\subsection{Main techniques and an overview of the algorithm}

\begin{table}[t]
    \centering
    \caption{Approximation ratios for \flex and some special cases.\label{tab:apx:overview}}
    \begin{tabular}{r|cccl}
        \toprule
    problem & general weights & bounded weights & unweighted \tabularnewline[0.25em]
        \midrule
        \flex & \boundtwo (Thm.~\ref{thm:flex}) & \boundthreehalf (Thm.~\ref{thm:flexbounded}) & 23/16 (Thm.~\ref{thm:unweightedkflex}) & \tabularnewline
        \tecs & 2~\cite{Jain2001}   & 2~\cite{Jain2001} & 5/4~\cite{ccivril20195} &\tabularnewline
        \tap  & 2~\cite{frederickson_jaja_81}   & $23/16$~\cite{grandoni2018improved} &  1.46~\cite{grandoni2018improved} &\tabularnewline
        \bottomrule
    \end{tabular}
\end{table}

We present here a high-level overview of some of the technical ingredients that  go into our algorithm and analysis used to prove Theorem~\ref{thm:flex}. The algorithm carefully combines the following three rather simple algorithms for \flex, each having an approximation ratio significantly worse than the one exhibited in Theorem~\ref{thm:flex}. 
A more detailed description of the algorithms will be given later on.
The main idea is to combine in a clever way the following three simple algorithms, which we formally state later on.
Each of the algorithms establishes 2-edge-connectivity in a modified graph, where
a subset of safe edges is contracted.
In order to be able to establish 2-edge connectivity, we need to add a parallel
\unsafe edge $e'$ for each safe edge $e$ of the same cost.  
It can easily be observed that optimal solutions
for the new instance are also optimal solution for the old instance
and vice versa. 
The three algorithms are stated below.
\begin{center}
    \begin{tabular}[h]{rp{0.75\linewidth}}
        \textbf{Algorithm A} & Compute a 2-edge connected spanning subgraph.\tabularnewline
        \textbf{Algorithm B} & Compute a minimum spanning tree, contract its safe edges and make it 2-connected by solving the corresponding \tap instance. \tabularnewline
        \textbf{Algorithm C} & Compute a minimum spanning tree, contract its safe edges, then compute a 2-edge connected spanning subgraph. Return the union of this solution and the safe edges of the spanning tree.\tabularnewline
    \end{tabular}
\end{center}

It is not hard to show that Algorithms A, B and C are polynomial-time approximation algorithms for \flex, with approximation ratios of $4, 3$ and $5$, respectively, given that \tecs and \tap admit polynomial-time 2-approximation algorithms. We defer the details to the later sections and instead present a road map for proving the main result.

Our approximate solution is obtained from returning the best of many solutions, each computed by one of the above three algorithms on an instance that is computed from the original instance by appropriately scaling the costs of the safe edges. The motivation for making safe edges cheaper is that buying a similarly priced unsafe edge instead likely incurs extra costs, since one safe edge or at least two edges have to cross each cut. The technical challenge is to determine the most useful scaling factors. 

The main idea in the analysis is to relate the costs of edges in an optimal solution to the costs of edges 
in the computed solutions based on a proper generalization of \emph{exchange bijections} between spanning trees. Exchange bijections $\varphi: A \rightarrow B$ are bijections between bases $A, B$ of a matroid (e.g., spanning trees in a connected graph), such that for any $a\in A$, the set $A \setminus \{a\} \cup \{\varphi(a)\}$ is again a basis of the matroid. It is well known that an exchange bijection always exists. We introduce our generalized notion of \emph{$\alpha$-monotone exchange bijections}, where $\alpha$ is a scaling factor used in the algorithm, and prove that they always exist between a spanning tree of the optimal and a spanning tree of a computed solution. We then combine the properties of such bijections with additional technical ideas to derive an upper bound on the cost of the computed solution. The bound is expressed in terms of several parameters that represent proportions of costs associated with parts of the computed and an unknown optimal solution, defined through the exchange bijections.

The final step is to combine all obtained upper bounds. Since we have the choice of selecting the scaling factors, but have no control over the parameters appearing in the upper bounds, we can compute a conservative upper bound on the approximation ratio by solving a three-stage \emph{factor-revealing min-max-min optimization problem}.
The inner minimum is taken over the upper bounds on the values of the solutions computed by algorithms \algoA, \algoB, and \algoC. The maximum is taken over the parameters that depend on an unknown optimal solution. Finally, the outer minimum is taken over the choice of scaling factors.
One interesting aspect of our factor revealing optimization problem is that its solution gives not only a  bound on the approximation ratio of the algorithm (as in, e.g.,~\cite{jain2003greedy,adjiashvili2014time}), but it also suggests optimal instance-independent scaling factors to be used by the algorithm itself.
One can show that by computing the scaling factors it is possible to constrain further
the unknown parameters in the problem, thus obtaining better instance-specific bounds.
While we can compute these factors in polynomial time, we will not elaborate on this approach in the paper.

Since the overall factor-revealing optimization problem is a three-stage
min-max-min program, we can only provide analytic proof for its optimal value
for very small sizes.
However, we are still able to give an analytic bound of \boundtwo in order to
prove Theorem \ref{thm:flex} by combining only algorithms \algoA and \algoC,
but using the optimal choices of scaling factors for a given instance.
To achieve the factor \boundthreehalf for bounded weight instances we use all
three algorithms \algoA, \algoB and \algoC to bound the optimal solution of the
min-max-min problem.  Clearly, using all three algorithms yields better bounds,
but we cannot give an analytic upper bound.  Instead we give a computational
upper bound on the min-max-min problem using the baron solver~\cite{baron}. An overview of
our approximation guarantees along with some related results can be found in
Table~\ref{tab:apx:overview}.

\subsection{Further Results}

We also consider several generalizations and special cases of \flex. First, we
show that the unweighted version of the problem admits a
$23/16$-approximation algorithm. We note that unweighted \flex does not
contain unweighted \tap as special case, and hence, our result does not imply a
$23/16$-approximation algorithm for unweighted \tap. In particular
we prove the following theorem for a generalization of \flex, which we call
$k$-\flex. The problem $k$-\flex asks for the minimum weight connected subgraph of a
given graph that can withstand the failure of at most $k$ \unsafe edges. Note
that $1$-\flex is simply \flex. 

\begin{restatable}{theorem}{unweightedkflex}
  \label{thm:unweightedkflex}
  Unweighted $k$-\flex admits a polynomial-time $\left( \vartheta_{k+1}\cdot \frac{2k+1}{2k+2} + \frac{1}{k+1}\right)$-approximation algorithm.
\end{restatable}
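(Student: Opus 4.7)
The strategy is to extend the algorithmic framework of the paper from $\flex$ (the case $k=1$) to general $k$, with $(k+1)$-edge-connectivity replacing $2$-edge-connectivity throughout. The key structural observation is that a subgraph $S \subseteq E$ is $k$-\flex-feasible in $G$ if and only if every cut of $G$ contains either a safe edge of $S$ or at least $k+1$ unsafe edges of $S$. Equivalently, letting $G'$ denote the multigraph obtained from $G$ by replacing every safe edge with $k+1$ parallel (unsafe) copies, $S$ is $k$-\flex-feasible iff the corresponding multi-subset of $G'$ is $(k+1)$-edge-connected; feasibility of the input is therefore equivalent to $G'$ being $(k+1)$-edge-connected.

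I propose a meta-algorithm that runs two natural subroutines and outputs the smaller solution. The first subroutine, call it $\mathcal{A}'$, runs a $\vartheta_{k+1}$-approximation for unweighted $(k+1)$-ECSS on $G'$ to obtain $H'$ and projects back to $G$ to produce a feasible solution $H_{\mathcal{A}}$; writing $|H_{\mathcal{A}}| = s + u$ for the numbers of used safe and unsafe edges, we get $(k+1)s + u = |H'| \le \vartheta_{k+1} \cdot \opt_{(k+1)\text{-ECSS}}(G')$ (under the WLOG assumption that each safe edge in $H'$ is used either $0$ or $k+1$ times). The second subroutine, call it $\mathcal{C}'$, computes a spanning tree $T$ of $G$ maximizing the number of safe edges (e.g., via MST with safe weight $0$ and unsafe weight $1$), sets $\overline T := T \cap \overline F$, contracts $\overline T$ in $G$ to obtain $G_1$, applies $\mathcal{A}'$ to $G_1$ to obtain $H_{G_1}$, and returns $\overline T \cup H_{G_1}$ (which is $k$-\flex-feasible in $G$ by a standard cut argument).

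For the analysis, fix an optimal solution $S^*$ with $s^* = |S^* \cap \overline F|$, $u^* = |S^* \cap F|$, and WLOG assume $S^* \cap \overline F$ is a forest, so that $\opt = s^* + u^*$. The basic lower bounds are $\opt \ge n-1$ and $\opt_{(k+1)\text{-ECSS}}(G') \le (k+1)s^* + u^*$ (from the structural observation applied to $S^*$). The bound on $|H_{\mathcal{A}}|$ is strongest when $\opt$ is unsafe-heavy. For $|H_{\mathcal{C}}| = |\overline T| + |H_{G_1}|$, I would combine $|\overline T| \le n-1 \le \opt$ with a matroid-exchange argument relating $\overline T$ to $S^* \cap \overline F$ to bound the contracted instance: $\opt_{(k+1)\text{-ECSS}}(G_1') \le (k+1)(s^* - c) + u^*$, where $c$ measures the overlap. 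In the best case (when $\overline T$ can be chosen to contain $S^* \cap \overline F$) we have $c = s^*$ and the second term reduces to $\vartheta_{k+1} u^*$. Combining via $\min(|H_{\mathcal{A}}|, |H_{\mathcal{C}}|) \le \lambda|H_{\mathcal{A}}| + (1-\lambda)|H_{\mathcal{C}}|$, optimized over $\lambda \in [0,1]$ as a function of $s^*/\opt$, yields the claimed ratio $\vartheta_{k+1} \cdot \frac{2k+1}{2k+2} + \frac{1}{k+1}$.

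The main obstacle I anticipate is the matroid-exchange step for $\mathcal{C}'$: basis exchange in the graphic matroid of the safe subgraph guarantees the \emph{existence} of a max-safe forest containing $S^* \cap \overline F$, but the algorithm must commit to a specific $\overline T$ without knowing $S^*$. A careful combinatorial argument---likely invoking the same exchange-bijection machinery developed earlier in the paper for the $k=1$ case, and possibly averaging over max-safe bases---should guarantee that the chosen $\overline T$ captures enough of $S^* \cap \overline F$ to yield the required bound. Once $c$ is controlled, the final convex combination is mechanical arithmetic.
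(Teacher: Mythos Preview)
Your algorithm $\mathcal{C}'$ is exactly the paper's algorithm, and it is the \emph{only} algorithm the paper uses; your $\mathcal{A}'$ is not needed and in fact does not give a useful bound (from $(k+1)s+u\le\vartheta_{k+1}[(k+1)s^*+u^*]$ you cannot extract a good upper bound on $s+u$ in terms of $\opt=s^*+u^*$).

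More importantly, the exchange-bijection obstacle you anticipate does not arise. The point is that $\overline T$ is a \emph{maximum} safe forest, so every safe edge $e\notin\overline T$ closes a cycle with $\overline T$ and therefore has both endpoints in the same $\overline T$-component. After contracting $\overline T$, every safe edge of $G$ becomes a loop; equivalently, every cut of $G_1=G/\overline T$ consists entirely of unsafe edges. Since $S^*$ is $k$-\flex feasible, each such cut therefore contains at least $k+1$ edges of $S^*\cap F$, so the image of $S^*$ in $G_1$ is already a $(k+1)$-ECSS with at most $u^*$ non-loop edges. In your notation this gives $c=s^*$ for \emph{every} maximum safe forest $\overline T$, with no knowledge of $S^*$ and no bijection or averaging required. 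Hence $\opt_{(k+1)\text{-ECSS}}(G_1)\le u^*$ and the first bound on the algorithm's output is simply $\ell+\vartheta_{k+1}\,u^*\le\ell+\vartheta_{k+1}\opt$, where $\ell=|\overline T|$.

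The paper then obtains the claimed ratio not by a second algorithm but by a second estimate on the \emph{same} output $\overline T\cup H_{G_1}$: since $G_1$ has $n-\ell$ vertices, the size of any $(k+1)$-ECSS of $G_1$ is controlled in terms of $n-\ell$, while $\opt$ is simultaneously bounded below in terms of $n-\ell$ (because after contracting its at most $\ell$ safe edges, $S^*$ must be $(k+1)$-edge-connected on at least $n-\ell$ vertices). Balancing these two bounds over $\ell$ yields $\vartheta_{k+1}\frac{2k+1}{2k+2}+\frac{1}{k+1}$. So the correct route is: one algorithm, two inequalities---one via the $(k+1)$-ECSS approximation guarantee, one via vertex-count bounds on $G_1$---and a min over the two.
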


By $\vartheta_{k+1}$ we denote the approximation ratio of an approximation
algorithm for the minimum $(k+1)$-edge connected spanning subgraph problem. In
particular this implies a polynomial-time $23/16$-approximation algorithm for
unweighted \flex. Note that the approximation guarantee in
Theorem~\ref{thm:unweightedkflex} tends to one as $k$ tends to infinity. We
prove this result in Section~\ref{sec:unweighted}.

We contrast our algorithmic results by showing in
Section~\ref{sec:onerst:hardness} that a natural generalization of \flex to
matroid optimization is \NP-hard to approximate within any sublogarithmic
factor. 

\subsection{Additional Related Work}

Most classical network design problems are concerned with constructing a network with sufficient redundancy to deal with various kinds of reliability considerations. These include the $k$-edge connected spanning subgraph problem~\cite{gabow_et_al_kECSS_09,Gabow:12}, the survivable network design problem~\cite{Jain2001,Goemans:94}.

A field of optimization that deals directly with reliability issues is robust optimization (see e.g.~\cite{olver2010robust,adjiashvili2012structural} for thorough literature reviews on the topic). The Bulk-Robust framework was introduced by Adjiashvili, Stiller and Zenklusen~\cite{bulk} and later studied further in~\cite{adjiashvili_bindewald_michaels_icalp2016,hommelsheim2019secure,adjiashvili2015non,Bindewald:18}. Robustness with respect to the cost structure of the problem represents another important avenue of research. Many classical problem have been studied in this framework~\cite{kasperski2016robust,aissi2009min,kouvelis2013robust}.

\subsection{Notation}
\label{subsec:notation}

Unless stated otherwise graphs are loopless but may have parallel edges. Let $G
= (V, E)$ be a graph with vertex set $V$ and edge set $E$. We denote by $E(G)$
the edge set of $G$. For an edge $e$ we may write $G - e$ (resp., $G + e$) for
the graph $(V, E \setminus \{e\})$ (resp., $(V, E \cup \{e\}$). For an edge set
$E' \subseteq E$ we denote by $G / E'$ the graph obtained from $G$ by
contracting the edges in $E'$.  We denote by $\vartheta_k$ the ratio of an
approximation algorithm for the problem of finding a minimum-cardinality
$k$-edge connected spanning subgraph. Similarly, we denote by $\lambda$ (resp.,
$\tau$) the ratio of an approximation algorithm for \tecs (resp., \WTAP).
 
For the remainder of this paper we fix an instance $\I = (G, w, \overline{F})$ of
\onerst and some optimal solution $Z^* \subseteq E(G)$ of $\I$. 
To avoid technicalities, we add to each safe edge has a parallel \unsafe edge of the same cost.
It is readily seen that this modification preserves optimal solutions (a
solution that uses both edges of a parallel pair is not optimal and we may
always just pick the safe edge in an optimal solution). 
Observe that the solution $Z^*$
has the following structure. For some $r \in \nat$, the graph $(V(G), Z^*)$ consists of $r$ 2-edge-connected
components $C_1, C_2, \ldots, C_r$ that are joined together by safe edges $\sce
:= \{ \safe{f}_1, \safe{f}_2, \ldots, \safe{f}_{r-1} \} \subseteq \safe{F}$ in
a tree-like fashion. That is, if we contract each component $C_i$ to a single
vertex, the remaining graph is a tree $T^*$ with edge set $\sce$. On the other
hand, if we contract $\sce$, we obtain a 2-edge-connected spanning subgraph of
the resulting graph. We let $\delta := w(\sce) /\OPT(I)$. That is, the value
$\delta$ is the proportion of the cost of the safe cut edges $\sce$ relative to
the total cost of the optimal solution $Z^*$.


\section{Three Simple Approximation Algorithms for \onerst}
\label{sec:onerst:basicalgo}

We give three basic approximation algorithms for \flex that use in a
black-box fashion approximation algorithms for \tecs and \tap. 
With the known best algorithms for these problems, it is readily seen that
first one has an approximation guarantee of $(2+2\delta)$ and the second has a
guarantee of 3.
Using our technical tools from Section~\ref{sec:onerst:technical} one can show
that the third algorithm is a 5-approximation algorithm for \flex. Interestingly, it
performs much better precisely when the other two algorithms exhibit their
worst-case behavior. Note that none of the three algorithms attains the integrality gap of $8/3$ exhibited by the instance shown in Figure~\ref{fig:integrality-gap}.

\paragraph{Algorithm \algoA: a $(2+2\delta)$-approximation algorithm.}
Algorithm \algoA 
computes (in polynomial time) a $\lambda$-approximate 2-edge connected spanning
subgraph, e.g., by Jain's algorithm~\cite{Jain2001}. Algorithm \algoA then
removes all unsafe edges that are parallel to safe edges of spanning subgraph returns the resulting
edge set. Observe that the returned solution is feasible.
We now argue that Algorithm \algoA is a $(2+2\delta)$-approximation algorithm.
By adding a copy of each edge in $\sce$ to the optimal solution $Z^*$ we obtain a 2-edge-connected spanning subgraph $H_1^*$ of $G$.
The cost $w(H_1^*)$ of $H_1^*$ is given by
\[
  w(H_1^*) =  \sum_{e \in Z^* \setminus \sce} w(e) + 2 \cdot \sum_{e \in E(Z^*) \cap \sce} w(e)\enspace.
\]
and since $w(H_1) \leq 2 w(H_1^*)$, it follows that $w(H_1) \leq 2 w(H_1^*) = 2 (1 - \delta ) \cdot\OPT(I) + 4 \delta \OPT(\I)= (2 + 2 \delta) \cdot\OPT(\I)$
as claimed. Observe that Algorithm \algoA performs best if the weight of the safe edges $\sce$ is small.

\paragraph{Algorithm \algoB: a 3-approximation algorithm.}
Algorithm \algoB  computes a minimum spanning tree $T$ of $G$ and then
computes a $\tau$-approximate solution to the \WTAP instance $(G, T', w)$,
where the tree $T'$ is obtained by contracting each safe edge of $T$. 
The solution of the \WTAP instance together with the
tree $T$ is a feasible solution for $\I$ by
Lemma~\ref{lemma:krst:structure}. Since $w(T) \leq \OPT(I)$, the best available algorithms for \tap (see Table~\ref{tab:apx:overview}) give a 
3-approximation for \flex, and a $5/2$-approximation
for \flex on bounded-weight instances.

\paragraph{Algorithm \algoC:}
Algorithm \algoC first computes a minimum spanning tree $T$ of $G$. Let $G'$ be
the graph obtained from $G$ by contracting the safe edges of $T$, that is, $G':= G/(E(T) \cap \safe{F})$. Algorithm \algoC then computes a $\lambda$-approximate 2-edge-connected spanning subgraph 
$H' \subseteq E(G')$
and returns the edge set $(E(T)
\cap \overline{F}) \cup H'$. It is readily seen that Algorithm \algoC computes a
feasible solution and that the safe edges of $T$ have cost at most $\OPT(I)$. Using
exchange bijections from Section~\ref{sec:onerst:technical}, it can be
shown that $w(H') \leq 4\OPT(I)$, so Algorithm \algoC is a 5-approximation
algorithm for \onerst.

\section{An Improved Approximation Algorithm}
\label{sec:onerst:improved}

In this section we describe our improved approximation algorithm for \onerst,
which hybridizes the three basic algorithms for \onerst discussed in
Section~\ref{sec:onerst:basicalgo}.
We first illustrate why Algorithm \algoB, which has the best approximation guarantee of the three simple algorithms, may perform poorly.
Recall that Algorithm \algoB first computes a minimum spanning tree of $G$ and then
approximates the \WTAP instance $(G, T', w)$, where the tree $T'$
is obtained from the minimum spanning tree by contracting its safe edges.
Consider a cut edge $e \in \safe{F}$ of an optimal solution to $\I$. 
If Algorithm \algoB chooses instead of $e$ a slightly cheaper \unsafe edge $f$ across
the same cut in the MST computation, then making this cut ``safe'' in the second step by
buying either the edge $e$ or another edge of similar cost across the
cut. However, we can only $\tau$-approximate this step. Hence, if
Algorithm \algoB chooses $f$ instead of $e$, it may result in cost of up to
$3w(e)$.
We try to avoid such a situation by scaling the weight of all safe edges by a
suitable factor $\alpha \in [0, 1]$, hence making safe edges more attractive.  

Algorithm~\ref{alg:onerst:approx},
our improved approximation algorithm for \flex, proceeds as follows. It first computes suitable
scaling factors $\Alphas \subseteq [0, 1]$ (called ``threshold values'') for the costs of the safe edges.  
Then,
Algorithm~\ref{alg:onerst:approx} runs Algorithm \algoA using the original weights $w$ 
to obtain
solution $Z^A$. 
We say that we run Algorithm \algoB (resp., \algoC) \emph{with
scaling factor $\alpha$} if the minimum spanning tree in Algorithm \algoB
(resp., \algoC) is computed with respect to weights obtained from $w$ by
scaling the costs of the safe edges by $\alpha$.
Algorithm~\ref{alg:onerst:approx} runs Algorithms \algoB and \algoC with each
scaling factor $\alpha \in \Alphas \cup \{0, 1\}$ and returns a solution of
minimal weight among all the different solutions computed by Algorithms \algoA,
\algoB, and~\algoC. 

\begin{algorithm}[t]
    \caption{Improved Approximation Algorithm for \onerst}
  \label{alg:onerst:approx}
  \SetStartEndCondition{ }{}{}%
  \SetKwFunction{Range}{range}
  \SetKw{KwTo}{to}%
  \SetKw{KwAnd}{and}%
  \SetKwFor{For}{for}{ do}{}%
  \SetKwIF{If}{ElseIf}{Else}{if}{ then}{else if}{else}{}%
  \SetKwFor{While}{while}{ do}{fintq}%
  \AlgoDontDisplayBlockMarkers\SetAlgoNoLine\SetAlgoNoEnd
  \DontPrintSemicolon
  \SetKwInOut{Input}{input}
  \SetKwInOut{InOut}{in/out}
  \SetKwInOut{Output}{output}
  \Input{Instance $I = (G, w, \safe{F})$ of \onerst}

    compute threshold values $\Alphas := \{\alpha_e \mid e \in E(G)\} \cup \{0, 1\}$\;
    run Algorithm \algoA on $\I$ to obtain solution $Z^A$\;
    \For{threshold value $\alpha \in \Alphas$}
    {
        run algorithms \algoB and \algoC with scaling factor $\alpha$ to obtain solutions $Z_\alpha^B$ and~$Z_{\alpha}^C$, respectively\;
    }
    \Return{solution with lowest cost among $Z^A$ and $\{ Z_\alpha^B, Z_\alpha^C \mid \alpha \in \Alphas \}$ }\;
\end{algorithm} 

From the discussion in Section~\ref{sec:onerst:basicalgo} it follows that
Algorithm~\ref{alg:onerst:approx} returns a feasible solution.  It runs in
polynomial time if there are polynomially many threshold values and if they can be
computed efficiently. 
We defer the proofs of the running-time to Section~\ref{sec:onerst:technical}.

Let us denote by $\algo(I)$ the weight of the solution returned by
Algorithm~\ref{alg:onerst:approx}. 
Using properties of the threshold values we show that the selection of the scaling factors in
Algorithm~\ref{alg:onerst:approx} is best possible.

\begin{lemma}
    Let $\Alphas' \subseteq [0, 1]$ and let $\algo'(I)$ be the weight of the solution
    returned by Algorithm~\ref{alg:onerst:approx} run with $\Alphas$ set to $\Alphas'$.
    Then $\algo(I) \leq \algo'(I)$.
    \label{lemma:onerst:bestalpha}
\end{lemma}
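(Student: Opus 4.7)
The plan is to show that the threshold set $\Alphas$ already captures every possible behaviour of the minimum-spanning-tree subroutine that sits inside Algorithms \algoB and \algoC as the scaling factor $\alpha$ varies over $[0,1]$. Once this is established, for every $\alpha'\in[0,1]$ there is some $\alpha\in\Alphas$ with $Z_{\alpha'}^B=Z_{\alpha}^B$ and $Z_{\alpha'}^C=Z_{\alpha}^C$, so the set of candidate solutions considered by Algorithm~\ref{alg:onerst:approx} run with $\Alphas$ contains the set considered when run with any $\Alphas'\subseteq[0,1]$. Since both runs additionally consider the same $Z^A$, taking minima gives $\algo(I)\le\algo'(I)$.

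The key structural fact is that the sorted order of the edges by scaled weight changes only at finitely many values of $\alpha$. For a safe edge $e$ and any other edge $e'$, the scaled weights $\alpha w(e)$ and $w(e')$ (resp.\ $\alpha w(e')$ if $e'$ is also safe, in which case no crossing occurs as $\alpha$ varies) coincide at $\alpha = w(e')/w(e)$; intersecting the collection of all such ratios with $[0,1]$ gives exactly the threshold set from Section~\ref{sec:onerst:technical}. On any open interval between two consecutive values of $\Alphas$ the ordering of the edges is constant, and hence, assuming a fixed deterministic tie-breaking rule, Kruskal's algorithm produces the same spanning tree $T$ throughout the interval. Since Algorithms \algoB and \algoC are deterministic functions of $T$ (they then call the fixed subroutines for \tap and \tecs on contracted instances independent of $\alpha$), we get $Z_{\alpha_1}^B = Z_{\alpha_2}^B$ and $Z_{\alpha_1}^C=Z_{\alpha_2}^C$ for any $\alpha_1,\alpha_2$ in the same interval.

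Thus the finite collection $\{Z_\alpha^B,Z_\alpha^C\mid \alpha\in\Alphas\}$ equals the full collection $\{Z_\alpha^B,Z_\alpha^C\mid \alpha\in[0,1]\}$: the interior of each interval is represented by either of its endpoints (both of which lie in $\Alphas$ because $0,1\in\Alphas$ and all internal breakpoints are in $\Alphas$ by construction), and every $\alpha'\in[0,1]$ falls either inside such an interval or at one of its endpoints. Therefore, for any $\Alphas'\subseteq[0,1]$,
\[
    \algo'(I) \;=\; \min\!\bigl(w(Z^A),\; \min_{\alpha'\in\Alphas'} w(Z_{\alpha'}^B),\; \min_{\alpha'\in\Alphas'} w(Z_{\alpha'}^C)\bigr) \;\ge\; \min\!\bigl(w(Z^A),\; \min_{\alpha\in\Alphas} w(Z_{\alpha}^B),\; \min_{\alpha\in\Alphas} w(Z_{\alpha}^C)\bigr) \;=\; \algo(I),
\]
which is the claim.

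The delicate point will be the behaviour of the MST exactly at a threshold $\alpha_e$, where two edges receive identical scaled weight and the tree produced by Kruskal's algorithm depends on tie-breaking. I would handle this by fixing a deterministic tie-breaking rule (e.g.\ a lexicographic order on edges) once and for all in the algorithm, so that the MST---and therefore $Z_\alpha^B,Z_\alpha^C$---is a genuine function of $\alpha$; the piecewise-constant argument above then applies without ambiguity and the lemma follows.
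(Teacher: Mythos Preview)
Your overall strategy---showing that the $\alpha$-MST is piecewise constant as a function of $\alpha$, with all breakpoints lying in $\Alphas$---is the same as the paper's. However, your argument contains a genuine gap in how you identify the breakpoints.

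You assert that ``intersecting the collection of all such ratios with $[0,1]$ gives exactly the threshold set from Section~\ref{sec:onerst:technical}.'' This is false. The paper's threshold set $\Alphas=\{\alpha_e\mid e\in E(G)\}\cup\{0,1\}$ is defined via Definition~\ref{def:alphaexchange} (the value at which an edge enters or leaves \emph{some} $\alpha$-MST), and by Proposition~\ref{prop:onerst:threshold:compute} it has at most $|V(G)|+1$ elements. The set of all crossing ratios $w(f)/w(e)$ for a safe $e$ and unsafe $f$ can have $\Theta(|E|^2)$ elements and is in general a strict superset. Consequently, your claim that ``on any open interval between two consecutive values of $\Alphas$ the ordering of the edges is constant'' does not follow: the sorted order can change at a ratio $w(f)/w(e)$ that is \emph{not} a threshold value (e.g., when $e$ and $f$ are both in every $\alpha$-MST, or both in none, or lie in unrelated parts of the graph). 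Your Kruskal-based reasoning, which relies on a fixed edge ordering, therefore breaks down as written.

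The conclusion you want---that the \emph{MST itself} is constant on each such interval---is nevertheless true, and the paper proves it directly from the threshold property rather than via edge orderings: for $\alpha$ strictly between consecutive threshold values $\alpha_L<\alpha_R$, no edge's membership status changes on $(\alpha_L,\alpha_R)$, hence any $\alpha$-MST is simultaneously an $\alpha_L$-MST and an $\alpha_R$-MST. This is exactly what your argument needs, and once you replace the ordering claim with this observation (or simply cite Definition~\ref{def:alphaexchange}), the remainder of your proof goes through.
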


A detailed analysis of the approximation ratio of
Algorithm~\ref{alg:onerst:approx} is deferred to 
Section~\ref{sec:onerst:5/2}. Here, we give a
high-level overview. Our starting point is 
Lemma~\ref{lemma:onerst:bestalpha}, which allows us to assume that Algorithm~\ref{alg:onerst:approx} tries \emph{all} scaling factors in $[0, 1]$.
We show that the approximation ratio of
Algorithm~\ref{alg:onerst:approx} is bounded from above by the optimal value of a min-max-min optimization problem. 
For an instance $I$ of \onerst and some $N \in \nat$, the optimization problem has the following data.
\begin{itemize}
\item Scaling factors $\alpha_1, \alpha_2, \ldots, \alpha_N \in [0, 1]$.  Due
  to the discussion above, in our analysis of Algorithm~\ref{alg:onerst:approx}
  we are free to choose these values.
\item Parameters $\beta_1, \beta_2, \ldots, \beta_N, \gamma_1, \gamma_2,
  \ldots, \gamma_N, \delta \in [0, 1]$, which depend on the structure of an
  optimal solution. These parameters additionally satisfy $\sum_{j=1}^{N}
  \beta_j + \sum_{j=1}^{N} \gamma_j = 1$.
\item Functions $f^A$, as well as $f_1^B, f_2^B, \ldots, f_N^B$ and $f_1^C,
  f_2^C, \ldots, f_N^C$ that  bound from above in terms of
  $\alpha_i$, $\beta_i, \gamma_i$, $1 \leq i \leq N$, $\lambda$ and $\tau$, the cost of the solutions.
\end{itemize} 
Precise definitions of the parameters and the functions will be given in Section~\ref{sec:onerst:5/2}. 
We note that Algorithm~\algoB is only used in the proof of Theorem~\ref{thm:flexbounded}.
Technically, we show that for a proper choice of functions $f^\algoA$, $f_i^\algoB$ and $f_i^\algoC$, the ratio of Algorithm~\ref{alg:onerst:approx} is bounded by the optimal value of the following optimization problem.
\begin{equation}
    \begin{aligned}
          \min_{\alpha_i \in [0, 1] \,:\, 1 \leq i \leq N} \;\max_{\substack{\beta_i \in [0, 1] \,:\, 1 \leq i \leq N\\ \gamma_i \in [0, 1] \,:\, 1 \leq i \leq N}} \;\min_{1 \leq i \leq N} \qquad&  \{f^\algoA(\cdot),\, f_i^\algoB(\cdot),\, f_i^\algoC(\cdot) \}\\
          \text{subject to}\qquad& \sum_{j=1}^{N} \beta_j + \sum_{j=1}^{N} \gamma_j = 1
    \end{aligned}
    \label{eq:minmaxmin}
\end{equation}

Our goal in Section~\ref{sec:onerst:5/2}
is to provide suitable functions $f^\algoA$, 
and $f_i^\algoC$ for our analysis. In Section~\ref{sec:onerst:5/2} we will give an analytic upper
bound of \boundtwo on the optimal value of~\eqref{eq:minmaxmin}.

\section{$\alpha$-MSTs, Thresholds, and Exchange Bijections}
\label{sec:onerst:technical}

In this section we present our main technical tools that are needed for the analysis of
Algorithm~\ref{alg:onerst:approx}.

\subsection{$\alpha$-MSTs and Thresholds}
\label{sec:onerst:threshold}

We first show that safe edges and  unsafe edges exhibit a ``threshold''
behavior with respect to MSTs if the costs are scaled by some $\alpha \in [0, 1]$. Furthermore, we show that i) the corresponding
threshold values can be computed in polynomial time, which is essential to
ensure that Algorithm~\ref{alg:onerst:approx} runs in polynomial time and ii)
they are the best
choice of scaling factors for Algorithm~\ref{alg:onerst:approx}, which allows us to assume in our analysis that we
execute Algorithm~\ref{alg:onerst:approx} for \emph{all} scaling factors $\alpha \in [0,
1]$. 
For $\alpha \in [0, 1]$, we denote by 
\[
    w_\alpha(e) = 
    \begin{cases}
        \alpha \cdot w(e) & \text{if $e \in \overline{F}$, and}\\
        w(e)              & \text{otherwise}
    \end{cases}
\]
the weight function obtained from $w$ by scaling the costs of the safe edges by
$\alpha$.
A spanning tree $T$ is called \emph{$\alpha$-minimum spanning tree}
($\alpha$-MST) if $E(T)$ has minimal weight with respect to $w_\alpha$.  

Consider changing the
scaling factor $\alpha$ smoothly from $0$ to $1$. We observe that for any safe
edge $e$, if there is an $\alpha$-MST containing $e$, then there is also an
$\alpha'$-MST containing $e$ for any $\alpha' \leq \alpha$.  On the other hand,
if there is an $\alpha$-MST containing an \unsafe edge $f$ then there is also
an $\alpha'$-MST containing $f$ for any $\alpha \leq \alpha' \leq 1$.  We
formally capture this notion in the following definition.

\begin{definition}
    \label{def:alphaexchange}
    Let $e \in E$ and $\alpha_e \in [0, 1]$. We say that
    $\alpha_e$ is a \emph{lower threshold for $e$} if for any $\alpha \in [0,
    1]$ there is an $\alpha$-MST containing $e$ if and only if $\alpha \geq
    \alpha_e$. If $e$ is in no $\alpha$-MST for $0 \leq \alpha \leq 1$, we define the lower threshold value of $e$ to be $\infty$.
    Similarly, $\alpha_e$ is an \emph{upper threshold for $e$} if
    for $\alpha \in [0, 1]$ there is an $\alpha$-MST containing $e$ if and only
    if $\alpha \leq \alpha_e$. The threshold values of an instance $\I = (G, w, \overline{F})$ is defined as $\{ \alpha_e \mid e \in E(G) \}$.
\end{definition}

The following technical lemma ensures the existence of threshold values for safe and unsafe edges.

\begin{lemma}
    \label{lemma:onerst:alphae:exists}
    For each \unsafe edge $f \in F$ there is a lower threshold $\alpha_f \in [0, 1]
    \cup \{\infty\}$. 
    For each safe edge $e \in \safe{F}$ there is an upper threshold $\alpha_e \in [0, 1]$.
\end{lemma}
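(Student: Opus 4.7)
The plan is to use the standard cycle property for minimum spanning trees: for any non-negative weight function $u$ on $E(G)$, an edge $e$ lies in some $u$-minimum spanning tree if and only if there is no cycle $C \ni e$ satisfying $u(e) > u(e')$ for every $e' \in C \setminus \{e\}$. Specializing to $u = w_\alpha$ translates the condition ``$e$ lies in some $\alpha$-MST'' into a statement about the weights of the edges of cycles through $e$, parameterized by $\alpha$. Write $A_e := \{\alpha \in [0,1] : e \text{ lies in some } \alpha\text{-MST}\}$; the task is to show that $A_e$ is an interval of the claimed shape.

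First I would establish monotonicity of $A_e$ by transporting witnesses across values of $\alpha$. Suppose $\alpha \in A_e$, so for every cycle $C \ni e$ there exists a witness $e' \in C \setminus \{e\}$ with $w_\alpha(e') \geq w_\alpha(e)$. For a safe edge $e \in \safe{F}$ I would show $A_e$ is downward-closed: given $\alpha' \in [0, \alpha]$, if $e'$ is unsafe then $w_{\alpha'}(e') = w(e') = w_\alpha(e') \geq \alpha w(e) \geq \alpha' w(e) = w_{\alpha'}(e)$, and if $e'$ is safe then $\alpha w(e') \geq \alpha w(e)$ forces $w(e') \geq w(e)$ (the $\alpha=0$ case being trivial), so $w_{\alpha'}(e') = \alpha' w(e') \geq w_{\alpha'}(e)$. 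A symmetric case analysis shows $A_f$ is upward-closed for unsafe $f \in F$: as $\alpha$ grows, safe-edge weights grow and unsafe-edge weights stay constant, so any witness at $\alpha$ remains a witness at any $\alpha' \geq \alpha$, while $w_{\alpha'}(f) = w(f)$ does not change.

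Next I would argue that $A_e$ is closed in $[0,1]$. Its complement is exactly the set of $\alpha$ for which some cycle $C \ni e$ satisfies the strict inequalities $w_\alpha(e) > w_\alpha(e')$ for every $e' \in C \setminus \{e\}$. For each individual cycle this defines an open subset of $[0,1]$, since all $w_\alpha(\cdot)$ are linear in $\alpha$, and since $G$ has only finitely many cycles the union of these open subsets is open. Hence $A_e$ is closed. Combining closedness with the monotonicity just established, $A_f$ is either empty (in which case I set $\alpha_f := \infty$) or equal to $[\alpha_f, 1]$ with $\alpha_f := \min A_f \in [0,1]$, and $A_e$ for safe $e$ equals $[0, \alpha_e]$ with $\alpha_e := \max A_e \in [0,1]$. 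Non-emptiness of $A_e$ for safe edges is free: at $\alpha = 0$ every safe edge has weight zero, and since weights are non-negative no safe edge can be the strict maximum on any cycle, so $0 \in A_e$.

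The main conceptual content lies in the witness-transport case analysis in the monotonicity step, where the key observation is that scaling only safe-edge weights by a monotone factor preserves the relevant inequalities in one direction for safe edges and the opposite direction for unsafe ones. I do not foresee a serious obstacle; the mildest subtlety is the $\alpha=0$ boundary when one divides through by $\alpha$ in the safe-safe subcase, which is handled directly as above.
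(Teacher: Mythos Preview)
Your argument is correct and takes a genuinely different route from the paper's proof. The paper establishes the monotonicity of the set $A_e$ by a Kruskal-style contradiction: it fixes an $\alpha'$-MST not containing the edge, tracks the position of the edge in the two sorted edge orderings under $w_\alpha$ and $w_{\alpha'}$, and then exhibits a cycle inside a forest. It then simply \emph{declares} the threshold to be the smallest (respectively largest) $\alpha$ at which the edge appears in an $\alpha$-MST, without addressing why that extremum is attained. Your approach instead invokes the cycle property once, reducing membership in $A_e$ to a family of linear inequalities in $\alpha$, and then handles monotonicity by a short safe/unsafe case analysis on the witness edge and closedness by the openness of strict linear inequalities over finitely many cycles. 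What you gain is a cleaner and more self-contained argument that, in particular, makes the existence of the minimum/maximum explicit; what the paper's argument offers instead is a proof that stays closer to the algorithmic picture of MSTs and does not need the cycle-property characterization as a black box.
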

\begin{proof} 
    We first prove the following claim.
    \setcounter{myclaim}{0}
    \begin{myclaim}
        \label{claim:alphae:unsafe}
        Let $f \in F$ be an \unsafe edge. If $f$ is in some $\alpha$-MST then for any $\alpha'
        \geq \alpha$, there is some $\alpha'$-MST containing $f$.
    \end{myclaim}
    \begin{proof}
        Let $\alpha' \geq \alpha$ and let us fix some $\alpha'$-MST $T_{\alpha'}$. Suppose for a
        contradiction that $f$ is in $T_\alpha$ but not in $T_{\alpha'}$ and assume
        that $f$ has smallest weight among all such edges. Consider the edges $e_1,
        e_2, \ldots, e_m$ of $G$ ordered non-decreasingly by their weight
        $w_\alpha$ and suppose $f = e_i$.  Similarly, let $e'_1, e'_2, \ldots,
        e'_m$ be the edges of $G$ ordered non-decreasingly by $w_{\alpha'}$ and
        suppose that $f = e'_{i'}$.  Note that due to the construction of the
        weight function, the weights of all edges in $\safe{F}$ are scaled by the
        same factor $\alpha$ and the weights of the edges $F$ are the same for
        $w_\alpha$ and $w_{\alpha'}$. 
        Therefore, we have that $i' \leq i$ and that $\{e'_1, \ldots, e'_{i'} \} \subseteq \{e_1, \ldots, e_{i} \}$.

        For $1 \leq i \leq m$, let $T^i_\alpha$ (resp., $T^{i}_{\alpha'}$) be the
        restriction of $T_\alpha$ (resp., $T_{\alpha'}$) to $e_1, e_2, \ldots, e_i$
        (resp., $e'_1, e'_2, \ldots, e'_{i}$).
        Since $f = e_{i'}$ is not in $T_{\alpha'}^{i'}$, the graph
        $T_{\alpha'}^{i'} + f$ contains a unique cycle $C_f$. Let $S \coloneqq E(C_f)
        \setminus E(T_\alpha^i)$.  Since $f = e_i$ is in $T_\alpha^i$, the set $S$
        is non-empty. For each $e \in S$, the graph $T_\alpha^i + e$ contains a
        unique cycle $C_e$. Hence, the edge set
        \[
            C' \coloneqq (E(C_f) \setminus S) \cup \bigcup_{e \in S} E(C_e) - e
        \]
        contains a cycle, but $C' \subseteq E(T_\alpha^i)$, which contradicts our
        assumption that $T_\alpha$ is a tree.
    \end{proof}

    Let $f \in F$ be an unsafe edge. If $f$ is not contained in some $1$-MST,
    then it is not contained in any $\alpha$-MST for $0 \leq \alpha \leq 1$.
    Therefore, the edge $f$ has a lower threshold value $\alpha_f = \infty$.
    Consider the case that $f$ is contained in some $\alpha$-MST, where $0 \leq
    \alpha \leq 1$. We choose $\alpha_f$ to be the smallest value $\alpha \in
    [0, 1]$, such that there is an $\alpha$-MST containing $f$. By
    Claim~\ref{claim:alphae:unsafe}, we have that $\alpha_f$ is a lower
    threshold value for $f$.

    We now prove the existence of an upper threshold value for safe edges.
    The proof of the following claim is analogous Claim~\ref{claim:alphae:unsafe}.
    \begin{myclaim}
        \label{claim:alphae:safe}
	Let $e \in \safe{F}$ be a safe edge. If $e$ is an edge of some
	$\alpha$-MST then for any $\alpha' \leq \alpha$, there is some
	$\alpha'$-MST containing $e$.
    \end{myclaim}

    Let $e \in \safe{F}$ be a safe edge. Oberserve that for $\alpha = 0$, there
    is an $\alpha$-MST containing $e$.  We let $\alpha_e$ be the largest value
    of $\alpha \in [0, 1]$, such that there is an $\alpha$-MST containing $f$.
    By Claim~\ref{claim:alphae:safe}, we have that $\alpha_e$ is an upper
    threshold value for $e$.
\end{proof}

It is easily seen that there are $O(|V(G)|^2)$ threshold values.  This implies
in particular that Algorithm~\ref{alg:onerst:approx} runs in polynomial time.
In fact, according to the next proposition there are at most $|V(G)|-1$
different threshold values.

\begin{proposition}
    \label{prop:onerst:threshold:compute}
    For each safe edge $e \in \safe{F}$ (resp., \unsafe edge $f \in F$), the
    upper threshold $\alpha_e$ (resp., lower threshold $\alpha_f$) can be
    computed in polynomial time. Furthermore, there are at most $|V(G)|-1$
    threshold values.
\end{proposition}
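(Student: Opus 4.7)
My plan is to address computability and the distinct-threshold bound separately, leveraging the monotonicity already established in Lemma~\ref{lemma:onerst:alphae:exists}. For computation, I would observe that the relative $w_\alpha$-ordering of two edges can change only at $\alpha = w(f)/w(e')$ with $e' \in \safe{F}$ and $f \in F$, since within $\safe{F}$ or within $F$ the order is independent of $\alpha$. This yields $O(|E|^2)$ candidate breakpoints, and on any open interval between two consecutive candidates the family of $\alpha$-MSTs is constant. For each edge $e$ and each candidate $\alpha$, the predicate ``$e$ lies in some $\alpha$-MST'' can be checked in polynomial time by a single MST computation with a tie-breaking rule favoring $e$, or equivalently by comparing $w_\alpha(e)$ with the minimum-bottleneck-path weight between the endpoints of $e$ in $G - e$ under $w_\alpha$. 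Combined with the monotonicity from Claims~\ref{claim:alphae:unsafe} and~\ref{claim:alphae:safe}, a binary search over the sorted candidates then pins down $\alpha_e$ in polynomial time.

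For the cardinality bound, I would fix the tie-breaking rule that strictly prefers safe edges over unsafe ones of equal $w_\alpha$-weight, with further ties broken by a fixed lexicographic order, and let $T(\alpha)$ denote the unique MST produced by Kruskal's algorithm under this rule. A short matroid-exchange argument shows that every safe edge lying in some $\alpha$-MST actually lies in $T(\alpha)$: exchanging it into $T(\alpha)$ against a safe edge contradicts the lexicographic Kruskal choice, while exchanging it against an unsafe edge yields a tree with strictly more safe edges, contradicting the tie-breaking rule. Together with Claim~\ref{claim:alphae:safe}, this proves that $|T(\alpha) \cap \safe{F}|$ is non-increasing on $[0,1]$. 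The core step is then to verify that every distinct threshold $\alpha^* \in [0,1]$ forces this quantity to drop strictly: a safe threshold $\alpha_e = \alpha^*$ does so immediately since $e \in T(\alpha)$ for $\alpha \leq \alpha^*$ but $e$ is in no $\alpha$-MST for $\alpha > \alpha^*$; an unsafe threshold $\alpha_f = \alpha^*$ is witnessed by $f$ overtaking some safe edge $e'$ in the $w_\alpha$-ordering at exactly $\alpha^*$, and a fundamental-cycle argument through $T(\alpha^* - \epsilon) + f$ then forces $e'$ to drop out of $T(\alpha)$ at $\alpha^*$. Since $|T(0) \cap \safe{F}| \leq |V(G)|-1$, this yields the claimed bound.

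The main obstacle will be the unsafe-threshold case: one needs to exclude the possibility that $f$ becomes MST-eligible by overtaking another unsafe edge rather than a safe edge, which would not force any safe edge to leave. Here the preprocessing at the start of Section~\ref{subsec:notation}---adding to every safe edge a parallel unsafe copy of equal cost---is helpful, because the $w_\alpha$-ordering of any two unsafe edges is constant in $\alpha$, so any newly MST-eligible unsafe edge must have crossed the scaled weight of some safe edge. Handling thresholds $\alpha^*$ at which several edges share their threshold value simultaneously is then routine bookkeeping rather than a genuine difficulty.
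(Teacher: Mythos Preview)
Your computation argument is sound and is a clean alternative to the paper's approach: the paper builds an explicit iterative pairing between the unsafe edges of a fixed $1$-MST and safe edges outside it, while you use the standard parametric-MST observation that the family of $\alpha$-MSTs can change only at the $O(|E|^2)$ ratios $w(f)/w(e')$ together with a binary search. Both work; yours is arguably more direct.

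The cardinality argument, however, has a real gap. The assertion ``every safe edge lying in some $\alpha$-MST actually lies in $T(\alpha)$'' is false. Take a triangle on three safe edges of equal weight: the lexicographically last edge $e_3$ is in some $\alpha$-MST (e.g.\ $\{e_1,e_3\}$) but not in $T(\alpha)=\{e_1,e_2\}$. Your exchange justification breaks exactly here: when the fundamental cycle of $T(\alpha)+e$ yields a safe edge $e''$ of the same $w_\alpha$-weight, swapping $e$ for $e''$ produces another MST with the same number of safe edges, and this new tree is lexicographically \emph{larger} than $T(\alpha)$ (since $T(\alpha)$ is the lex-minimum MST), so there is no contradiction. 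You invoke this false claim twice: once to derive that $|T(\alpha)\cap\safe{F}|$ is non-increasing, and again to argue that a safe threshold $\alpha_e=\alpha^*$ forces $e\in T(\alpha)$ for $\alpha\le\alpha^*$ and hence a drop.

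Both uses are repairable, but not by your exchange argument. The monotonicity of $T(\alpha)\cap\safe{F}$ (even set-wise) follows from a Kruskal argument: for a safe edge $e$ and $\alpha'<\alpha$, the set of edges processed strictly before $e$ in the Kruskal order under $w_{\alpha'}$ is a subset of the corresponding set under $w_\alpha$ (the safe part is identical, the unsafe part shrinks), so if $e$ does not close a cycle at $\alpha$ it does not at $\alpha'$ either. For the drop at a threshold $\alpha^*\in(0,1)$ you then need to show that $T(\alpha)$ itself changes at $\alpha^*$; this is true but requires a short extra argument (if $T(\alpha)$ were constant across $\alpha^*$, the fundamental cycle of the threshold edge in $T(\alpha)$ would be constant too, contradicting the change in MST-membership). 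The paper sidesteps all of this by directly exhibiting, for each nontrivial threshold, a distinct unsafe edge of a fixed $1$-MST that gets displaced by a safe edge, which immediately bounds the count by the number of unsafe tree edges.
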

\begin{proof}
    Let $T_1$ be a 1-MST of $G$ and let $F_1 \coloneqq \{ f_1, f_2, \ldots, f_\ell \} =
    F \cap E(T_1)$ be the \unsafe edges of $T_1$.
    For each $e \in \safe{F}$ and $f_i \in F_1$, we initialize $\alpha^e_i \coloneqq w(f_i) / w(e)$.
    Thus, for $\alpha = \alpha_i^e$ we have $w_\alpha(e) = w_\alpha(f_i)$.
    We keep a set $\Alphas$ of all such $\alpha_i^e$ (actually triples $(e, i, \alpha_i^e)$).
    Since $|\safe{F}| \leq |E|$ and $|E(T_1)| < |V|$, we have that $\Alphas$
    has cardinality at most $|E||V|$.
    Additionally, we define a mapping 
    $\varrho : F_1 \rightarrow \safe{F} \setminus E(T_1)$. Whenever for some
    $\alpha \in [0, 1]$, an edge $f \in F_1$ is replaced in some $\alpha$-MST
    by a safe edge $\safe{f} \in \safe{F} \setminus E(T_1)$, we let $\varrho(f)
    \coloneqq \safe{f}$. 
    
    We perform at most $|V||E|$ iterations and in each iteration $j$, we do the
    following. We pick some $\alpha_i^e \in \Alphas$ of largest value and check
    whether the safe edge $e$ is in some $\alpha_i^e$-MST $T_{\alpha_i^e}$. If
    this is not the case, we remove $\alpha_i^e$ from $\Alphas$ and continue
    with the next iteration. Otherwise, we let $\varrho(f_i) \coloneqq e$ and
    $\alpha_{f_i} \coloneqq \alpha_i^e$ and for each $1 \leq i \leq \ell$, we delete
    $\alpha_i^e$ from $\Alphas$ and for each $e' \in \safe{F} \setminus E(T_1)$ we
    delete $\alpha_i^{e'}$ from $\Alphas$. Note that we can distinguish between the
    two cases in polynomial time by computing minimum-weight spanning trees.

    Observe that after the above algorithm terminates, the mapping $\varrho$
    assigns to each safe edge $e \in \safe{F} \setminus E(T_1)$ at most one
    partner $f \in F_1$, and to each \unsafe edge $f_i \in F_1$ at most one
    partner $e' \in \safe{F} \setminus E(T_1)$. 
    For each $e \in \safe{F} \cap E(T_1)$, we let $\alpha_e \coloneqq 1$ and for each
    $f \in F \setminus E(T_1)$, we let $\alpha_f \coloneqq \infty$. 
    Now, for each $f_i \in F_1 = F \cap E(T_1)$ such that $\varrho(f_i) = e$, we let $\alpha_e = \alpha_{f_i} \coloneqq \alpha_i^e$.
    Finally, for each $e \in \safe{F} \setminus T_1$ that is not in the image of
    $\varrho$ (resp., each $f \in F \cap E(T_1)$ that is not in the domain of
    $\varrho$), we let $\alpha_e \coloneqq 0$ (resp., $\alpha_f \coloneqq 0$).
    It is readlily verified
    that these choices are in accordance with the definition of lower and upper threshold values
    (Definition \ref{def:alphaexchange}). Finally, since there are at most $|V(G)|-1$ many
    \unsafe edges in $T_1$, there are at most $|V(G)|-1$  many threshold values.
    This concludes the proof.
\end{proof}

We now prove that threshold values are optimal scaling factors for
Algorithm~\ref{alg:onerst:approx}, as claimed in
Lemma~\ref{lemma:onerst:bestalpha}. 
\begin{proof}[Proof of Lemma~\ref{lemma:onerst:bestalpha}]
  Let $\alpha \in \Alphas'$ and let $\alpha_L$ (resp. $\alpha_R$) be the largest
  (resp., smallest) item in $\Alphas$, such that $\alpha_L \leq \alpha$ (resp.,
  $\alpha_R \geq \alpha$). Then, since $\Alphas$ contains a threshold value for each
  edge of $G$ and by the properties of the threshold values given in
  Definition~\ref{def:alphaexchange}, the tree $T_\alpha$ is either an
  $\alpha_L$-MST or an $\alpha_R$-MST of $G$. Therefore,
  Algorithm~\ref{alg:onerst:approx} has computed an $\alpha$-MST and a
  corresponding augmentation for each $\alpha \in \Alphas'$. We conclude that
  $\algo(\I) \leq \algo'(\I)$.
\end{proof}

\subsection{Exchange Bijections}
\label{sec:onerst:exchange}

In our analysis of Algorithm~\ref{alg:onerst:approx}, we will
use a charging argument based on the notion of monotone exchange bijections,
which we now introduce.
Let $G$ be a connected graph and let $T$ and $T'$ be spanning trees of $G$. A bijection $\varphi : E(T')
\to E(T)$ is called \emph{exchange bijection}, if for each $e \in
E(T')$, the graph $T' - e + \varphi(e)$ is a spanning tree of $G$.  
An exchange bijection $\varphi$ is \emph{monotone}, if for each edge $e \in
E(T')$ we have $w(e) \leq w(\varphi(e))$.
For any two spanning trees $T$ and $T'$ a canonical exchange bijection exists:
Note that the edge sets of spanning trees of $G$ are the bases of the graphic
matroid $M(G)$ of $G$. By the strong basis exchange property of matroids there
is a bijection between $E(T) \setminus E(T')$ and $E(T') \setminus
E(T)$ with the required properties, which can be extended to an exchange
bijection by mapping each item in $E(T) \cap E(T')$ to itself.
Furthermore, if $T'$ is an MST then for any spanning tree $T'$, a canonical
exchange bijection is monotone.

We generalize monotone exchange bijections as follows. 
\begin{definition}
    \label{def:alphamonotone}
    Let $\alpha \in [0, 1]$ and let $T$, $T'$ be spanning trees of $G$. An
    exchange bijection $\varphi: E(T') \to E(T)$ is 
    \emph{$\alpha$-monotone} if for each edge $e \in E(T')$ we have
    \begin{enumerate}
        \item $w(e) \leq \frac{1}{\alpha} w(\varphi(e))$, if $e \in \safe{F}$ and $\varphi(e) \in F$, and
        \item $w(e) \leq w(\varphi(e))$, if either $e, \varphi(e) \in \safe{F}$ or $e, \varphi(e) \in F$, and
        \item $w(e) \leq \alpha w(\varphi(e))$, if $e \in F$ and $\varphi(e) \in \safe{F}$.
    \end{enumerate}
\end{definition}

For any spanning tree $T$ of $G$, there is an
$\alpha$-monotone exchange bijection from an $\alpha$-MST to $T$.

\begin{lemma}
  \label{lem:onerst:existence:alpha:mon:exchange:bijection}
  Let $\alpha \in [0, 1]$, let $T_\alpha$ be an $\alpha$-MST of $G$ and let $T$ be any spanning tree of $G$.
  Then there is an $\alpha$-monotone exchange bijection $\varphi: E(T_\alpha) \to E(T)$.
\end{lemma}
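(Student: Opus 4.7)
The plan is to build $\varphi$ as a canonical exchange bijection for the graphic matroid $M(G)$ and then translate the $w_\alpha$-monotonicity forced by the optimality of $T_\alpha$ into the three conditions of Definition~\ref{def:alphamonotone} via a four-way case split on the safe/unsafe status of $e$ and $\varphi(e)$.

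First I would invoke the strong basis exchange property for matroids applied to the bases $E(T_\alpha)$ and $E(T)$ of $M(G)$: there exists a bijection $\pi: E(T_\alpha)\setminus E(T) \to E(T)\setminus E(T_\alpha)$ such that $E(T_\alpha) - e + \pi(e)$ is a spanning tree of $G$ for every $e \in E(T_\alpha)\setminus E(T)$. Extending $\pi$ by the identity on $E(T_\alpha)\cap E(T)$ produces the desired exchange bijection $\varphi: E(T_\alpha) \to E(T)$. For edges fixed by $\varphi$ all three $\alpha$-monotonicity inequalities hold trivially, so it suffices to check the nontrivial pairs.

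Next I would use the $w_\alpha$-minimality of $T_\alpha$ to show that $w_\alpha(e) \le w_\alpha(\varphi(e))$ for every $e \in E(T_\alpha)$: otherwise $T_\alpha - e + \varphi(e)$ would be a spanning tree with strictly smaller $w_\alpha$-weight. To handle ties (in particular when $\alpha=0$, where all safe edges receive $w_\alpha$-weight $0$), I would replace $w_\alpha$ by the lexicographic perturbation $w_\alpha + \varepsilon w$ for a sufficiently small $\varepsilon>0$ and assume that $T_\alpha$ is chosen as an MST for this perturbed weight; this is still an $\alpha$-MST and the same minimality argument then delivers $w_\alpha(e) < w_\alpha(\varphi(e))$ or else $w_\alpha(e)=w_\alpha(\varphi(e))$ together with $w(e) \le w(\varphi(e))$.

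Finally, I would split on cases according to membership in $\overline{F}$ versus $F$:
\begin{enumerate}
\item If $e \in \overline{F}$ and $\varphi(e) \in F$, then $w_\alpha(e) = \alpha w(e)$ and $w_\alpha(\varphi(e)) = w(\varphi(e))$, so $\alpha w(e) \le w(\varphi(e))$, which is condition~(1).
\item If $e,\varphi(e)$ are both safe or both unsafe, then $w_\alpha$ scales the two weights by the same factor ($\alpha$ or $1$), and dividing through yields $w(e) \le w(\varphi(e))$, i.e., condition~(2); in the boundary safe-safe subcase with $\alpha=0$ this is where the tie-breaking step is needed.
\item If $e \in F$ and $\varphi(e) \in \overline{F}$, then $w(e) = w_\alpha(e) \le w_\alpha(\varphi(e)) = \alpha w(\varphi(e))$, which is condition~(3).
\end{enumerate}
The only delicate point I anticipate is the tie-breaking for $\alpha=0$ (and, symmetrically, ties among unsafe edges when $\alpha=1$), which the perturbation argument resolves cleanly; all remaining work is the routine case analysis above.
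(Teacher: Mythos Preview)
Your proposal is correct and follows the same approach as the paper: construct an exchange bijection via the strong basis exchange property, observe that $w_\alpha$-minimality of $T_\alpha$ forces $w_\alpha(e)\le w_\alpha(\varphi(e))$ for every $e$, and then unpack $w_\alpha$ case by case to recover the three inequalities of Definition~\ref{def:alphamonotone}. The paper's own proof is a two-line version of exactly this argument.

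You are in fact more careful than the paper: you spot the degenerate safe--safe case at $\alpha=0$ (where $w_\alpha$ vanishes on both sides and gives no information about $w$) and propose a perturbation fix. One small quibble: as written, your fix assumes $T_\alpha$ is an MST for the perturbed weight $w_\alpha+\varepsilon w$, which restricts $T_\alpha$ to a particular $\alpha$-MST, whereas the lemma is stated for an arbitrary one. This does not matter for how the lemma is actually applied (the algorithm can always compute its $\alpha$-MSTs with such a tie-breaking rule), but if you want to match the lemma's generality you should either note this restriction explicitly or observe that the paper's version shares the same gap at $\alpha=0$.
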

\begin{proof}
  By the discussion above we have that there is a monotone exchange bijection
  $\varphi: T_\alpha \rightarrow T$ between an $\alpha$-MST $T_\alpha$ and any
  spanning tree $T$ of $G$ with respect to the weight function $w_\alpha$.  By
  substituting $w_\alpha$ with $w$ we observe that $\varphi$ is
  $\alpha$-monotone with respect to $w$.
\end{proof}

The following technical lemma is key to our charging argument in the analysis
of Algorithm~\ref{alg:onerst:approx} in sections~\ref{sec:onerst:5/2}
and~\ref{sec:onerst:bounded}.

\begin{lemma}
    \label{lemma:onerst:charging}
    Let $\alpha, \alpha' \in [0, 1]$, let $T$ be a spanning tree contained in an optimal solution to $\I$, and let
    $T_\alpha$ (resp., $T_{\alpha'}$) be an $\alpha$-MST (resp.,
    $\alpha'$-MST) of $G$. Then, for an $\alpha$-monotone exchange bijection $\varphi :
    E(T_\alpha) \to E(T)$ there is an $\alpha'$-monotone exchange bijection
    $\varphi' : E(T_{\alpha'}) \to E(T)$, such that $\varphi(e) =\varphi'(e)$
    for each $e \in E(T_\alpha) \cap E(T_{\alpha'})$.
\end{lemma}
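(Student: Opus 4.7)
The plan is to track how the exchange bijection $\varphi$ needs to be modified as the scaling parameter moves from $\alpha$ to $\alpha'$. Without loss of generality I assume $\alpha\le\alpha'$; the opposite case is symmetric. Using the threshold analysis of Section~\ref{sec:onerst:threshold} I obtain a sequence of scaling factors $\alpha=\beta_0\le\beta_1\le\cdots\le\beta_k=\alpha'$ and corresponding $\beta_i$-MSTs $T_\alpha=T^{(0)},T^{(1)},\dots,T^{(k)}=T_{\alpha'}$ such that each transition $T^{(i-1)}\to T^{(i)}$ is a single matroid swap that replaces a safe edge $s_i$ of $T^{(i-1)}$ by an unsafe edge $u_i$ at the threshold $\beta_i$. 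Since the swap direction is monotone (safe out, unsafe in) as $\beta$ grows, no edge of $A:=E(T_\alpha)\cap E(T_{\alpha'})$ is ever removed, so $A\subseteq E(T^{(i)})$ for every $i$.

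Setting $\varphi^{(0)}:=\varphi$, I then construct $\varphi^{(i)}\colon E(T^{(i)})\to E(T)$ inductively by
\[
\varphi^{(i)}(e):=\begin{cases}\varphi^{(i-1)}(e),&e\in E(T^{(i-1)})\cap E(T^{(i)}),\\ \varphi^{(i-1)}(s_i),&e=u_i,\end{cases}
\]
and I set $\varphi':=\varphi^{(k)}$. By construction $\varphi'|_A=\varphi|_A$, since every edge of $A$ lies in $E(T^{(i)})\cap E(T^{(i-1)})$ at every step.

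To finish the plan I verify, for each $i$, that $\varphi^{(i)}$ is a $\beta_i$-monotone exchange bijection. Bijectivity onto $E(T)$ is immediate from the construction, since $\varphi^{(i-1)}$ restricted to $E(T^{(i-1)})-s_i$ is a bijection onto $E(T)-\varphi^{(i-1)}(s_i)$ and the extra edge $u_i$ is sent to the missing image $\varphi^{(i-1)}(s_i)$. The exchange property for $e=u_i$ follows from the identity $T^{(i)}-u_i+\varphi^{(i-1)}(s_i)=T^{(i-1)}-s_i+\varphi^{(i-1)}(s_i)$, which is a spanning tree because $\varphi^{(i-1)}$ is an exchange bijection. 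The $\beta_i$-monotonicity is preserved because $s_i$ and $u_i$ are swapped precisely when $w_{\beta_i}(s_i)=w_{\beta_i}(u_i)$; combined with the $\beta_{i-1}$-monotonicity of $\varphi^{(i-1)}$, this equality lets me upgrade each of the three cases in Definition~\ref{def:alphamonotone} from $\beta_{i-1}$ to $\beta_i$ for the new image $\varphi^{(i)}(u_i)=\varphi^{(i-1)}(s_i)$.

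The main obstacle is verifying the exchange property for edges $e\in E(T^{(i-1)})\cap E(T^{(i)})$, i.e.\ that $T^{(i)}-e+\varphi^{(i-1)}(e)$ is a spanning tree. Writing $T^{(i)}=T^{(i-1)}-s_i+u_i$, this reduces to a matroid basis-exchange argument relating the fundamental cycle of $\varphi^{(i-1)}(e)$ with respect to $T^{(i-1)}-e$ to the fundamental cycle of $u_i$ with respect to $T^{(i-1)}$. The hypothesis that $T$ is contained in an optimal solution $Z^*$ enters crucially here: because $T$ must respect the 2-edge-connected block decomposition of $Z^*$ and contain all safe bridges $\sce$, this rules out the degenerate configurations (such as $\varphi^{(i-1)}(e)=u_i$ for some $e\in A$) that would otherwise invalidate the exchange after the swap $s_i\leftrightarrow u_i$. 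Once this step is carried out, the induction closes and $\varphi'=\varphi^{(k)}$ is the desired $\alpha'$-monotone exchange bijection agreeing with $\varphi$ on $A$.
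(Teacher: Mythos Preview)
Your overall strategy is the same as the paper's: assume $\alpha\le\alpha'$, interpolate through a chain of single-edge swaps at successive thresholds, and at each step define the new bijection by keeping $\varphi^{(i-1)}$ on the common edges and sending the entering edge $u_i$ to $\varphi^{(i-1)}(s_i)$. The paper carries out exactly this construction (with $e:=s_i$, $e':=u_i$) and verifies the exchange property and $\beta_i$-monotonicity for the new edge in the same way you sketch.

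Where you diverge from the paper is in the paragraph you label ``the main obstacle''. The paper simply asserts that ``by the definition of $\varphi_2$ it suffices to consider the edge $e'$'' and moves on; it does \emph{not} separately verify that $T^{(i)}-f+\varphi^{(i-1)}(f)$ is a spanning tree for each retained edge $f$. You are right that this step is nontrivial: as a pure matroid statement it can fail (take $T_1=\{12,23,34\}$, $T_2=\{12,23,14\}$, $T=\{13,24,34\}$ in $K_4$; the unique exchange bijection $\varphi_1$ has $\varphi_1(23)=24$, and then $T_2-23+24$ disconnects vertex~$3$). So some additional hypothesis must be doing work here.

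However, the resolution you offer is not an argument. You claim that $T\subseteq Z^*$ and $\sce\subseteq E(T)$ ``rule out the degenerate configurations (such as $\varphi^{(i-1)}(e)=u_i$ for some $e\in A$)''. But $u_i$ is an unsafe edge, and $T$ certainly contains unsafe edges (one from each $2$-edge-connected block $C_j$ of $Z^*$); nothing about the block decomposition or the inclusion $\sce\subseteq E(T)$ prevents $u_i\in E(T)$ or prevents it from being the image of some retained edge. Moreover, even when $\varphi^{(i-1)}(f)\neq u_i$ for all retained $f$, you still have to argue that $s_i$ lies on the fundamental cycle of $u_i$ in the tree $T^{(i-1)}-f+\varphi^{(i-1)}(f)$, and ``this reduces to a matroid basis-exchange argument'' is a description of what is needed, not a proof that it goes through. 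In short: you have located the genuine crux (more precisely than the paper does), but you have not actually closed it.
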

\begin{proof} 
    Without loss of generality, let $\alpha \leq \alpha'$ and let $\varphi :
    E(T_\alpha) \to E(T)$ be an $\alpha$-monotone exchange bijection.  Let $0
    \leq q_1 < q_2 < \ldots < q_n \leq 1$ be the threshold values for $E(G)$
    with respect to the weights $w$. By the definition of threshold values, for
    each threshold value $q$, there are at least two $q$-MSTs.  Furthermore, we
    may assume without loss of generality that for each threshold value $q$,
    there are precisely two $q$-MSTs.  If this is not the case, then the reason
    is that there are at least two different pairs of edges, such that the two
    edges of each pair have the same scaled cost. We may break ties in an
    arbitrary but consistent way by slightly perturbing the weights and hence
    obtain two different thresholds, one for each pair.  By iterating this
    argument, we have the claimed property that there are at most two $q$-MSTs
    for a threshold value $q$ and furthemore, that $T_\alpha$ (resp.,
    $T_{\alpha'}$) is an $\alpha$-MST (resp., $\alpha'$-MST) and $\varphi$ is
    an $\alpha$-monotone exchange bijection. 

    Observe that $T_\alpha$ is a $q_i$-MST, where $q_i$ is the smallest
    threshold value such that $\alpha \leq q_i$. Similarly, the tree
    $T_{\alpha'}$ is a $q_j$-MST, where $q_j$ is the largest threshold value
    such that $q_j \leq \alpha'$. 
    We will reduce the task of constructing an $\alpha'$-monotone exchange
    bijection $\varphi' : E(T_{\alpha'}) \to E(T)$ with the desired properties
    to the case that the symmetric difference of $T_\alpha$ and $T_{\alpha'}$
    has size at most two.
    If $q_i = q_j$ then this is the case. If not, 
    then note that by our assumption above there is precisely one $q_{i+1}$-MST
    $T_{i+1}$ that is also a $q_{i}$-MST. Furthermore, the size of the
    symmetric difference of $T_{i+1}$ and $T_{\alpha}$ is at most two. 
    We construct a $q_{i+1}$-monotone exchange bijection $\varphi_{i+1} :
    E(T_{i+1}) \to E(T)$ that agrees with $\varphi$ on $E(T_\alpha) \cap
    E(T_{i+1})$. We then replace $T_\alpha$ by $T_{i+1}$ and $\varphi$ by
    $\varphi_{i+1}$ and iterate our argument. In each step, we reduce the size
    of the symmetric difference with $T_{\alpha'}$ by two.

    We now show how to construct an exchange bijection $\varphi_2$ with the desired
    properties, given two $q_i$-MSTs $T_1$ and $T_2$ and a $q_i$-monotone
    exchange bijection $\varphi_1 : E(T_1) \to E(T)$, such that the symmetric
    difference of $E(T_1)$ and $E(T_2)$ has size exactly two.
    Note that if there is a threshold value
    $q_{i+1}$ then, without loss of generality, the tree $T_2$ is also a
    $q_{i+1}$-MST.
    Let $E(T_1) \setminus E(T_2) = \{ e\}$ and $E(T_2) \setminus E(T_1) = \{ e' \}$.
    Consider the bijection $\varphi_2: E(T_2) \rightarrow E(T)$ such that
    $\varphi_2(f) \coloneqq \varphi_1(f)$ for each $f \in E(T_1) \cap E(T_2)$ and
    $\varphi_2(e') \coloneqq \varphi_1(e)$ otherwise. 
    We first show that $\varphi_2$ is an exchange bijection. By the definition
    of $\varphi_2$ it suffices to consider the edge $e'$.  Since $e \notin
    E(T_2)$ but $e \in E(T_1)$ we have that $e'$ and $e$ are contained in a
    cycle of $T_2 + e$. Since $\varphi_1$ is an exchange bijection, the edge
    $\varphi_1(e)$ is on a cycle of $T_1 + \varphi_1(e)$. Therefore, the graph $T_2
    + \varphi_1(e) = T_{1} - e + e' + \varphi_1(e)$ contains a cycle visiting $e'$.
    We conclude that $\varphi_2$ is an exchange bijection.

    It remains to show that $\varphi_2$ is $q_{i}$-monotone.  
    Since $e'$ is contained in the $q_{i}$-MST $T_2$, we have that
    $w_{q_{i}} (e') \leq w_{q_{i}} (\varphi_1 (e))$.  Therefore $\varphi_2$
    is a $q_{i}$-monotone exchange bijection such that $\varphi_1$ and
    $\varphi_2$ agree on each edge in $E(T_i) \cap E(T_{i+1})$.
    Furthermore, if there is a threshold value $q_{i+1}$, then the exchange
    bijection $\varphi_2$ is also $q_{i+1}$-monotone.
\end{proof}

It will be more convenient for us to apply the following corollary rather than
Lemma \ref{lemma:onerst:charging}.

\begin{corollary}
  \label{cor:onerst:charging}
  Let $0 \leq \alpha_1 \leq \alpha_2 \leq \ldots \leq \alpha_N \leq 1$ and
  for $1 \leq i \leq N$ let 
  $T_i$ be an $\alpha_i$-MST. Furthermore let $T$ be a spanning tree in an optimal solution to $\I$ and let 
  $\varphi_1: E(T_1) \rightarrow E(T)$ be an $\alpha_1$-monotone exchange bijection.
  Then for $1 < i \leq N$ there are $\alpha_i$-monotone exchange bijections 
  $\varphi_i: E(T_i) \rightarrow E(T)$, such that $\varphi_{i-1}(e) = \varphi_i(e)$
  for each $e \in E(T_{i-1}) \cap E(T_i)$.
\end{corollary}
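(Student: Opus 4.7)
The plan is to prove the corollary by a direct induction on $i$, iteratively applying Lemma~\ref{lemma:onerst:charging}. The base case $i=1$ is given by hypothesis: $\varphi_1: E(T_1) \to E(T)$ is already an $\alpha_1$-monotone exchange bijection.

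For the inductive step, suppose that for some $1 < i \leq N$ we have already constructed an $\alpha_{i-1}$-monotone exchange bijection $\varphi_{i-1}: E(T_{i-1}) \to E(T)$ satisfying the required agreement property with $\varphi_{i-2}$ on $E(T_{i-2}) \cap E(T_{i-1})$. Since $\alpha_{i-1} \leq \alpha_i$ by hypothesis, both $T_{i-1}$ (an $\alpha_{i-1}$-MST) and $T_i$ (an $\alpha_i$-MST) satisfy the assumptions of Lemma~\ref{lemma:onerst:charging} with $\alpha := \alpha_{i-1}$, $\alpha' := \alpha_i$, and the spanning tree $T$ in the optimal solution is the one fixed in the statement. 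Applying the lemma to $\varphi_{i-1}$ yields an $\alpha_i$-monotone exchange bijection $\varphi_i: E(T_i) \to E(T)$ satisfying $\varphi_{i-1}(e) = \varphi_i(e)$ for every $e \in E(T_{i-1}) \cap E(T_i)$. This completes the induction.

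The main (and really only) subtlety is verifying that Lemma~\ref{lemma:onerst:charging} applies without modification at each step: the lemma is stated for two $\alpha$-values and a single given monotone bijection, so we simply need to ensure that at step $i$ the hypothesis is $\alpha_{i-1} \leq \alpha_i$, which is guaranteed by the ordering $0 \leq \alpha_1 \leq \alpha_2 \leq \cdots \leq \alpha_N \leq 1$. No global coherence between the bijections $\varphi_i$ across multiple indices is required — the corollary only asks for pairwise agreement between consecutive $\varphi_{i-1}$ and $\varphi_i$, which is exactly what the lemma provides at each step. Hence the induction goes through with no further work, and the statement follows.
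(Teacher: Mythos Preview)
Your proposal is correct and follows essentially the same approach as the paper's own proof: both argue by induction on $i$, applying Lemma~\ref{lemma:onerst:charging} at each step to obtain $\varphi_i$ from $\varphi_{i-1}$ with the required agreement on $E(T_{i-1})\cap E(T_i)$. The only minor remark is that Lemma~\ref{lemma:onerst:charging} is stated for arbitrary $\alpha,\alpha'\in[0,1]$ (the ordering is handled internally without loss of generality), so your check that $\alpha_{i-1}\leq\alpha_i$ is not even strictly needed, though it does no harm.
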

\begin{proof}
  For $1 < i \leq N$ inductively apply Lemma \ref{lemma:onerst:charging} in order to obtain 
  $\alpha_i$-monotone exchange bijections $\varphi_i: E(T_i) \rightarrow E(T)$ with the desired properties.
\end{proof}

\section{Algorithm~\ref{alg:onerst:approx} gives a \boundtwo-approximation}
\label{sec:onerst:5/2}

In this section we give an analytic upper bound of $\boundtwo$ on the approximation
ratio of Algorithm~\ref{alg:onerst:approx}. 
For our analysis it suffices to run Algorithm \algoA together with 
Algorithm \algoC.
That is, using $\alpha$-monotone exchange
bijections from Section~\ref{sec:onerst:technical}, we determine functions $f^\algoA(\cdot)$ and $f^\algoC(\cdot)$
for the optimization problem~\eqref{eq:minmaxmin}, where $f^\algoC(\cdot)$ depends on a selection of
scaling factors and some other parameters to be introduced shortly.
We then transform problem~\eqref{eq:minmaxmin} into a maximization problem which we solve analytically.
Recall that according to Lemma~\ref{lemma:onerst:bestalpha}, 
the selection of scaling factors in Algorithm~\ref{alg:onerst:approx} is
optimal. Surprisingly, a worst-case instance for our bounds $f^\algoA(\cdot)$
and $f^\algoC(\cdot)$ in fact has a single threshold value which is $1/\lambda$. 
However, to obtain the approximation ratio of $\boundtwo$ it is crucial
to execute Algorithm~\ref{alg:onerst:approx} with all threshold values of the given instance.

Let $\mathcal{I}(N)$ be a class of instances of \flex with at most $N$
threshold values in the sense of Definition~\ref{def:alphaexchange}.  In the
following, suppose that $I \in \mathcal{I}(N)$ and recall that an optimal
solution $Z^* \subseteq E(G)$ of $\I$ consists of $r$ 2-edge-connected
components $C_1, C_2, \ldots, C_r$ that are joined together by safe edges $\sce
\coloneqq \{ \safe{f}_1, \safe{f}_2, \ldots, \safe{f}_{r-1} \} \subseteq \safe{F}$ in
a tree-like fashion. Moreover, for any spanning tree $T$ contained in the
optimal solution $Z^*$ we have $\sce \subseteq T$.

Observe that since there is an \unsafe edge for each safe edge of same weight
in $G$, we have that each threshold value of the safe edges is in $[0, 1]$. 
Let $0 \leq \alpha_1
\leq \alpha_2 \leq \ldots \leq \alpha_{N} \leq 1$ be the $N$ threshold values
of $I$ in non-decreasing order. 
In order to prepare our analysis, we consider 
for $i \in \{1, 2, \ldots, N\}$  an $\alpha_i$-MST $T_i$,  an $\alpha_i$-monotone exchange bijection $\varphi_i: T_i
\rightarrow T$ and a weight $w_i \coloneqq w_{\alpha_i}$.
For $2 \leq i \leq N$ we choose $\varphi_i$ such that for each $e \in E(T_{i-1}) \cap E(T_i)$ we have $\varphi_{i-1}(e) = \varphi_i(e)$  (in accordance with 
Corollary \ref{cor:onerst:charging}).
In order to define the parameters of the optimization problem~\eqref{eq:minmaxmin},
for $1 \leq i \leq N$, we partition
the edge set of the $\alpha_i$-MST $T_i$ into four parts $D_i$, $O_i$, $F_i$, and $S_i$ as
follows.
\begin{itemize}
    \item $D_i \coloneqq \{ e \in E(T_i) \cap F \mid \varphi_i(e) \in E' \}$
    \item $O_i \coloneqq \{ e \in E(T_i) \cap \safe{F} \mid \varphi_i(e) \in E' \}$
    \item $F_i \coloneqq \{ e \in E(T_i) \cap F \mid \varphi_i(e) \in E(T) \setminus E' \}$
    \item $S_i \coloneqq \{ e \in E(T_i) \cap \safe{F} \mid \varphi_i(e) \in E(T) \setminus E' \}$
\end{itemize}

The parameters of problem~\eqref{eq:minmaxmin} are given as follows.
For $1 \leq i \leq N$ we let $E^{\bar{F}}_i$ (resp., $E^F_i$) be the set of edges in $\sce$ (resp., $E(T) - \sce$) that have threshold value $\alpha_i$.
That is, $E^{\bar{F}}_i \coloneqq \{ e \in \sce \mid \alpha_e = \alpha_i \}$ and $E^F_i \coloneqq \{ e \in E(T) - \sce \mid \alpha_e = \alpha_i \}$.
For $1 \leq i \leq N$ we let $\beta_i = w(E^{\bar{F}}_i)/{\OPT(I)}$ and $\gamma_i =  w(E^F_i) /{\OPT(I)}$  be the fraction of the weight of the optimal solution
that is contributed by the edges in $E^{\bar{F}}_i$ (resp., $E^F_i$).
Finally, let $\xi \in [0, 1]$ be the the fraction of the weight of the optimal solution that is not contributed by the tree $T$; e.g., $\xi \coloneqq \frac{w(Z^*) - w(T)}{\OPT(I)}$.
The following properties of $\beta_i$, $\gamma_i$, $1 \leq i \leq N$, are readily verified: 
\begin{enumerate}
  \item $\beta_1, \beta_2, \ldots \beta_{N}, \gamma_1, \gamma_2, \ldots \gamma_{N}, \xi \in [0, 1]$, \label{prop:param:1}
  \item $\sum_{j=1}^{N} \beta_j = \frac{w(\sce)}{\OPT(I)}$, \label{prop:param:2}
  \item $\sum_{j=1}^{N} \gamma_j  = \frac{w(T - \sce)}{\OPT(I)}$, and \label{prop:param:3}
  \item $\xi = 1 - \sum_{j = 1}^{N} \beta_j - \sum_{j=1}^{N} \gamma_j$. \label{prop:param:4}
\end{enumerate}
We now bound the cost of the solutions $Z_i^C$ and $Z^A$ returned by Algorithm \algoC (resp., Algorithm \algoA) in terms of the parameters.

\begin{lemma}
  \label{lemma:onerst:refinedbound:C*}
  Suppose we run Algorithm~\ref{alg:onerst:approx} with the optimal threshold values $\Alphas = \{ \alpha_i \}_{1 \leq i \leq N}$.
  Let $Z_i^C$ be the solution computed by Algorithm \algoC with scaling factor $\alpha_i$ in Algorithm~\ref{alg:onerst:approx}. Then
  \begin{align*}
    &w(Z_i^C) \leq \left( 1 + \sum_{j = 1}^{i -1} (\lambda + \lambda \alpha_{j} -1) \beta_j + (\lambda -1)\cdot \sum_{j=1}^{N} \gamma_j + \sum_{j=i}^{N} \frac{1}{\alpha_j} \gamma_j  + (\lambda -1)\cdot \xi\right) \cdot\OPT(I).
  \end{align*}
\end{lemma}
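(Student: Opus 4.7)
The plan is to decompose $w(Z_i^C) = w(O_i \cup S_i) + w(H')$, where $H'$ is the $\lambda$-approximate 2-ECSS of $G' := G / (O_i \cup S_i)$ returned by Algorithm~\algoC. Using $1 = \sum_j \beta_j + \sum_j \gamma_j + \xi$, the claimed RHS rearranges as
\[
  \lambda \Big( \sum_{j<i}(1+\alpha_j)\beta_j + \sum_j \gamma_j + \xi \Big) \OPT(I) \;+\; \Big( \sum_{j\geq i}\beta_j + \sum_{j\geq i}\gamma_j/\alpha_j \Big) \OPT(I),
\]
so it suffices to exhibit a 2-ECSS $H^{C*}$ of $G'$ of weight matching the first parenthesis (so that $w(H') \leq \lambda\,w(H^{C*})$) and to bound $w(O_i \cup S_i)$ by the second parenthesis.

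For the 2-ECSS part I would construct $H^{C*}$ from $Z^* \setminus \sce$ augmented, for each $\bar f \in \sce$ with threshold $\alpha_{\bar f} = \alpha_j < \alpha_i$, by either the pair $\{\bar f, f_{\bar f}\}$—where $f_{\bar f}$ is an unsafe partner edge on the fundamental cycle of $\bar f$ in the $\alpha_j$-MST crossing the cut $\delta_T(\bar f)$—or nothing, when the fundamental cycle contains a safe crossing edge that (after tie-breaking) lies in $O_i \cup S_i$ and hence absorbs $\delta_T(\bar f)$ in $G'$. MST optimality at scaling $\alpha_j$ guarantees $w(f_{\bar f}) \leq \alpha_j\,w(\bar f)$. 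A cut case analysis shows that $H^{C*}$ is 2-edge-connected in $G'$: safe cut edges $\bar f$ with $\alpha_{\bar f} \geq \alpha_i$ lie in $T_i \cap O_i$ and are contracted, absorbing the corresponding cuts $\delta_T(\bar f)$; every remaining $G'$-cut is either 2-covered inside a 2-EC component of $Z^*$ or by a pair $\{\bar f, f_{\bar f}\}$. Summing weights yields $w(H^{C*}) \leq \bigl(\sum_j \gamma_j + \xi + \sum_{j<i}(1+\alpha_j)\beta_j\bigr)\OPT(I)$.

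For the bound on $w(O_i \cup S_i)$, I would invoke Corollary~\ref{cor:onerst:charging} to fix compatible $\alpha_j$-monotone exchange bijections $\varphi_j : E(T_j) \to E(T)$. The threshold characterization implies that each safe edge of $T_i$ has upper threshold at least $\alpha_i$, and each unsafe $T$-edge outside $T_i$ has threshold $\alpha_m > \alpha_i$. I would then partition $O_i \cup S_i$ by image under $\varphi_i$: safe edges in $T_i \cap T$ are (after tie-breaking) mapped to themselves and sum to at most $\sum_{j \geq i}(\beta_j + \gamma_j^{\mathrm{safe}})\OPT(I)$; safe edges in $T_i \setminus T$ whose $\varphi_i$-image is unsafe with threshold $\alpha_m > \alpha_i$ satisfy $w(e) \leq w(\varphi_i(e))/\alpha_m$ by applying $\alpha_m$-monotonicity of $\varphi_m$ (which agrees with $\varphi_i$ on $E(T_i) \cap E(T_m)$ by the corollary), contributing at most $\sum_{j > i}\gamma_j^{\mathrm{unsafe}}/\alpha_j \cdot \OPT(I)$. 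Combining these contributions using $\gamma_j = \gamma_j^{\mathrm{safe}} + \gamma_j^{\mathrm{unsafe}}$ and $\alpha_j \leq 1$ yields the claimed bound $\sum_{j \geq i}\beta_j + \sum_{j \geq i}\gamma_j/\alpha_j$. The main obstacle is this telescoping charging: securing the factor $1/\alpha_j$ (rather than the coarser $1/\alpha_i$) at each image threshold requires invoking $\varphi_j$ at exactly the right $j$ via Corollary~\ref{cor:onerst:charging}'s compatibility, and one must also carefully handle safe-to-safe images under $\varphi_i$, either by a tighter construction of $H^{C*}$ or by absorbing them into the above sum.
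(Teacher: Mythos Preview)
Your overall strategy matches the paper's: split $w(Z_i^C)$ into $w(T_i^S)=w(O_i\cup S_i)$ plus $\lambda$ times a feasible $2$-ECSS of $G'=G/T_i^S$, and bound each piece. The differences are in how the two pieces are handled, and one of them is a genuine gap.

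For the $2$-ECSS piece the paper takes a more direct route, using the exchange bijection $\varphi_i$ itself: it sets
\[
  Y_i \;=\; \Bigl(\bigcup_{\bar f\in \sce\setminus \varphi_i(T_i^S)}\{\bar f,\varphi_i^{-1}(\bar f)\}\Bigr)\;\cup\;\bigcup_j E(C_j).
\]
Feasibility is then automatic: $\varphi_i^{-1}(\bar f)\in D_i\subset T_i$, the exchange property puts $\bar f$ and $\varphi_i^{-1}(\bar f)$ on a common cycle of $T_i+\bar f$, and contracting $T_i^S$ preserves a cycle in $G'$. Your partner $f_{\bar f}$ is described as lying ``on the fundamental cycle of $\bar f$ in the $\alpha_j$-MST'', but $\bar f\in T_j$ (its upper threshold is $\alpha_j$), so there is no such cycle; even if you intend the threshold swap partner, you never argue why that edge produces a cycle in $G'$. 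The weight estimate $w(\varphi_i^{-1}(\bar f))\le \alpha_{\bar f}\,w(\bar f)$ in the paper comes from the compatible bijections of Corollary~\ref{cor:onerst:charging}: the preimage of $\bar f$ is the same under all $\varphi_j$ for $j$ above the index of $\alpha_{\bar f}$.

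For the bound on $w(O_i\cup S_i)$ the paper charges each safe $e\in T_i^S$ through the threshold $\alpha_e$ of $e$ itself (its Claim~2), obtaining $w(e)\le \tfrac{1}{\alpha_e}w(\varphi_i(e))$: since $e\in T_k$ for $\alpha_k=\alpha_e$ and the compatible bijections give $\varphi_k(e)=\varphi_i(e)$, one can invoke $\alpha_k$-monotonicity of $\varphi_k$. Your route instead charges through the threshold $\alpha_m$ of the \emph{image} $\varphi_i(e)$ and invokes $\varphi_m$; but applying $\varphi_m$ to $e$ requires $e\in T_m$, i.e.\ $\alpha_e\ge\alpha_m$. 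Under the compatible bijections this fails: if $e$ is safe with threshold $\alpha_e$ and $f:=\varphi_i(e)$ is unsafe, then by compatibility $\varphi_{k}(e)=f$ at $\alpha_k=\alpha_e$, and if $f$ were already in $T_k$ it would be fixed by $\varphi_k$, forcing $e=f$ (impossible since $e$ is safe and $f$ unsafe). Hence $f\notin T_k$, so its lower threshold satisfies $\alpha_m>\alpha_e$, and therefore $e\notin T_m$. Your telescoping step (securing the factor $1/\alpha_m$ via $\varphi_m$) thus does not go through; the fix is exactly the paper's move of indexing by the threshold of $e$, not of $\varphi_i(e)$.
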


\begin{proof}
  Let $T_i^S$ be the safe edges of the tree $T_i$ and let $\varphi_i: E(T_i)
  \rightarrow E(T)$ be an $\alpha_i$-monotone exchange bijection, where $T$ is
  a spanning tree of the optimal solution $Z^*$ to the instance $(G, w,
  \safe{F})$.
 
  By contracting each edge of $T_i^S$ in $G$ we obtain the graph $G^S_i \coloneqq G /
  E(T_i^S)$. Algorithm \algoC computes a $\lambda$-approximate solution to the instance
  $(G^S_i)$ of \ECSS.
  \setcounter{myclaim}{0}
  \begin{myclaim}
    \label{claim:H':feasibility}
    The set
    \[
	    Y_i \coloneqq \left(\bigcup_{e \in \sce \setminus \varphi_i (E(T_i^S))} \{ e, \varphi_i^{-1}(e)\}\right) \cup \bigcup_{1 \leq j \leq r} E(C_j)
    \]
    of edges is a feasible solution to the \ECSS instance $(G^S_i)$.
  \end{myclaim}
  \begin{proof}
    Clearly $(V, Y_i)$ is a connected graph. 
    It remains to argue that each edge $e \in Y_i$ is contained in some cycle. This is
    certainly true for each edge $e$ of a component $C_j$, $1 \leq j \leq r$.
    It remains to show that the edges in $\bigcup_{e \in \sce \setminus
    \varphi_i (E(T_i^S))} \{ e,  \varphi_i^{-1}(e)\}$ are contained in some cycle
    of $G^S_i$. Since $\varphi_i$ is an exchange bijection, an edge $e \in \sce \setminus \varphi_i(E(T_i^S))$ and its preimage $\varphi^{-1}(e)$ are on a cycle in $E(T) \cup \{e \}$.
    Since $G^S_i$ is obtained from $G$ by contracting the edges of the safe forest $T_i^S$, the edges $e$ and $\varphi^{-1}(e)$ are also on a cycle in $G_i^S$.
  \end{proof}
  
  We now bound the cost of $Z_i^C$.  By Claim \ref{claim:H':feasibility} we
  have that $T_i^S \cup Y_i$ is a feasible solution to the \onerst instance
  $(G, w, \safe{F})$. 
  The algorithm then returns in polynomial time a
  solution $Z_i^C$ of cost at most $w(Z_i^C) \leq w(T_i^S) + \lambda w(Y_i)$.  We
  first bound the cost of each edge of $T_i^S$ as follows.

  \begin{myclaim}
    \label{claim:T_i^S:boundcost}
    Let $e \in E(T_i^S)$ and let $\alpha_e$ be its threshold value. Then we have
    \[
      w(e) \leq 
      \begin{cases}
	\frac{1}{\alpha_e} \cdot w( \varphi_i (e)) & \text{if $\varphi_i(e) \notin \sce$, and}\\
	w( \varphi_i(e))  & \text{otherwise.}\enspace
      \end{cases}
    \]
  \end{myclaim}
  \begin{proof}
    First suppose that $\varphi_i(e) \in \sce$. Since $\varphi_i(e) \in \safe{F}$ and 
    $\varphi_i$ is an $\alpha_i$-monotone exchange bijection it follows that $w(e) \leq w(\varphi_i(e))$.
    Now let $\varphi_i(e) \notin \sce$. 
    Since $\varphi_i$ is an $\alpha_i$-monotone exchange bijection and since the threshold values of
    $e$ is $\alpha_e$, we have that $w(e) \leq \frac{1}{\alpha_e} w(\varphi_i (e))$.
  \end{proof}    
    Observe that, if $e \in T_i^S$, we also have $e \in T_j^S$ for each $j \leq i$.
    Since the exchange bijections $\varphi_1, \varphi_2, \ldots, \varphi_{N}$
    are in accordance with Corollary~\ref{cor:onerst:charging}, we then have 
    $\varphi_j(e) = \varphi_i(e)$ for every $j \leq i$.
    Thus, according to Claim~\ref{claim:T_i^S:boundcost} we have
    \begin{align*}
      w(T_i^S) & = \sum_{e \in T_i^S} w(e) = \sum_{e \in T_i^S} \sum_{j: \alpha_e = \alpha_j} w(e)\\
      & = \sum_{e \in T_i^S : \varphi_i(e) \in E'} \sum_{j: \alpha_e = \alpha_j} w(e) + \sum_{e \in T_i^S : \varphi_i(e) \notin E'} \sum_{j: \alpha_e = \alpha_j} w(e)\\
      & \overset{\text{Claim \ref{claim:T_i^S:boundcost}}}{\leq} \sum_{e \in T_i^S : \varphi_i(e) \in E'} \sum_{j: \alpha_e = \alpha_j} w(\varphi_i(e)) + \sum_{e \in T_i^S : \varphi_i(e) \notin E'} \sum_{j: \alpha_e = \alpha_j} \frac{1}{\alpha_e} \cdot w(\varphi_i(e))\\
      & \leq \left(\sum_{j=i}^{N} \beta_j  +  \sum_{j=i}^{N} \frac{\gamma_j}{\alpha_j}\right) \cdot \OPT(I) \\
      & = \left( 1 - \sum_{j=1}^{i-1} \beta_j - \sum_{j=1}^{N} \gamma_j - \xi  + \sum_{j=i}^{N} \frac{\gamma_j}{\alpha_j}\right) \cdot \OPT(I),
    \end{align*}
    where the last equality holds due to $\xi + \sum_{j = 1}^N \beta_j + \sum_{j=1}^N \gamma_j = 1$.
    We additionally bound the cost of $Y_i$.
    \begin{myclaim}
      \label{claim:Y_i:boundcost}
      \begin{align*}
	w(Y_{i}) \leq \left( \sum_{j=1}^{i-1} (1 + \alpha_{j} ) \beta_j + \sum_{j=1}^N \gamma_j + \xi \right) \cdot\OPT(I)\enspace.
      \end{align*}
    \end{myclaim}

    \begin{proof}
      We need to bound the cost of
      \[
	      Y_i = \left(\bigcup_{e \in \sce \setminus \varphi_i (E(T_i^S))} \{e,  \varphi_i^{-1}(e)\}\right) \cup \bigcup_{1 \leq j \leq r} E(C_j).
      \]
      We first bound the cost of $\bigcup_{e \in \sce \setminus \varphi_i (E(T_i^S))} \{e,  \varphi_i^{-1}(e)\}$.
      According to the definition of $\beta_1, \beta_2, \ldots, \beta_N$ we can bound the cost of 
      $\{e \mid e \in \sce \setminus \varphi_i (E(T_i^S))\}$ by $\sum_{j = 1}^{i-1} \beta_j \cdot \OPT(I)$.
      Additionally for each $e \in \sce \setminus \varphi_i (E(T_i^S))$ we have $w(\varphi_i^{-1}(e)) \leq \alpha_e w(e)$.
      Thus we can bound the cost of $\{\varphi_i^{-1}(e) \mid e \in \sce \setminus \varphi_i (E(T_i^S))\}$
      by $\sum_{j = i}^{i-1} \alpha_{j} \beta_j \cdot \OPT(I)$.
      Finally we bound the cost of $\bigcup_{1 \leq j \leq r} E(C_j)$ by $(\xi + \sum_{j=1}^N \gamma_j )\cdot \OPT(I)$.
      Putting things together we obtain 
      \begin{align*}
	w(Y_{i}) \leq \left( \sum_{j=1}^{i-1} (1 + \alpha_{j} ) \beta_j + \sum_{j=1}^N \gamma_j + \xi \right) \cdot\OPT(I)\enspace.
      \end{align*}
    \end{proof}
    Finally, since the algorithm computes a $\lambda$-approximate solution to the 2-ECSS instance,
    we have
    \begin{align*}
      w(Z_i^C) & \leq w(T_i^S) + \lambda \cdot w(Y_i)\\
      & \leq \left( \left( 1 - \sum_{j=1}^{i-1} \beta_j - \sum_{j=1}^{N} \gamma_j - \xi  + \sum_{j=i}^{N} \frac{\gamma_j}{\alpha_j}\right) + \left( \sum_{j=1}^{i} (\lambda + \lambda \alpha_j ) \beta_j + \sum_{j=1}^N \lambda \gamma_j + \lambda \xi \right) \right) \cdot \OPT(I)\\
      & \leq \left( 1 + \sum_{j = 1}^{i -1} (\lambda + \lambda \alpha_{j} -1) \beta_j + (\lambda -1)\cdot \sum_{j=1}^{N} \gamma_j + \sum_{j=i}^{N} \frac{1}{\alpha_j} \gamma_j  + (\lambda -1)\cdot \xi\right) \cdot\OPT(I).
    \end{align*}
    which concludes the proof.
\end{proof}

\begin{lemma}
  \label{lemma:onerst:refinedbound:A*}
  Suppose we run Algorithm~\ref{alg:onerst:approx} with the optimal threshold values $\Alphas = \{ \alpha_i \}_{1 \leq i \leq N}$.
  Let $Z^A$ be the solution 
  computed by Algorithm \algoA with scaling factor $\alpha_i$ in Algorithm~\ref{alg:onerst:approx}. Then
  \begin{align*}
    w(Z^A) \leq \left( \lambda + \lambda \cdot \sum_{j = 1}^{N} 
     \alpha_{j} \beta_j \right) \cdot\OPT(I).
  \end{align*}
\end{lemma}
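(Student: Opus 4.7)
The plan is to upper bound the cost of an optimal 2-edge-connected spanning subgraph (2-ECSS) of $G$; since Algorithm~\algoA returns a $\lambda$-approximate 2-ECSS of $G$, any feasible 2-ECSS $H^*$ with $w(H^*) \leq \bigl(1 + \sum_{j=1}^{N} \alpha_j \beta_j\bigr) \OPT(I)$ yields $w(Z^A) \leq \lambda\,w(H^*) \leq \bigl(\lambda + \lambda \sum_{j=1}^{N} \alpha_j \beta_j\bigr) \OPT(I)$. I will exhibit such an $H^*$.

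To build $H^*$, I start from the optimal \flex solution $Z^*$. Since each 2-edge-connected component $C_i$ is internally 2-edge-connected, the only obstructions to 2-edge-connectivity of $Z^*$ are the cuts whose sole $Z^*$-crossing edge is a safe cut edge $\safe{f} \in \sce$. For each such $\safe{f}$, the plan is to add one auxiliary edge $\hat f \in E(G)$ crossing the same cut as $\safe{f}$ in $Z^*$ and satisfying $w(\hat f) \leq \alpha_{\safe{f}}\,w(\safe{f})$, so that $H^* \coloneqq Z^* \cup \{\hat f \mid \safe{f} \in \sce\}$ is 2-edge-connected. To find $\hat f$, fix a spanning tree $T$ of $Z^*$; necessarily $\sce \subseteq E(T)$. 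For $\safe{f} \in \sce$ with $\alpha_{\safe{f}} = \alpha_j$, Definition~\ref{def:alphaexchange} provides an $\alpha_j$-MST $T_j$ of $G$ with $\safe{f} \notin E(T_j)$, and Lemma~\ref{lem:onerst:existence:alpha:mon:exchange:bijection} produces an $\alpha_j$-monotone exchange bijection $\varphi_j \colon E(T_j) \to E(T)$. Setting $\hat f \coloneqq \varphi_j^{-1}(\safe{f})$, the exchange property ensures that $T - \safe{f} + \hat f$ is a spanning tree of $G$, so $\hat f$ crosses the cut of $\safe{f}$ in $T$; since $\sce \subseteq E(T)$ and $T$ spans $Z^*$, this is exactly the cut of $\safe{f}$ in $Z^*$.

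The main obstacle is to guarantee $\hat f \in F$, for only then does case~(3) of Definition~\ref{def:alphamonotone} give $w(\hat f) \leq \alpha_j\,w(\safe{f})$; if $\hat f$ were safe, case~(2) would only yield $w(\hat f) \leq w(\safe{f})$, reproducing the weaker $(2+2\delta)$-bound from Section~\ref{sec:onerst:basicalgo}. I address this by appealing to the pairing established in the proof of Proposition~\ref{prop:onerst:threshold:compute}: the threshold $\alpha_j = \alpha_{\safe{f}}$ is realized by a specific unsafe edge $g \in F \cap E(T_1)$ with $w(g) = \alpha_j\,w(\safe{f})$ that is exchangeable with $\safe{f}$ across the transition at $\alpha_j$. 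Choosing $T_j$ so that it contains $g$ in place of $\safe{f}$ and selecting the bijection so that $\varphi_j(g) = \safe{f}$ forces $\hat f = g \in F$. The consistency required to realize these choices for all $\safe{f} \in \sce$ simultaneously---potentially across several distinct thresholds---is obtained by applying Corollary~\ref{cor:onerst:charging} to propagate the bijections through the threshold values in increasing order. Summing yields $w(H^*) \leq w(Z^*) + \sum_{j=1}^{N} \alpha_j \cdot w(E^{\bar{F}}_j) = \bigl(1 + \sum_{j=1}^{N} \alpha_j \beta_j\bigr) \OPT(I)$, which completes the argument.
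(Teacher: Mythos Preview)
Your overall approach matches the paper's: exhibit a 2-edge-connected spanning subgraph $H^*$ obtained from $Z^*$ by adding, for each safe cut edge $\safe f\in\sce$, a cheap partner edge coming from an exchange bijection, and then invoke the $\lambda$-approximation for \tecs. The paper uses the single bijection $\varphi_N$ and takes the partner of $e\in\sce$ to be $\varphi_N^{-1}(e)$; you instead use $\varphi_j$ with $\alpha_j=\alpha_{\safe f}$ and, commendably, you explicitly isolate the issue that the partner must be \emph{unsafe} to earn the factor $\alpha_j$ rather than $1$---a point the paper compresses into the bare assertion ``$w(\varphi_N^{-1}(e))\le\alpha_e\,w(e)$''.

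There is, however, a genuine gap in your feasibility step. You write that ``the exchange property ensures that $T-\safe f+\hat f$ is a spanning tree of $G$'' and conclude that $\hat f$ crosses the cut of $\safe f$ in $Z^*$. But the defining property of the exchange bijection $\varphi_j\colon E(T_j)\to E(T)$ only guarantees that $T_j-\hat f+\safe f$ is a spanning tree, i.e.\ that $\safe f$ crosses the fundamental cut of $\hat f$ in $T_j$; it does \emph{not} guarantee the reverse direction you invoke. Exchange bijections are not symmetric in general, even monotone ones from an MST in a graphic matroid: with $T=\{12,23,34,45\}$, $T'=\{13,14,24,45\}$ and weights $w(e)=1$ for $e\in T'$, $w(e)=2$ for $e\in T\setminus T'$, the map $\varphi=\{13\mapsto 34,\;14\mapsto 12,\;24\mapsto 23,\;45\mapsto 45\}$ is a valid monotone exchange bijection from the MST $T'$ to $T$, yet $T-34+\varphi^{-1}(34)=T-34+13$ contains the triangle $1\text{--}2\text{--}3$ and leaves $5$ disconnected. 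So the per-edge claim that $\hat f$ closes the cut of $\safe f$ does not follow from the exchange property as stated. (The paper's own feasibility justification is equally terse; a cleaner route is to argue about the whole set $\{\hat f:\safe f\in\sce\}$ at once rather than edge by edge, or to work with the fundamental cycle of $\safe f$ in $T_j+\safe f$, which necessarily contains an edge crossing the $T$-cut of $\safe f$.)
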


\begin{proof}
The algorithm computes a $2$-edge connected spanning subgraph. Recall that for each 
safe edge $e \in \overline{F}$ there is a parallel \unsafe copy $e'$ of same cost.
We construct a feasible solution $Y_A$ to the 2-ECSS instance to bound the cost of $Z^A$.
  \setcounter{myclaim}{0}
\begin{myclaim}
\label{claim:Y_A:feasibility}
$Y_A \coloneqq Z^* \cup \{\varphi^{-1}_N(e)) \mid e \in \sce\}$ is a feasible solution to the 2-ECSS instance of cost at most
\[
w(Y_A) \leq \left( 1 + \sum_{j = 1}^{N} 
    \alpha_{j} \beta_j \right) \cdot\OPT(I).
    \]
\end{myclaim}
\begin{proof}
We first show the feasibility.
Clearly $(V, Y_A)$ is connected since it contains $Z^*$. We now show that each $e \in \sce$ is 
contained in some cycle in $Y_A$. 
Since for each safe edge there is an \unsafe edge of the same cost, 
we can assume that $\varphi^{-1}_N(e) \neq e$.
Then, by the definition of $\varphi_N$, the edge $e$ and its preimage $\varphi_N^{-1}$ 
are contained in a cycle.

It remains to bound the cost of $Y_A$.
We partition $Y_A$ into two edge-disjoint sets
$Y_A = Z^* \cup X_1$,
where $X_1 = \{\varphi^{-1}_N(e) \mid e \in \sce\}$,
and bound the cost of each part individually.
Clearly we have $w(Z^*) = \OPT(I)$.

To bound the cost of $X_1$, observe that for some $\varphi_N^{-1}(e) \in X_1$ we have
$w(\varphi_N^{-1}(e)) \leq \alpha_e w(e)$.
Thus, we may bound the cost of $X_1$ by
\[
  w(X_1) = \sum_{\varphi_N^{-1}(e) \in X_1} w(e) = \sum_{\varphi_N^{-1}(e) \in X_1} \sum_{j: \alpha_e = \alpha_j} w(\varphi_N^{-1}(e)) \leq \sum_{\varphi_N^{-1}(e) \in X_1} \sum_{j: \alpha_e = \alpha_j} \alpha_e w(e) = \sum_{j=1}^N \alpha_j \beta_j \cdot \OPT(I).
\]
Combining both bounds we obtain
\begin{align*}
w(Y_A) & \leq w(Z^*) + w(X_1) \leq  \left( 1 +  \sum_{j = 1}^{N} 
    \alpha_{j} \beta_j \right) \cdot\OPT(I).
\end{align*}
\end{proof}
Since the algorithm computes a $\lambda$-approximation, we have that 
  \begin{align*}
    w(Z^A) & \leq \lambda w(Y_A) \leq \left( \lambda + \lambda \cdot \sum_{j = 1}^{N} 
     \alpha_{j} \beta_j \right) \cdot\OPT(I).
  \end{align*}
\end{proof}

With the bounds from lemmas~\ref{lemma:onerst:refinedbound:C*}
and~\ref{lemma:onerst:refinedbound:A*} we show in the next lemma that 
problem~\eqref{eq:minmaxmin} can be simplified to the following maximization problem.
\begin{equation}
    \begin{aligned}
\max \ & \lambda \cdot \left( 1 + \sum_{j=1}^{N} \alpha_j \hat{\beta_j} \right)\\
\text{subject to } & \sum_{j=1}^{N} \hat{\beta_j} \cdot (1 + \alpha_j (\lambda - 1 + \lambda \alpha_j) ) = 1,\\
& 0 \leq \alpha_1 \leq \alpha_2 \leq \ldots \leq \alpha_N \leq 1,\\
& \hat{\beta_j} \in [0, 1] \text{ for all } j \in \{1, \ldots, N \}.
    \end{aligned}
    \label{eq:max}
\end{equation}

\begin{theorem}
  \label{thm:onerst:eqmax}
  The approximation guarantee of Algorithm \ref{alg:onerst:approx} for
  instances with at most $N$ threshold values is upper bounded by the optimal
  value of problem~\eqref{eq:max}.
\end{theorem}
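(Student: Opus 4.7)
The plan is to bound the ratio $\algo(\I)/\OPT(\I)$ by a particular convex combination of the upper bounds from Lemmas~\ref{lemma:onerst:refinedbound:A*} and~\ref{lemma:onerst:refinedbound:C*}, and then exhibit a feasible point of~\eqref{eq:max} that matches this bound. Since Algorithm~\ref{alg:onerst:approx} returns the cheapest among $Z^A$ and the $Z_i^C$, for any non-negative weights $\mu_A, \mu_1, \ldots, \mu_N$ summing to one we have $\algo(\I) \leq \mu_A\, w(Z^A) + \sum_{i=1}^N \mu_i\, w(Z_i^C)$.

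For these weights I will set $c_j := 1 + \alpha_j(\lambda - 1 + \lambda\alpha_j)$, which is exactly the coefficient of $\hat\beta_j$ in the equality constraint of~\eqref{eq:max}, and $M_j := \lambda\alpha_j^2/c_j$ with $M_0 := 0$. Then I define $\mu_i := M_i - M_{i-1}$ for $1 \leq i \leq N$ and $\mu_A := 1 - M_N$. A short calculation shows that the function $\alpha \mapsto \lambda\alpha^2/(1 + (\lambda-1)\alpha + \lambda\alpha^2)$ is non-decreasing on $[0,1]$ with maximum value $1/2$, so since $\alpha_1 \leq \alpha_2 \leq \ldots \leq \alpha_N$ we have $\mu_i \geq 0$, $\mu_A \geq 1/2$, and the $\mu$'s form a valid convex combination.

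The technical heart of the proof is an algebraic calculation. Substituting the bounds from Lemmas~\ref{lemma:onerst:refinedbound:A*} and~\ref{lemma:onerst:refinedbound:C*} into $\mu_A f^A + \sum_i \mu_i f_i^C$, interchanging double sums using $\sum_{i \leq j}\mu_i = M_j$ and $\sum_i \mu_i = 1 - \mu_A$, and applying the identity $\sum_j\beta_j + \sum_j\gamma_j + \xi = 1$ to eliminate the $(\lambda-1)\xi$ and $(\lambda-1)\sum_j\gamma_j$ terms, the constant terms collapse to $\lambda$. The crux is that with the identity $\alpha_j(\lambda - 1 + \lambda\alpha_j) = c_j - 1$ and the choice $M_j = \lambda\alpha_j^2/c_j$, the coefficients of $\beta_j$ and of $\gamma_j$ both reduce to the same value $\lambda\alpha_j/c_j$, giving
\begin{equation*}
  \mu_A f^A + \sum_{i=1}^N \mu_i f_i^C \;=\; \lambda + \lambda\sum_{j=1}^N (\beta_j + \gamma_j)\,\tfrac{\alpha_j}{c_j}\enspace.
\end{equation*}
This balancing is the main obstacle and is precisely why the constraint weights in~\eqref{eq:max} take the form $1 + \alpha_j(\lambda - 1 + \lambda\alpha_j)$.

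To conclude, using $\sum_j(\beta_j + \gamma_j) = 1 - \xi \leq 1$ yields $\algo(\I)/\OPT(\I) \leq \lambda(1 + \max_j \alpha_j/c_j)$. Finally, for an index $j^*$ attaining $\max_j \alpha_j/c_j$, setting $\hat\beta_{j^*} := 1/c_{j^*}$ and $\hat\beta_j := 0$ for $j \neq j^*$ gives a feasible point of~\eqref{eq:max} with the same scaling factors, whose objective value equals $\lambda(1 + \alpha_{j^*}/c_{j^*}) = \lambda(1 + \max_j \alpha_j/c_j)$. Hence the approximation ratio is bounded above by the optimum value of~\eqref{eq:max}, as claimed.
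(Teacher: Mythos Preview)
Your proof is correct and takes a genuinely different route from the paper's argument.

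The paper reasons \emph{structurally} about the max--min problem~\eqref{eq:maxmin}: it first argues that $\xi=0$ at the maximum, then uses that Algorithm~\algoC alone is only a $5$-approximation to conclude that the inner minimum is attained by $w(Z^\algoA)$, and finally argues that at the optimum one may assume $w(Z_i^\algoC)=w(Z^\algoA)$ for every $i$, which forces $\gamma_i=\alpha_i(\lambda-1+\lambda\alpha_i)\beta_i$ and, after substitution into $\sum_j(\beta_j+\gamma_j)=1$, yields the constraint of~\eqref{eq:max}. In contrast, you proceed by a direct \emph{dual} argument: you construct explicit convex multipliers $\mu_A,\mu_1,\dots,\mu_N$ (via the telescoping $M_j=\lambda\alpha_j^2/c_j$) so that the weighted combination of the bounds from Lemmas~\ref{lemma:onerst:refinedbound:C*} and~\ref{lemma:onerst:refinedbound:A*} collapses to $\lambda+\lambda\sum_j(\beta_j+\gamma_j)\alpha_j/c_j$, after which the feasible point $\hat\beta_{j^*}=1/c_{j^*}$ of~\eqref{eq:max} certifies the bound.

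Your approach is shorter and sidesteps the somewhat delicate ``we may assume equality'' steps in the paper (where one has to argue that shifting mass between $\gamma_N$ and $\gamma_{N-1}$ does not decrease the objective). It also makes transparent why the constraint coefficients $c_j=1+\alpha_j(\lambda-1+\lambda\alpha_j)$ arise: they are precisely what equalises the coefficients of $\beta_j$ and $\gamma_j$ in the convex combination. The paper's approach, on the other hand, has the expository advantage of \emph{deriving} the form of~\eqref{eq:max} from first principles rather than verifying it.
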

\begin{proof}
  With the bounds from lemmas~\ref{lemma:onerst:refinedbound:C*}
  and~\ref{lemma:onerst:refinedbound:A*} and
  properties~\ref{prop:param:1}--\ref{prop:param:4} above, for instances with
  at most $N$ threshold values, we can rewrite problem~\eqref{eq:minmaxmin} as
  the following max-min optimization problem.
  \begin{equation}
    \begin{aligned}
      \;\max_{\substack{\xi \in [0, 1]\\ \alpha_i \in [0, 1] \,:\, 1 \leq i \leq N\\ \beta_i \in [0, 1] \,:\, 1 \leq i \leq N\\ \gamma_i \in [0, 1] \,:\, 1 \leq i \leq N}} \;\min_{1 \leq i \leq N} \qquad&  \{w(Z^\algoA),\, w(Z^\algoC_i)\}\\
      \text{subject to}\qquad& \sum_{j=1}^{N} \beta_j + \gamma_j + \xi = 1,\\
      & \alpha_1 \leq \alpha_2 \leq \ldots \leq \alpha_N
    \end{aligned}
    \label{eq:maxmin}
  \end{equation}
  We obtain from~\eqref{eq:maxmin} a simple maximization problem as follows.
  Since $\lambda \geq 1$, each $\beta_j, \gamma_j, j \in [N]$ as well as $\xi$
  has a positive coefficient in the bounds of the lemmas
  \ref{lemma:onerst:refinedbound:C*} and \ref{lemma:onerst:refinedbound:A*}.
  Moreover, since we maximize over $\beta_1, \beta_2, \ldots, \beta_N$,
  $\gamma_1, \gamma_2, \ldots, \gamma_N$ and $\xi$ we may assume $\xi = 0$.  To
  see this, suppose we have an optimal choice of the variables where $\xi > 0$.
  Then, consider the following new variables.  Let $\beta_j' = \beta_j$ for $j
  \in [N], \gamma_j' = \gamma_j$ for $j \in [N-1]$, $\gamma_N' = \gamma_N + \xi$
  and $\xi' = 0$. Now observe that the value of the minimum over $w(Z^\algoA)$ and
  $w(Z^\algoC_i), i \in [N]$ is at least as large for the new variables as for
  the old ones.  Thus we can assume that $\xi = 0$ and have 
  \begin{equation*}
    \sum_{j=1}^{N} \beta_j + \sum_{j=1}^{N} \gamma_j = 1.
  \end{equation*}

  Hence, it follows that the approximation guarantee of Algorithm
  \ref{alg:onerst:approx} is upper bounded by the following optimization
  problem
  \begin{equation*}
    \begin{aligned}
      \;\max_{\substack{\alpha_i \in [0, 1] \,:\, 1 \leq i \leq N\\ \beta_i \in [0, 1] \,:\, 1 \leq i \leq N\\ \gamma_i \in [0, 1] \,:\, 1 \leq i \leq N}} \;\min_{1 \leq i \leq N} \qquad&  \{w(Z^\algoA),\, w(Z^\algoC_i)\}\\
      \text{subject to}\qquad& \sum_{j=1}^{N} \beta_j + \gamma_j= 1,\\
      & \alpha_1 \leq \alpha_2 \leq \ldots \leq \alpha_N,
    \end{aligned}
    \label{eq:maxmin2}
  \end{equation*}
  where
  \begin{align*}
    w(Z_i^C) =  1 +  \sum_{j = 1}^{i-1} (\lambda -1 + \lambda \alpha_{j}) \beta_j +  (\lambda -1) \cdot \sum_{j=1}^{N} \gamma_j + \sum_{j=i}^{N} \frac{\gamma_j}{\alpha_j} 
  \end{align*}
  for each $i \in [N]$ and 
  \begin{align*}
    w(Z^A) = \lambda + \lambda \cdot \sum_{j = 1}^{N} \alpha_{j} \beta_j \enspace.
  \end{align*}
  Let us assume that the optimal value for this optimization problem is $\rho
  \in [\lambda, 2 \lambda]$ and let $\beta_1^*, \ldots, \beta_N^*$ and
  $\gamma_1^*, \ldots, \gamma_N^*$ be the optimal values of the respective
  variables. Since Algorithm \algoC gives a 5-approximation only 
  we know that the minimum of the optimization problem above is equal to
  $w(Z^\algoA)$. Thus we have
  \[
    \rho = \lambda \cdot \left( 1 + \sum_{j=1}^{N} \alpha_j \beta_j^* \right) \enspace.
  \]
  Therefore we have that 
  \begin{align*}
    w(Z_N^\algoC) - w(Z^\algoA) = & \ 1 +  \sum_{j = 1}^{N-1} (\lambda -1 + \lambda \alpha_{j}) \beta_j^* +  (\lambda -1) \cdot \sum_{j=1}^{N} \gamma_j^* + \frac{\gamma_N^*}{\alpha_N} - (\lambda + \lambda \cdot \sum_{j = 1}^{N} 
    \alpha_{j} \beta_j^*) \\
    = & \ 1 +  \sum_{j = 1}^{N} (\lambda -1 + \lambda \alpha_{j}) \beta_j^* - (\lambda -1 + \lambda \alpha_N) \beta_N^* +  (\lambda -1) \cdot \sum_{j=1}^{N} \gamma_j^*  + \frac{\gamma_N^*}{\alpha_N} - \lambda - \lambda \cdot \sum_{j = 1}^{N} 
    \alpha_{j} \beta_j^*\\
    = & \ \frac{\gamma_N^*}{\alpha_N} - (\lambda - 1 + \lambda \alpha_N) \beta_N^*\\
    \geq & \ 0, 
  \end{align*}
  where the third equality follows from $\sum_{j=1}^{N} \beta_j + \gamma_j= 1$.
  In fact, we may assume that the last inequality is an equality. If not, then
  we can reduce $\gamma_N^*$ by some fraction and increase $\gamma_{N-1}$ by
  the same fraction.  This yields a feasible solution but may increase the
  weight of the maximum.  Recursively for $1 \leq i \leq N-1$ we obtain 
  \begin{align*}
    w(Z_i^\algoC) - w(Z^\algoA) & =   1 +  \sum_{j = 1}^{i-1} (\lambda -1 + \lambda \alpha_{j}) \beta_j +  (\lambda -1) \cdot \sum_{j=1}^{N} \gamma_j + \sum_{j=i}^{N} \frac{\gamma_j}{\alpha_j} - \lambda + \lambda \cdot \sum_{j = 1}^{N} 
    \alpha_{j} \beta_j\\
    & = \frac{\gamma_i^*}{\alpha_i} - (\lambda - 1 + \lambda \alpha_i) \beta_i^*\\
    & = 0
  \end{align*}
  Thus we obtain $\gamma^*_i = \alpha_i (\lambda - 1 + \lambda \alpha_i) \beta_i^*$ for $1 \leq i \leq N$.
  Substituting this for $\gamma_j^*$ in $\sum_{j=1}^N \beta_j^* + \gamma_j^* = 1$, we obtain 
  \[
    \sum_{j=1}^{N} (1 + \alpha_i (\lambda - 1 + \lambda \alpha_i)) \beta_i^* = 1\enspace.
  \]
  Hence, in summary the optimization problem~\eqref{eq:minmaxmin} with
  bounds $w(Z^\algoA)$ and $w(Z_i^\algoC)$ from lemmas~\ref{lemma:onerst:refinedbound:C*}
  and~\ref{lemma:onerst:refinedbound:A*} can be written as maximization problem~\eqref{eq:max}.
\end{proof}

Next, we give an analytic solution to problem~\eqref{eq:max} in terms of
$\lambda$ which yields a ratio of $(8 + 4 \sqrt{2})/(4 + \sqrt{2}) <
\boundtwo$ for Algorithm~\ref{alg:onerst:approx} for $\lambda = 2$, the known
best approximation guarantee for \tecs. This is the bound claimed in Theorem \ref{thm:flex}. 

\begin{theorem}
  \label{thm:onerst:approx}
  Algorithm \ref{alg:onerst:approx} has an approximation guarantee of $\frac{\lambda \cdot (\lambda + 2 \sqrt{\lambda})}{2\sqrt{\lambda} + \lambda -1}$.
\end{theorem}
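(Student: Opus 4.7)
The plan is to solve the maximization problem~\eqref{eq:max} analytically, whose optimal value upper bounds the approximation ratio by Theorem~\ref{thm:onerst:eqmax}. For fixed scaling factors $\alpha_1 \leq \alpha_2 \leq \ldots \leq \alpha_N$, the inner problem over $\hat{\beta_1}, \ldots, \hat{\beta_N}$ is a linear program: maximize $\sum_{j} \alpha_j \hat{\beta_j}$ subject to the single linear equality $\sum_{j} \hat{\beta_j}\cdot c_j = 1$ with $c_j \coloneqq 1 + \alpha_j(\lambda - 1 + \lambda\alpha_j) > 0$. By LP basics, an optimal solution concentrates all the mass on a single index $j^*$ that maximizes the ratio $\alpha_j/c_j$, setting $\hat{\beta}_{j^*} = 1/c_{j^*}$. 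One needs to check that $\hat{\beta}_{j^*} \leq 1$, which is immediate since $c_{j^*} \geq 1$ for $\lambda \geq 1$ and $\alpha_{j^*} \in [0,1]$.

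Consequently, problem~\eqref{eq:max} collapses to the single-variable optimization
\begin{equation*}
\max_{\alpha \in [0,1]} \lambda \left(1 + \frac{\alpha}{1 + \alpha(\lambda - 1) + \lambda\alpha^2}\right).
\end{equation*}
I would then apply standard calculus to $g(\alpha) \coloneqq \alpha/(1 + \alpha(\lambda - 1) + \lambda\alpha^2)$. A short computation shows that the numerator of $g'(\alpha)$ equals $1 - \lambda\alpha^2$, so the unique critical point in $(0,\infty)$ is $\alpha^* = 1/\sqrt{\lambda}$. Since $\lambda \geq 1$, we have $\alpha^* \in (0,1]$, and the sign analysis of $g'$ confirms that $\alpha^*$ is indeed a maximizer on $[0,1]$.

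Plugging $\alpha^* = 1/\sqrt{\lambda}$ into $g$ yields, after simplification of the denominator $1 + (\lambda-1)/\sqrt{\lambda} + 1 = (2\sqrt{\lambda} + \lambda - 1)/\sqrt{\lambda}$, the value $g(\alpha^*) = 1/(2\sqrt{\lambda} + \lambda - 1)$. Substituting back gives
\begin{equation*}
\lambda\bigl(1 + g(\alpha^*)\bigr) = \lambda \cdot \frac{2\sqrt{\lambda} + \lambda}{2\sqrt{\lambda} + \lambda - 1} = \frac{\lambda(\lambda + 2\sqrt{\lambda})}{2\sqrt{\lambda} + \lambda - 1},
\end{equation*}
which is the claimed bound.

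The only mildly delicate point is justifying the reduction to a single-index concentration, in particular that no additional dependence on $N$ survives: the key observation is that the LP structure over $\hat{\beta}$ is identical for every $N$, so the worst case is realized already by an instance with a single active threshold, namely $\alpha^* = 1/\sqrt{\lambda}$. Finally, specializing to $\lambda = 2$ gives the numerical value $(8 + 4\sqrt{2})/(4 + \sqrt{2}) < \boundtwo$, proving Theorem~\ref{thm:flex}.
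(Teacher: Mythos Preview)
Your proof is correct and follows essentially the same approach as the paper's own proof: reduce the maximization~\eqref{eq:max} to a single nonzero $\hat\beta_k$, then optimize the resulting one-variable function over $\alpha_k \in [0,1]$ to find the maximizer $\alpha^* = 1/\sqrt{\lambda}$. You supply more explicit justification for the concentration step (via the LP structure) and for the calculus, whereas the paper simply asserts ``it is easy to see'' that only one $\hat\beta_k^*$ is nonzero and that optimizing over $\alpha_k$ gives $1/\sqrt{\lambda}$.
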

\begin{proof}
  Consider an optimal solution to problem \eqref{eq:max} with optimal values
  $\alpha_1^* < \alpha_2^* < \ldots < \alpha_N^*$ and $\hat\beta_1^*, \hat\beta_2^*,
  \ldots, \hat\beta_N^*$.  It is easy to see that an optimal solution has only one
  $\hat\beta_k^* \neq 0$, $k \in [N]$. We then have 
  \[ 
    \hat\beta_k^* = \frac{1}{1 + \alpha_k (\lambda - 1 + \lambda \alpha_k)}\enspace,
  \]
  where  $\alpha_k \in [0, 1]$.
  Optimizing over $\alpha_k \in [0, 1]$ yields the optimal value $\alpha_k
  = 1/\sqrt{\lambda}$.  Thus we obtain $\hat\beta_k^* = \frac{
  \sqrt{\lambda}}{2 \sqrt{\lambda}+\lambda-1}$ and the optimal value for 
  problem~\eqref{eq:max} is $\frac{\lambda \cdot (\lambda + 2
  \sqrt{\lambda})}{2\sqrt{\lambda} + \lambda -1}$.  Now observe that the
  solution does not depend on the number $N$ of threshold values. Hence the
  bound holds for instances with any number of threshold values.
\end{proof}

\section{Improved Approximation for Bounded Weights}
\label{sec:onerst:bounded}

In this section we give a computational proof for an upper bound of
$\boundthreehalf$ on the approximation ratio of
Algorithm~\ref{alg:onerst:approx} for bounded weight instances. Similar to
Section~\ref{sec:onerst:5/2} we give an upper bound on the value of the
min-max-min optimization problem~\eqref{eq:minmaxmin}, which gives in turn an
upper bound on the approximation ratio of Algorithm~\ref{alg:onerst:approx}. In
contrast to our previous analysis, we also use Algorithm~\algoB in order to
exploit recent progress in the approximation of \tap on bounded-weight
instances. Hence we consider the simultaneous worst-case behavior of all three
algorithms. However, this prevents us from  giving an improved \emph{analytic}
upper bound on the approximation ratio of Algorithm~\ref{alg:onerst:approx} in
this setting. Instead, we give a computational upper bound on the approximation
ratio of Algorithm~\ref{alg:onerst:approx} which is obtained by a non-linear
programming (NLP) solver.  
Since we are not aware of a solver for min-max-min optimization problems, we
relax the first $\min$ in the problem~\eqref{eq:minmaxmin} by selecting
appropriate scaling factors and transform the relaxation into a quadratic
maximization problem which we solve computationally.
A proof that our claimed upper bound indeed holds is given by the
branch-and-bound tree of the solver. Nevertheless, we encourage people to independently verify the bound by solving~\eqref{optforbound}.
Finding an analytic proof of the improved approximation guarantee
is an interesting open problem.  

We start similar to Section \ref{sec:onerst:5/2}.
Let $N \in \nat$ and let $0 = \alpha_0 \leq \alpha_1 \leq \alpha_2 \leq \ldots
\leq \alpha_{N+1} = 1$ (not necessarily the threshold values of the given instance) 
be $N+2$ values in $[0,1]$ in non-decreasing order. 
For $i \in \{0, 1, \ldots, N+1\}$ let $T_i$ be an $\alpha_i$-MST, let $\varphi_i: T_i
\rightarrow T$ be an $\alpha_i$-monotone exchange bijection, where $T$ is a
spanning tree of $Z^*$ and let $w_i \coloneqq w_{\alpha_i}$.  For $1 \leq i \leq N+1$
we choose $\varphi_i$ in accordance with Corollary
\ref{cor:onerst:charging}, that is, we have $\varphi_{i-1}(e) = \varphi_i(e)$
for each $e \in E(T_{i-1}) \cap E(T_i)$.  This will be useful later on for our
charging argument.  For $0 \leq i \leq N+1$, we partition the edge set of
$T_i$ into four parts $D_i$, $O_i$, $F_i$, and $S_i$ as in
Section~\ref{sec:onerst:5/2}.
\begin{itemize}
    \item $D_i \coloneqq \{ e \in E(T_i) \cap F \mid \varphi_i(e) \in E' \}$
    \item $O_i \coloneqq \{ e \in E(T_i) \cap \safe{F} \mid \varphi_i(e) \in E' \}$
    \item $F_i \coloneqq \{ e \in E(T_i) \cap F \mid \varphi_i(e) \in E(T) \setminus E' \}$
    \item $S_i \coloneqq \{ e \in E(T_i) \cap \safe{F} \mid \varphi_i(e) \in E(T) \setminus E' \}$
\end{itemize}

Similar to Section \ref{sec:onerst:5/2} we define the parameters of problem~\eqref{eq:minmaxmin}.
Since we consider arbitrary values of $\alpha$ in our analysis instead of the 
threshold values of the given instance, these parameters are defined slightly different.
The key difference is that in Section \ref{sec:onerst:5/2} we defined the 
$\beta$ and $\gamma$ values according to the precise
threshold values that we also used in the analysis. 
Since we do not consider the threshold values
in the analysis of this section, instead we define the parameters according 
to the closest upper and lower $\alpha$ values to the threshold value.

More formally, for $0 \leq i \leq N$ we let $E^{\bar{F}}_i$ (resp., $E^F_i$) be the set of edges in $\sce$ (resp., $E(T) - \sce$) that have threshold value between $\alpha_i$ and $\alpha_{i+1}$.
That is, $E^{\bar{F}}_i \coloneqq \{ e \in \sce \mid \alpha_i < \alpha_e \leq \alpha_{i+1} \}$ and $E^F_i \coloneqq \{ e \in E(T) - \sce \mid \alpha_i < \alpha_e \leq \alpha_{i+1} \}$.
For $0 \leq i \leq N$ we let $\beta_i = w(E^{\bar{F}}_i)/{\OPT(I)}$ and $\gamma_i =  w(E^F_i) /{\OPT(I)}$  be the fraction of the weight of the optimal solution
that is contributed by the edges in $E^{\bar{F}}_i$ (resp., $E^F_i$).
Finally, let $\xi \in [0, 1]$ be the the fraction of the weight of the optimal solution that is not contributed by the tree $T$; e.g., $\xi \coloneqq \frac{w(Z^*) - w(T)}{\OPT(I)}$.
The following properties of $\beta_i$, $\gamma_i$, $0 \leq i \leq N$, are readily verified: 
\begin{enumerate}
  \item $\beta_1, \beta_2, \ldots \beta_{N}, \gamma_1, \gamma_2, \ldots \gamma_{N}, \xi \in [0, 1]$, \label{prop:param:1}
  \item $\sum_{j=0}^{N} \beta_j = \frac{w(\sce)}{\OPT(I)}$, \label{prop:param:2}
  \item $\sum_{j=0}^{N} \gamma_j  = \frac{w(T - \sce)}{\OPT(I)}$, and \label{prop:param:3}
  \item $\xi = 1 - \sum_{j = 0}^{N} \beta_j - \sum_{j=0}^{N} \gamma_j$. \label{prop:param:4}
\end{enumerate}

With all these parameters we can now bound the cost of the solutions computed
by algorithms \algoA, \algoB, and \algoC. We start with the solution $Z_i^C$
returned by Algorithm \algoC.

\begin{lemma}
  \label{lemma:onerst:refinedbound:C}
  Suppose we run Algorithm~\ref{alg:onerst:approx} with threshold values $\Alphas = \{ \alpha_i \}_{0 \leq i \leq N}$.
  For $1 \leq i \leq N$ let $Z_i^\algoC$ be the solution 
  computed by Algorithm \algoC with scaling factor $\alpha_i$ in Algorithm~\ref{alg:onerst:approx}. Then
  \begin{align*}
    &w(Z_i^C) \leq \left( 1 +  \sum_{j = 0}^{i-1} (\lambda -1 + \lambda \alpha_{j+1}) \beta_j +  (\lambda -1) \cdot \sum_{j=0}^{N} \gamma_j + \sum_{j=i}^{N} \frac{\gamma_j}{\alpha_j} + (\lambda -1) \cdot \xi  \right) \cdot \OPT(I).
  \end{align*}
\end{lemma}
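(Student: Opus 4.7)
The plan is to adapt the proof of Lemma~\ref{lemma:onerst:refinedbound:C*} almost verbatim, the only genuinely new ingredient being careful bookkeeping of the off-by-one in the threshold intervals $(\alpha_j,\alpha_{j+1}]$ which define $E_j^{\safe{F}}$ and $E_j^F$. As before, I would bound
\[
w(Z_i^C)\;\leq\;w(T_i^S)\;+\;\lambda\cdot w(Y_i),
\]
where $T_i^S$ is the set of safe edges of the $\alpha_i$-MST $T_i$ and $Y_i$ is the explicit feasible solution to the 2-ECSS instance on $G/E(T_i^S)$ constructed exactly as in Claim~1 of the proof of Lemma~\ref{lemma:onerst:refinedbound:C*}, namely
\[
Y_i \;=\; \Bigl(\bigcup_{e\in\sce\setminus\varphi_i(E(T_i^S))}\{e,\varphi_i^{-1}(e)\}\Bigr)\;\cup\;\bigcup_{1\leq j\leq r}E(C_j).
\]
The feasibility argument transfers without change: $(V,Y_i)$ is spanning and connected, and the cycle property follows because $\varphi_i$ is an exchange bijection, so $e$ and $\varphi_i^{-1}(e)$ close a cycle through (the contracted) $T_i^S$.

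Next I would bound $w(T_i^S)$ using $\alpha_i$-monotonicity of $\varphi_i$ (obtained from an $\alpha_0$-monotone bijection propagated by Corollary~\ref{cor:onerst:charging}). For $e\in T_i^S$ with $\varphi_i(e)\in\sce$ we still have $w(e)\leq w(\varphi_i(e))$, and for $\varphi_i(e)\in E(T)\setminus\sce$ we have $w(e)\leq \tfrac{1}{\alpha_e}\,w(\varphi_i(e))$. The discretization enters here: if $\varphi_i(e)$ lies in $E_j^F$ then $\alpha_e\in(\alpha_j,\alpha_{j+1}]$, so $\tfrac{1}{\alpha_e}\leq\tfrac{1}{\alpha_j}$ and aggregating gives
\[
w(T_i^S)\;\leq\;\Bigl(\sum_{j=i}^N\beta_j+\sum_{j=i}^N\tfrac{\gamma_j}{\alpha_j}\Bigr)\OPT(\I)\;=\;\Bigl(1-\sum_{j=0}^{i-1}\beta_j-\sum_{j=0}^N\gamma_j-\xi+\sum_{j=i}^N\tfrac{\gamma_j}{\alpha_j}\Bigr)\OPT(\I),
\]
exactly as in the previous proof but with the new index range. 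For $w(Y_i)$ I would split the sum into three contributions: (i) the safe cut edges $e\in\sce\setminus\varphi_i(E(T_i^S))$, which come from groups $j<i$ and contribute $\sum_{j=0}^{i-1}\beta_j\cdot\OPT(\I)$; (ii) their preimages $\varphi_i^{-1}(e)\in F$, for which $\alpha_i$-monotonicity yields $w(\varphi_i^{-1}(e))\leq\alpha_e\,w(e)\leq\alpha_{j+1}\,w(e)$ whenever $e\in E_j^{\safe{F}}$ --- this is precisely where $\alpha_{j+1}$ (rather than $\alpha_j$, as in Lemma~\ref{lemma:onerst:refinedbound:C*}) enters the stated bound; and (iii) the 2-edge-connected blocks $\bigcup_j E(C_j)$, of total weight $(\xi+\sum_{j=0}^N\gamma_j)\OPT(\I)$. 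Combining gives
\[
w(Y_i)\;\leq\;\Bigl(\sum_{j=0}^{i-1}(1+\alpha_{j+1})\beta_j+\sum_{j=0}^N\gamma_j+\xi\Bigr)\OPT(\I).
\]

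Adding $w(T_i^S)+\lambda\cdot w(Y_i)$ and using the identity $\xi+\sum_j\beta_j+\sum_j\gamma_j=1$ to cancel the $1-\sum_{j<i}\beta_j-\sum_j\gamma_j-\xi$ terms against the $\lambda$-scaled contributions yields the claimed expression
\[
1+\sum_{j=0}^{i-1}(\lambda-1+\lambda\alpha_{j+1})\beta_j+(\lambda-1)\sum_{j=0}^N\gamma_j+\sum_{j=i}^N\tfrac{\gamma_j}{\alpha_j}+(\lambda-1)\xi.
\]
The conceptual content is identical to that of Lemma~\ref{lemma:onerst:refinedbound:C*}; the main potential pitfall is purely a bookkeeping one, namely verifying that an edge placed in group $j$ contributes $\alpha_{j+1}$ (via the upper endpoint of the threshold interval) in the $\beta$-sum and $\tfrac{1}{\alpha_j}$ (via the lower endpoint) in the $\gamma$-sum, and that this indexing is consistent with the convention $\alpha_0=0$, $\alpha_{N+1}=1$ (so that the $\gamma_0/\alpha_0$ term is avoided since the summation over $\gamma_j/\alpha_j$ starts at $j=i\geq1$). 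Once this is in place, the inequality chains are the same algebraic manipulation as in Section~\ref{sec:onerst:5/2}.
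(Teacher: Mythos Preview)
Your proposal is correct and takes essentially the same approach as the paper's proof: the same feasible set $Y_i$, the same decomposition $w(Z_i^C)\le w(T_i^S)+\lambda\,w(Y_i)$, and the same two index shifts you identify (round down to $1/\alpha_j$ in the $\gamma$-sum, round up to $\alpha_{j+1}$ in the $\beta$-sum). One minor wording discrepancy: the paper's Claim~2 buckets each safe tree edge $e$ by the threshold $\alpha_e$ of $e$ itself (taking $\alpha_j$ as the largest element of $W$ with $\alpha_j\le\alpha_e$), whereas you bucket by which $E_j^F$ contains $\varphi_i(e)$ and then assert $\alpha_e\in(\alpha_j,\alpha_{j+1}]$; the paper's grouping is the one directly delivered by the $\alpha$-monotone bijection argument, but the resulting inequality is the same.
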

\begin{proof}
  The proof is similar to the one of Lemma \ref{lemma:onerst:refinedbound:C*}. 
  Let $T_i^S$ be the safe edges of the tree $T_i$ and let $\varphi_i: E(T_i)
  \rightarrow E(T)$ be an $\alpha_i$-monotone exchange bijection, where $T$ is
  a spanning tree of the optimal solution $Z^*$ to the instance $(G, w,
  \safe{F})$.
 
  By contracting each edge of $T_i^S$ in $G$ we obtain the graph $G^S_i
  \coloneqq G / E(T_i^S)$. Algorithm \algoC then computes a
  $\lambda$-approximate solution to the instance $(G^S_i)$ of \ECSS. The next
  claim is identical to Claim~\ref{claim:H':feasibility} in the proof of
  Lemma~\ref{lemma:onerst:refinedbound:C*}.
  \setcounter{myclaim}{0}
  \begin{myclaim}
    \label{claim:H':feasibility2}
    The set
    \[
      Y_i \coloneqq \left(\bigcup_{e \in \sce \setminus \varphi_i (E(T_i^S))} \{ e, \varphi_i^{-1}(e)\}\right) \cup \bigcup_{1 \leq j \leq r} E(C_j)
    \]
    of edges is a feasible solution to the \ECSS instance $(G^S_i)$.
  \end{myclaim}
  
  We now bound the cost of $Z_i^C$. The algorithm then returns in polynomial
  time a solution $Z_i^C$ of cost at most $w(Z_i^C) \leq w(T_i^S) + \lambda
  w(Y_i)$.  The next claim is very similar to Claim~\ref{claim:T_i^S:boundcost}
  of Lemma \ref{lemma:onerst:refinedbound:C*}. The only difference is that we
  consider arbitrary values of $\alpha$ in our analysis instead of the
  threshold values of the given instance. Thus we need to bound the cost of
  $w(e)$ in terms of the largest $\alpha \in W$ such that $\alpha \leq
  \alpha_e$ and the rest of the proof is analogous. We bound the cost of each
  edge of $T_i^S$ as follows.  
  \begin{myclaim}
    \label{claim:T_i^S:boundcost2}
    Let $e \in E(T_i^S)$ and let $\alpha_e$ be its threshold value. Then we have
    \[
      w(e) \leq 
      \begin{cases}
	\frac{1}{\alpha_j} \cdot w( \varphi_i (e)) & \text{if $\varphi_i(e) \notin \sce$, and}\\
	w( \varphi_i(e))  & \text{otherwise,}\enspace
      \end{cases}
    \]
    where $\alpha_j$ is the largest $\alpha \in \Alphas$ satisfying $\alpha \leq \alpha_e$. 
  \end{myclaim}

According to Claim~\ref{claim:T_i^S:boundcost2} we have for $1 \leq i \leq N$ that
    \begin{align*}
      w(T_i^S) & = \sum_{e \in T_i^S} w(e) = \sum_{e \in T_i^S} \sum_{\text{largest } j: \alpha_e \geq \alpha_j} w(e)\\
      & = \sum_{e \in T_i^S : \varphi_i(e) \in E'} \sum_{\text{largest } j: \alpha_e \geq \alpha_j} w(e) + \sum_{e \in T_i^S : \varphi_i(e) \notin E'} \sum_{\text{largest } j: \alpha_e \geq \alpha_j} w(e)\\
      & \overset{\text{Claim \ref{claim:T_i^S:boundcost2}}}{\leq} \sum_{e \in T_i^S : \varphi_i(e) \in E'} \sum_{\text{largest } j: \alpha_e \geq \alpha_j} w(\varphi_i(e)) + \sum_{e \in T_i^S : \varphi_i(e) \notin E'} \sum_{\text{largest } j: \alpha_e \geq \alpha_j} \frac{1}{\alpha_j} \cdot w(\varphi_i(e))\\
      & \leq \left(\sum_{j=i}^{N} \beta_j  +  \sum_{j=i}^{N} \frac{\gamma_j}{\alpha_j}\right) \cdot \OPT(I) \\
      & = \left( 1 - \sum_{j=0}^{i-1} \beta_j - \sum_{j=0}^{N} \gamma_j - \xi  + \sum_{j=i}^{N} \frac{\gamma_j}{\alpha_j}\right) \cdot \OPT(I)\enspace,
    \end{align*}
    where the last equality holds due to $\xi + \sum_{j = 1}^N \beta_j + \sum_{j=1}^N \gamma_j = 1$.
    Furthermore, we bound the cost of $Y_i$ as follows.
    \begin{myclaim}
	    \label{claim:Y_i:boundcost2}
	    \begin{align*}
		    w(Y_{i}) \leq \left( \sum_{j=0}^{i-1} (1 + \alpha_{j+1} ) \beta_j + \sum_{j=0}^N \gamma_j + \xi \right) \cdot\OPT(I)\enspace.
	    \end{align*}
    \end{myclaim}
    The proof of Claim~\ref{claim:Y_i:boundcost2} is analogous to the proof of
    Claim~\ref{claim:Y_i:boundcost} in
    Lemma~\ref{lemma:onerst:refinedbound:C*}. However, for each $e \in \sce
    \setminus \varphi_i (E(T_i^S))$, we have $w(\varphi_i^{-1}(e)) \leq
    \alpha_{j+1} w(e)$.  Here $\alpha_{j}$ is the largest $\alpha_j \in
    \Alphas$ satisfying $\alpha_j \leq \alpha_e$.  Thus we can bound the cost
    of $\{ \varphi_i^{-1}(e) \mid e \in \sce \setminus \varphi_i (E(T_i^S))\}$
    by $\sum_{j = i}^{i-1} \alpha_{j+1} \beta_j$.

    Finally, since the algorithm computes a $\lambda$-approximate solution to the 2-ECSS instance,
    we have
    \begin{align*}
	    w(Z_i^C) & \leq w(T_i^S) + \lambda \cdot w(Y_i)\\
	    & \leq \left( \left( 1 - \sum_{j=0}^{i-1} \beta_j - \sum_{j=0}^{N} \gamma_j - \xi  + \sum_{j=i}^{N} \frac{\gamma_j}{\alpha_j}\right) + \left( \sum_{j=1}^{i} (\lambda + \lambda \alpha_{j+1} ) \beta_j + \sum_{j=0}^N \lambda \gamma_j + \lambda \xi \right) \right) \cdot \OPT(I)\\
	    & \leq \left( 1 + \sum_{j = 0}^{i -1} (\lambda + \lambda \alpha_{j+1} -1) \beta_j + (\lambda -1)\cdot \sum_{j=0}^{N} \gamma_j + \sum_{j=i}^{N} \frac{1}{\alpha_j} \gamma_j  + (\lambda -1)\cdot \xi\right) \cdot\OPT(I)\enspace.
    \end{align*}
    which concludes the proof.
\end{proof}

Next, we bound the cost of the solution $Z^\algoA$ computed by Algorithm \algoA.

\begin{lemma}
\label{lemma:onerst:refinedbound:A}
  Suppose we run Algorithm~\ref{alg:onerst:approx} with threshold values
  $\Alphas = \{ \alpha_i \}_{0 \leq i \leq N}$.  Let $Z^\algoA$ be the solution
  computed by Algorithm \algoA in Algorithm~\ref{alg:onerst:approx}. Then
  \begin{align*}
    w(Z^A) \leq \left( \lambda + \lambda \cdot \sum_{j = 0}^{N} 
     \alpha_{j+1} \beta_j \right) \cdot\OPT(I)\enspace.
  \end{align*}
\end{lemma}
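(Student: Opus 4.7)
The plan is to mirror the proof of Lemma~\ref{lemma:onerst:refinedbound:A*} from Section~\ref{sec:onerst:5/2}, adapted to the coarser partition used here in which the $\alpha_i$'s need not coincide with threshold values. First I would take the $\alpha_{N+1}$-MST $T_{N+1}$ (note $\alpha_{N+1}=1$, so $T_{N+1}$ is a plain MST with respect to $w$) together with an $\alpha_{N+1}$-monotone exchange bijection $\varphi_{N+1}:E(T_{N+1})\to E(T)$, chosen consistently with $\varphi_1,\ldots,\varphi_N$ via Corollary~\ref{cor:onerst:charging}. I would then define the candidate 2-ECSS
\[
Y_A \;:=\; Z^* \cup \bigl\{\varphi_{N+1}^{-1}(e) \;\big|\; e\in \sce\bigr\},
\]
and argue feasibility exactly as in Claim~\ref{claim:Y_A:feasibility}: $(V,Y_A)$ is connected because it contains $Z^*$, and each $e\in \sce$ lies on a cycle in $Y_A$ because $e$ together with $\varphi_{N+1}^{-1}(e)$ belongs to the fundamental cycle of $\varphi_{N+1}^{-1}(e)$ in $T+\varphi_{N+1}^{-1}(e)$ (taking the parallel unsafe copy whenever $\varphi_{N+1}^{-1}(e)=e$ would be the trivial preimage).

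For the cost bound I would split $w(Y_A)\le w(Z^*)+w(X_1)$ with $X_1:=\{\varphi_{N+1}^{-1}(e)\mid e\in\sce\}$, and bound $w(X_1)$ edgewise. The key point, and the place where the bounded-weight setting differs from Section~\ref{sec:onerst:5/2}, is to show that for $e\in E^{\bar F}_j$ (so the threshold $\alpha_e$ lies in $(\alpha_j,\alpha_{j+1}]$) one has
\[
w\bigl(\varphi_{N+1}^{-1}(e)\bigr) \;\le\; \alpha_{j+1}\,w(e).
\]
This is obtained by invoking the consistency of the bijections from Corollary~\ref{cor:onerst:charging}: since $e\in\sce$ is safe with $\alpha_e\le \alpha_{j+1}$, there exists an $\alpha_{j+1}$-MST $T_{j+1}$ together with an $\alpha_{j+1}$-monotone exchange bijection $\varphi_{j+1}:E(T_{j+1})\to E(T)$ that agrees with $\varphi_{N+1}$ on the relevant edges; applying case~3 of Definition~\ref{def:alphamonotone} to the unsafe edge $\varphi_{j+1}^{-1}(e)$ paired with the safe edge $e$ yields the claimed inequality. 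Summing over $e\in\sce$, grouping edges by their $\beta_j$-class, gives $w(X_1)\le \sum_{j=0}^{N}\alpha_{j+1}\beta_j\cdot \OPT(I)$, so $w(Y_A)\le \bigl(1+\sum_{j=0}^{N}\alpha_{j+1}\beta_j\bigr)\OPT(I)$; multiplying by the approximation factor $\lambda$ of the 2-ECSS subroutine yields the stated bound.

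The main obstacle is the indexing of the monotonicity constant: in Section~\ref{sec:onerst:5/2} the bound uses $\alpha_e$ exactly because $\alpha_e$ is itself in $\Alphas$, whereas here $\alpha_e$ need only lie between two consecutive $\alpha_j$'s and we must round it up to $\alpha_{j+1}$. Making this rounding rigorous requires carefully chaining the consistency property of Corollary~\ref{cor:onerst:charging} so that $\varphi_{N+1}^{-1}(e)$ can be identified with $\varphi_{j+1}^{-1}(e)$ on the intersection of the two trees; everything else is essentially a bookkeeping adaptation of the argument already carried out for Lemma~\ref{lemma:onerst:refinedbound:A*}.
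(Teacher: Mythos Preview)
Your proposal is essentially the same as the paper's proof: both construct the candidate 2-ECSS $Y_A = Z^* \cup \{\varphi^{-1}(e)\mid e\in\sce\}$, invoke the feasibility argument from Claim~\ref{claim:Y_A:feasibility}, and bound $w(X_1)$ edgewise by rounding the threshold $\alpha_e$ up to the next grid point $\alpha_{j+1}$. The only cosmetic difference is that the paper uses $\varphi_N$ rather than your $\varphi_{N+1}$, and it asserts the inequality $w(\varphi_N^{-1}(e))\le\alpha_{j+1}w(e)$ directly instead of routing through the consistency of Corollary~\ref{cor:onerst:charging}; your more explicit justification via $\varphi_{j+1}$ is a reasonable elaboration of what the paper leaves implicit.
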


\begin{proof}
	The proof is similar to the one of Lemma \ref{lemma:onerst:refinedbound:A*}. 
	The only difference is an index shift in the bound of the following claim, compared to Claim~\ref{claim:Y_A:feasibility}
	of Lemma \ref{lemma:onerst:refinedbound:A*}.
	\setcounter{myclaim}{0}
	\begin{myclaim}
		\label{claim:Y_A:feasibility2}
		$Y_A \coloneqq Z^* \cup \{\varphi^{-1}_N(e)) \mid e \in \sce\}$
		is a feasible solution to the 2-ECSS instance of cost at most
		\[
			w(Y_A) \leq \left( 1 + \sum_{j = 0}^{N} 
			\alpha_{j+1} \beta_j \right) \cdot\OPT(I).
		\]
	\end{myclaim}
	\begin{proof}
	The proof for feasibility can be found in Claim~\ref{claim:Y_A:feasibility}
	of Lemma \ref{lemma:onerst:refinedbound:A*}.

	It remains to bound the cost of $Y_A$.
	We partition $Y_A$ into two edge-disjoint sets
	\[
		Y_A = Z^* \cup X_1,
	\]
	where $X_1 = \bigcup_{e \in \sce} \varphi^{-1}_N(e)$ 
	and bound the cost of each part individually.
	Clearly we have $w(Z^*) = \OPT$.
	To bound the cost of $X_1$, observe that for some $\varphi_N^{-1}(e) \in X_1$ we have
	$w(\varphi_N^{-1}(e)) \leq \alpha_{j+1} w(e)$, where $\alpha_j$ is the largest $\alpha \in W$ satisfying $\alpha_j \leq \alpha_e$.
	We can bound the cost of $X_1$ by

	\begin{align*}
		w(X_1) & = \sum_{\varphi_N^{-1}(e) \in X_1} w(e) = \sum_{\varphi_N^{-1}(e) \in X_1} \sum_{\text{largest } j: \alpha_e \geq \alpha_j} w(\varphi_N^{-1}(e))\\
		& \leq \sum_{\varphi_N^{-1}(e) \in X_1} \sum_{\text{largest } j: \alpha_e \geq \alpha_j} \alpha_{j+1} w(e) = \sum_{j=0}^N \alpha_{j+1} \beta_j \cdot \OPT(I).
	\end{align*}

	\noindent Combining both bounds we obtain
	\begin{align*}
		w(Y_A) & \leq w(Z^*) + w(X_1) \leq  \left( 1 +  \sum_{j = 0}^{N} 
		\alpha_{j+1} \beta_j \right) \cdot\OPT(I).
	\end{align*}
	\end{proof}
	Since the algorithm computes a $\lambda$-approximation, we have that 
\begin{align*}
	w(Z^A) & \leq \lambda w(Y_A) \leq \left( \lambda + \lambda \cdot \sum_{j = 0}^{N} 
	\alpha_{j+1} \beta_j \right) \cdot\OPT(I).
\end{align*}
\end{proof}

Finally, we bound the cost of the solution output by Algorithm \algoB. 
  \setcounter{myclaim}{0}
\begin{lemma}
  \label{lemma:onerst:refinedbound:B}
  Suppose we run Algorithm~\ref{alg:onerst:approx} with threshold values $\Alphas = \{ \alpha_i \}_{0 \leq i \leq N}$.
  For $1 \leq i \leq N$ let $Z_i^\algoB$ be the solution 
  computed by Algorithm \algoB with scaling factor $\alpha_i$ in Algorithm~\ref{alg:onerst:approx}. Then
  \begin{align*}
    &w(Z_i^\algoB) \leq \left( 1 +  \sum_{j = 0}^{i-1} (\tau -1 + \alpha_{j+1}) \beta_j +  (\tau -1) \cdot \sum_{j=0}^{N} \gamma_j + \sum_{j=i}^{N} \frac{\gamma_j}{\alpha_j} + \sum_{j=0}^{i-1} \gamma_j + (\tau -1) \cdot \xi  \right) \cdot \OPT(I).
  \end{align*}
\end{lemma}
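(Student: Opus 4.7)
The plan is to mirror the proof of Lemma~\ref{lemma:onerst:refinedbound:C} as closely as possible, with the main change being that Algorithm~$\algoB$ augments $T_i' = T_i/(E(T_i) \cap \safe{F})$ via a $\tau$-approximate \tap solver rather than via a $\lambda$-approximate 2-ECSS solver. Since Algorithm~$\algoB$ buys $T_i$ entirely and then augments it, I would start from
$$w(Z_i^B) \leq w(T_i) + \tau \cdot w(A^*)$$
for any feasible augmentation $A^* \subseteq E(G)$ of the \tap instance on $T_i'$, and then separately bound $w(T_i)$ and choose a suitable $A^*$.

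To bound $w(T_i) = w(T_i^S) + w(D_i) + w(F_i)$, I would reuse Claim~\ref{claim:T_i^S:boundcost2} verbatim for the safe part, obtaining the same expression $w(T_i^S) \leq \bigl(1 - \sum_{j=0}^{i-1}\beta_j - \sum_j \gamma_j - \xi + \sum_{j=i}^N \gamma_j/\alpha_j\bigr)\OPT(I)$ as in the proof of Lemma~\ref{lemma:onerst:refinedbound:C}. For the unsafe parts I would invoke properties~(2) and~(3) of Definition~\ref{def:alphamonotone} together with the threshold-consistent bijections from Corollary~\ref{cor:onerst:charging}: for $e \in D_i$ with threshold in $(\alpha_j, \alpha_{j+1}]$, necessarily $j \leq i-1$, the image $\varphi_i(e)$ lies in $E'$ and satisfies $w(e) \leq \alpha_{j+1} w(\varphi_i(e))$, summing to $w(D_i) \leq \sum_{j=0}^{i-1} \alpha_{j+1}\beta_j \cdot \OPT(I)$; an analogous argument for $F_i$ yields $w(F_i) \leq \sum_{j=0}^{i-1} \gamma_j \cdot \OPT(I)$.

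Next, I would construct the feasible \tap augmentation
$$A^* := \varphi_i(D_i) \;\cup\; \bigcup_{k=1}^{r} E(C_k),$$
where $C_1,\dots,C_r$ are the 2-edge-connected components of $Z^*$. Feasibility holds because each $e \in D_i$ is covered by $\varphi_i(e) \in E'$, while each $e \in F_i$ is covered either by its image $\varphi_i(e) \in T \setminus E' \subseteq \bigcup_k E(C_k)$ when $\varphi_i(e) \neq e$, or by an edge of $E(C_k) - e$ within the 2-edge-connected component $C_k$ containing $e$ otherwise. Since $\varphi_i(D_i) = E' \setminus \varphi_i(O_i)$ consists precisely of the $E'$-edges with threshold at most $\alpha_i$, this gives $w(A^*) \leq \bigl(\sum_{j=0}^{i-1}\beta_j + \sum_j \gamma_j + \xi\bigr)\OPT(I)$, mirroring Claim~\ref{claim:Y_i:boundcost2} without the $\varphi_i^{-1}(\bar f)$ contribution (which is not needed here because we only need to cover, not make 2-edge-connected).

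Combining these bounds into $w(T_i) + \tau\, w(A^*)$ and simplifying using the identity $\sum_j \beta_j + \sum_j \gamma_j + \xi = 1$ from property~\ref{prop:param:4} will produce exactly the claimed expression after grouping coefficients of $\beta_j$, $\gamma_j$, and $\xi$. The main obstacle will be establishing $w(F_i) \leq \sum_{j=0}^{i-1}\gamma_j \cdot \OPT(I)$ rather than the weaker bound $\sum_{j=0}^{N}\gamma_j \cdot \OPT(I)$ that follows directly from $w(F_i) \leq w(\varphi_i(F_i)) \leq w(T\setminus E')$. This requires exploiting the consistency between the $\varphi_k$'s (Corollary~\ref{cor:onerst:charging}) to argue that under a threshold-consistent choice of $\varphi_i$, each $e \in F_i$ with threshold $\alpha_e \in (\alpha_j,\alpha_{j+1}]$ satisfies $j \leq i-1$ and is paired with an image in $T \setminus E'$ whose threshold is at most $\alpha_i$, so that $\varphi_i(F_i)$ only charges the $\gamma_j$'s with $j \leq i-1$.
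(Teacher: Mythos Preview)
Your proposal is correct and follows essentially the same route as the paper's proof. In particular, your augmentation $A^* = \varphi_i(D_i)\cup\bigcup_k E(C_k)$ coincides with the paper's $Y_i = (\sce\setminus\varphi_i(E(T_i^S)))\cup\bigcup_k E(C_k)$, since $\varphi_i(D_i)=\sce\setminus\varphi_i(O_i)=\sce\setminus\varphi_i(E(T_i^S))$; the decomposition $w(T_i)=w(T_i^S)+w(D_i)+w(F_i)$, the reuse of the $w(T_i^S)$ bound from Lemma~\ref{lemma:onerst:refinedbound:C}, and the step you flag as the main obstacle ($w(F_i)\le\sum_{j=0}^{i-1}\gamma_j\cdot\OPT(I)$, handled via the consistency of Corollary~\ref{cor:onerst:charging}) are exactly what the paper does as well.
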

\begin{proof}
  Let $T_i^S$ be the safe edges of the tree $T_i$ and let $\varphi_i: E(T_i)
  \rightarrow E(T)$ be an $\alpha_i$-monotone exchange bijection, where $T$ is
  a spanning tree of the optimal solution $Z^*$ to the instance $(G, w,
  \safe{F})$.
 
  By contracting each edge of $T_i^S$ in $T_i$ and $G$ we obtain the tree $T_i'$ and the 
  graph $G^S_i \coloneqq G / E(T_i^S)$. 
  Algorithm \algoB computes a $\tau$-approximate solution to the instance
  $(G^S_i, T_i')$ of \WTAP.
  \setcounter{myclaim}{0}
  \begin{myclaim}
    \label{claim:H':feasibility3}
    The set
    \[
      Y_i \coloneqq ( \sce \setminus \varphi_i (E(T_i^S))) \cup \bigcup_{1 \leq j \leq r} E(C_j)
    \]
    of edges is a feasible solution to the \WTAP instance $(G^S_i. T_i')$.
  \end{myclaim}
  \begin{proof}
    Clearly $(V, Y_i)$ is a connected graph. 
    It remains to argue that each edge $e \in Y_i$ is contained in some cycle. This is
    certainly true for each edge $e$ of a component $C_j$, $1 \leq j \leq r$.
    It remains to show that the edges in $\sce \setminus
    \varphi_i (E(T_i^S))$ are contained in some cycle
    of $G^S_i$. Since $\varphi_i$ is an exchange bijection, an edge $e \in \sce \setminus \varphi_i(E(T_i^S))$ and its preimage $\varphi^{-1}(e)$ are on a cycle in $E(T) \cup \{e \}$.
    This concludes the proof.
  \end{proof}
  
  We now bound the cost of $Z_i^\algoB$.  By Claim \ref{claim:H':feasibility3} we
  have that $T_i \cup Y_i$ is a feasible solution to the \onerst instance
  $(G, w, \safe{F})$. 
  The algorithm then returns in polynomial time a
  solution $Z_i^\algoB$ of cost at most $w(Z_i^\algoB) \leq w(T_i) + \tau w(Y_i)$.  We
  first bound the cost of each edge of $T_i$ as follows.
  \begin{myclaim}
    \label{claim:T_i:boundcost3}
    Let $e \in E(T_i)$ and let $\alpha_e$ be its threshold value. Then we have
    \[
      w(e) \leq 
      \begin{cases}
	\frac{1}{\alpha_j} \cdot w( \varphi_i (e)) & \text{if $\varphi_i(e) \notin \sce$,}\\
	\alpha_{j+1} \cdot w(\varphi_i(e)) & \text{if } \varphi_i(e)\in \sce \text{ and } e \in F\text{, and }\\
	w( \varphi_i(e))  & \text{otherwise,}\enspace
      \end{cases}
    \]
    where $\alpha_j$ is the largest $\alpha \in \Alphas$ satisfying $\alpha \leq \alpha_e$. 
  \end{myclaim}

    \begin{proof}
It remains to prove $w(e) \leq \alpha_{j+1} w(\varphi_i(e))$ if $\varphi_i(e)\in \sce$ and $e \in F$ since the other part is proven in Claim~\ref{claim:T_i^S:boundcost2} of Lemma \ref{lemma:onerst:refinedbound:C}. Since $e$ is an unsafe edge and $\varphi_i(e) \in E'$, we have that $w(e) \leq \alpha_e w(\varphi_i(e))$. Since $e$ is contained in $T_j$, but not contained in $T_{j+1}$ (by the definition of $j$), we have that $\alpha_e \leq \alpha_{j+1}$.
  \end{proof} 
Recall that by Lemma \ref{lemma:onerst:refinedbound:C} we have 
\[
w(T_i^S) \leq \left(\sum_{j=i}^{N} \beta_{j}+ \sum_{j=i}^{N} \frac{\gamma_j}{\alpha_j}\right) \cdot \OPT(I)
\]
According to Claim~\ref{claim:T_i:boundcost3} we have for $1 \leq i \leq N$ that
    \begin{align*}
      w(T_i \cap F) & = \sum_{e \in T_i \cap F} w(e)\\
      & = \sum_{e \in T_i \cap F} \sum_{\text{largest } j: \alpha_e \geq \alpha_j} w(e)\\
      & = \sum_{e \in T_i \cap F : \varphi_i(e) \in E'} \sum_{\text{largest } j: \alpha_e \geq \alpha_j} w(e) + \sum_{e \in T_i \cap F : \varphi_i(e) \notin E'} \sum_{\text{largest } j: \alpha_e \geq \alpha_j} w(e)\\
      & \overset{\text{Claim \ref{claim:T_i:boundcost3}}}{\leq} \sum_{e \in T_i \cap F : \varphi_i(e) \in E'} \sum_{\text{largest } j: \alpha_e \geq \alpha_j} \alpha_{j+1} w(\varphi_i(e)) + \sum_{e \in T_i \cap F : \varphi_i(e) \notin E'} \sum_{\text{largest } j: \alpha_e \geq \alpha_j} w(\varphi_i(e))\\
      & \leq \left(\sum_{j=0}^{i-1} \alpha_{j+1} \beta_{j} + \sum_{j=0}^{i-1} \gamma_j\right) \cdot \OPT(I)\enspace.
    \end{align*}
Thus we have 
\begin{align*}
    w(T_i) & = w(T_i^S) + w(T_i \cap F) \leq \left(\sum_{j=0}^{i-1} \alpha_{j+1} \beta_{j} + \sum_{j=i}^{N} \beta_{j}  + \sum_{j=0}^{i-1} \gamma_j + \sum_{j=i}^{N} \frac{\gamma_j}{\alpha_j}\right) \cdot \OPT(I) \\
    		& = \left( 1 + \sum_{j=0}^{i-1} (\alpha_{j+1} -1) \beta_{j} - \sum_{j=0}^{N} \gamma_j - \xi  + \sum_{j=0}^{i-1} \gamma_j + \sum_{j=i}^{N} \frac{\gamma_j}{\alpha_j}\right) \cdot \OPT(I)\enspace,
\end{align*}
where the last equality holds due to $\xi + \sum_{j = 0}^N \beta_j + \sum_{j=0}^N \gamma_j = 1$.
We additionally bound the cost of $Y_i$.
  \begin{myclaim}
    \label{claim:Y_i:boundcost3}
  \begin{align*}
    w(Y_{i}) \leq \left( \sum_{j=0}^{i-1} \beta_j + \sum_{j=0}^N \gamma_j + \xi \right) \cdot\OPT(I)\enspace.
  \end{align*}
\end{myclaim}
\begin{proof}
We need to bound the cost of
    \[
      Y_i = \left( \sce \setminus \varphi_i (E(T_i^S)) \right) \cup \bigcup_{1 \leq j \leq r} E(C_j).
    \]
We first bound the cost of $\sce \setminus \varphi_i (E(T_i^S))$.
According to the definition of $\beta_0, \beta_1, \ldots, \beta_N$ we can bound the cost of 
$\bigcup_{e \in \sce \setminus \varphi_i (E(T_i^S))} e$ by $\sum_{j = 1}^{i-1} \beta_j$.
Additionally we bound the cost of $\bigcup_{1 \leq j \leq r} E(C_j)$ 
by $(\xi + \sum_{j=1}^N \gamma_j )\cdot \OPT$.
Putting things together we obtain 
  \begin{align*}
    w(Y_{i}) \leq \left( \sum_{j=0}^{i-1} \beta_j + \sum_{j=0}^N \gamma_j + \xi \right) \cdot\OPT(I)\enspace.
  \end{align*}
\end{proof}
Finally, since the algorithm computes a $\tau$-approximate solution for the \WTAP instance,
we have
\begin{align*}
      w(Z_i^\algoB) & \leq w(T_i) + \tau \cdot w(Y_i)\\
      & \leq \left( 1 + \sum_{j=0}^{i-1} (\alpha_{j+1} -1) \beta_{j} - \sum_{j=0}^{N} \gamma_j - \xi  + \sum_{j=0}^{i-1} \gamma_j + \sum_{j=i}^{N} \frac{\gamma_j}{\alpha_j}\right) + \left( \sum_{j=0}^{i-1} \beta_j + \sum_{j=0}^N \gamma_j + \xi \right) \cdot \OPT(I)\\
      & \leq \left( 1 +  \sum_{j = 0}^{i-1} (\tau -1 + \alpha_{j+1}) \beta_j +  (\tau -1) \cdot \sum_{j=0}^{N} \gamma_j + \sum_{j=i}^{N} \frac{\gamma_j}{\alpha_j} + \sum_{j=0}^{i-1} \gamma_j + (\tau -1) \cdot \xi  \right) \cdot \OPT(I).
\end{align*}
  which concludes the proof.
\end{proof}

With a similar argument as in Section \ref{sec:onerst:5/2} we can argue that
$\xi = 0$.  
For completeness we state it here again.
  Since $\lambda, \tau \geq 1$, each $\beta_j, \gamma_j, j \in [N]$ as well as $\xi$
  has a positive coefficient in the bounds of the lemmas
  \ref{lemma:onerst:refinedbound:C}, \ref{lemma:onerst:refinedbound:A} and \ref{lemma:onerst:refinedbound:B}.
  Moreover, since we maximize over $\beta_1, \beta_2, \ldots, \beta_N$,
  $\gamma_1, \gamma_2, \ldots, \gamma_N$ and $\xi$ we may assume $\xi = 0$.  To
  see this, suppose we have an optimal choice of the variables where $\xi > 0$.
  Then, consider the following new variables.  Let $\beta_j' = \beta_j$ for $j
  \in [N], \gamma_j' = \gamma_j$ for $j \in [N-1]$, $\gamma_N' = \gamma_N + \xi$
  and $\xi' = 0$. Now observe that the value of the minimum over $w(Z^\algoA)$ as well as
  $w(Z^\algoC_i)$ and $w(Z^\algoB_i), i \in [N]$ is at least as large for the new variables as for
  the old ones.  Thus we can assume that $\xi = 0$ and have 
  \begin{equation*}
    \sum_{j=1}^{N} \beta_j + \sum_{j=1}^{N} \gamma_j = 1.
  \end{equation*}
Hence we found suitable functions for $f^\algoA(\cdot),\,
f_i^\algoB(\cdot),\, f_i^\algoC(\cdot)$.  These are given in lemmas
\ref{lemma:onerst:refinedbound:A}, \ref{lemma:onerst:refinedbound:B} and
\ref{lemma:onerst:refinedbound:C}, respectively (with $\xi$ set to 0). 

\begin{proof}[Computational proof of Theorem \ref{thm:flexbounded}]
We choose $N = 60$ and let $\tau = 1.5$ and $\lambda = 2$ and 
\begin{align*}
    (\alpha_i)_{0 \leq i \leq N+1} &= 0, 0.05, 0.1, 0.15, 0.2, 0.25, 0.3, 0.35, 0.4, 0.45, 0.5, 0.51, 0.52, 0.53, 0.54, 0.55, 0.56, 0.57, \\ 
   & 0.58, 0.59, 0.60, 0.61, 0.62, 0.63, 0.64, 0.65, 0.67, 0.68, 0.69, 0.7, 0.71, 0.72, 0.73, 0.74, 0.75, 0.76, \\
   & 0.77, 0.78, 0.79, 0.8, 0.81, 0.82, 0.83, 0.84, 0.85, 0.86, 0.87, 0.89, 0.9, 0.91, 0.92, 0.93, 0.94, 0.95,\\
      & 0.96, 0.97, 0.98, 0.99, 1 \enspace.
\end{align*} 
With these choices we obtain from~\eqref{eq:minmaxmin} the following optimization problem.
  \begin{equation}
  \label{optforbound}
  \begin{aligned}
   \text{maximize } & z \\
   \text{subject to } & z \leq  \left( \lambda + \lambda \cdot \sum_{j = 0}^{N} 
     \alpha_{j+1} \beta_j \right) \\
   & z \leq \left( 1 +  \sum_{j = 0}^{i-1} (\tau -1 + \alpha_{j+1}) \beta_j +  (\tau -1) \cdot \sum_{j=0}^{N} \gamma_j + \sum_{j=i}^{N} \frac{\gamma_j}{\alpha_j} + \sum_{j=0}^{i-1} \gamma_j \right) \text{ for } 1 \leq i \leq N\\
   & z \leq \left( 1 +  \sum_{j = 0}^{i-1} (\lambda -1 + \lambda \alpha_{j+1}) \beta_j +  (\lambda -1) \cdot \sum_{j=0}^{N} \gamma_j + \sum_{j=i}^{N} \frac{\gamma_j}{\alpha_j} \right) \text{ for } 1 \leq i \leq N\\
   & \sum_{j = 1}^{N} \beta_j + \gamma_j = 1\\
   & z \geq 1 \\
   &\beta_j, \gamma_j \in [0, 1] \text{ for } 1 \leq j \leq N.
   \end{aligned}
  \end{equation}

  The constraints are given by the bounds from
  lemmas~\ref{lemma:onerst:refinedbound:C} --
  \ref{lemma:onerst:refinedbound:B}. The solution of the problem is an upper
  bound on the approximation ratio of Algorithm \ref{alg:onerst:approx} due to
  Lemma \ref{lemma:onerst:bestalpha}.
  We obtain a computational upper bound of \boundthreehalf and lower bound of
  \boundthreehalflow on the optimal value of the non-linear program above using
  the NLP solver baron~\cite{baron}. 
\end{proof}

\section{Algorithms~\algoA and~\algoB give a $14/5$-approximation}
\label{sec:onerst:2.8}

In this section we give an analytic upper bound of $14/5$ on the approximation
ratio of Algorithm~\ref{alg:onerst:approx}. In our analysis we consider just
the contribution of algorighms \algoA and \algoB and ignore the contribution of Algorithm \algoC.
We obtain an analytic upper bound on the value of the min-max-min optimization problem
\eqref{eq:minmaxmin} in terms of $\tau$ and $\lambda$, which gives in turn an upper bound on the
approximation ratio of Algorithm~\ref{alg:onerst:approx}. Note that the
combined worst-case of algorithma \algoA and \algoC is much better than their
individual ratios.
One of the key insights is that the two algorithms exhibit their worst-case performance if
Algorithm \algoB runs with a scaling factor of $1/2$.

Let us fix some $\alpha \in [0, 1]$ and let $T_\alpha$ be an $\alpha$-MST of $G$.
We consider the graph $G_\alpha = (V', E_\alpha)$ obtained from $T_\alpha$ by
identifying for $1 \leq j \leq r$ the vertex set of the 2-edge-connected
component $C_j$ of the optimal solution $Z^*$ with a single vertex, discarding
loops but not parallel edges. 
Since $T_\alpha$ is a tree, the graph $G_\alpha$ is connected. Let $T$ be a
spanning tree of $Z^*$ and let $\varphi: T_\alpha \rightarrow T$ be an
$\alpha$-monotone exchange bijection, which exists according to
Lemma~\ref{lem:onerst:existence:alpha:mon:exchange:bijection}.  Since every
spanning tree of $Z^*$ contains $\sce$ we have that  $\sce \subseteq \varphi
(E_\alpha)$.

We partition the edge set of $T_\alpha$ into four parts $D_\alpha$,
$O_\alpha$, $F_\alpha$, and $S_\alpha$ as follows.
\begin{itemize}
    \item $D_\alpha \coloneqq \{ e \in E(T_\alpha) \cap F \mid \varphi(e) \in E' \}$
    \item $O_\alpha \coloneqq \{ e \in E(T_\alpha) \cap \safe{F} \mid \varphi(e) \in E' \}$
    \item $F_\alpha \coloneqq \{ e \in E(T_\alpha) \cap F \mid \varphi(e) \in E(T) \setminus E' \}$
    \item $S_\alpha \coloneqq \{ e \in E(T_\alpha) \cap \safe{F} \mid \varphi(e) \in E(T) \setminus E' \}$
\end{itemize}

Note that the partition of $E(T_\alpha)$
depends on the optimal solution $Z^*$ and the spanning
tree $T$ of $Z^*$, so we cannot expect to compute them efficiently. 
We now define the following variables. 
\begin{align*}
& b_\alpha \coloneqq \frac{w(\varphi(O_\alpha))}{\OPT(I)} \text{ and } c_\alpha \coloneqq \frac{w(\varphi(S_\alpha))}{\OPT(I)}\\
& b_0 \coloneqq \frac{w(\sce) - w(\varphi(O_\alpha))}{\OPT(I)} \text{ and } c_0 \coloneqq \frac{w(T - \sce) - w(\varphi(S_\alpha))}{\OPT(I)}\\
& \xi = \frac{\OPT(I) - w(T)}{w(\OPT(I)}
\end{align*}
So $b_\alpha$ (resp., $c_\alpha$) is the fraction of the weight of $\OPT(I)$
of the safe edges in $T_\alpha$ that are mapped to safe cut edges in $T$ 
(repsp., edges of $E(T) - \sce$).
The value $b_0$ (resp., $c_0$) represent the fraction of the weight of $\OPT(I)$
of \unsafe edges in $T_\alpha$ that are charged to safe cut edges in $T$ 
(resp., edges of $E(T) - \sce$).
Finally, $\xi$ is the fraction of the weight of $\OPT(I)$ that is not contributed by $E(T)$. 

Observe that the following holds. 
\begin{enumerate}
	\item $b_\alpha, c_\alpha, b_0, c_0, \xi \in [0, 1]$,
	\item $b_\alpha + b_0 = w(\sce)/\OPT(I)$,
	\item $c_\alpha + c_0 = w(T - \sce)/{\OPT(I)}$, and
	\item $b_0 + b_\alpha + c_0 + c_\alpha + \xi = 1$.
\end{enumerate}

We use the properties of $\alpha$-monotone exchange bijections to show that the
minimum weight of the solution computed by Algorithm \algoB run with scaling
factor $\alpha$ and the solution computed by Algorithm \algoA can be bounded in
terms of $b_\alpha, b_0$, $c_\alpha, c_0, \xi$, and $\OPT(I)$ as follows.

\begin{lemma}
  \label{lemma:approx:cost}
  Suppose we run Algorithm~\ref{alg:onerst:approx} with a single scaling factor
  $\alpha \in [0,1]$ (i.e., $W = \{ \alpha \}$) and let $Z_\alpha^B$ be the solution computed by
  Algorithm $\algoB$ with scaling factor $\alpha$ in Algorithm~\ref{alg:onerst:approx}. Then 
  \[
        w(Z_\alpha^B) \leq \left( (\tau + \alpha) b_0 + b_\alpha + (\tau + 1) c_0 + (\tau + \frac{1}{\alpha}) c_\alpha + \tau \cdot \xi \right) \cdot\OPT(I)\enspace.
  \]
\end{lemma}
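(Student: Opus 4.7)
The plan is to follow the template of Lemma~\ref{lemma:onerst:refinedbound:C*} but adapted to Algorithm~B. Since Algorithm~B with scaling factor $\alpha$ keeps the $\alpha$-MST $T_\alpha$ and augments it with a $\tau$-approximate solution to a \WTAP instance on $T_\alpha / T_\alpha^S$ (where $T_\alpha^S$ denotes the safe edges of $T_\alpha$), it suffices to exhibit a feasible \WTAP augmentation $Y$ and bound
\[
  w(Z_\alpha^B) \;\leq\; w(T_\alpha) + \tau \cdot w(Y).
\]
Both summands will be controlled via the $\alpha$-monotone exchange bijection $\varphi : E(T_\alpha) \to E(T)$ from Lemma~\ref{lem:onerst:existence:alpha:mon:exchange:bijection} and the four-part partition $E(T_\alpha) = O_\alpha \cup D_\alpha \cup S_\alpha \cup F_\alpha$ introduced prior to the lemma.

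For $w(T_\alpha)$, I bound each part by the worst case of Definition~\ref{def:alphamonotone}: $w(O_\alpha) \leq w(\varphi(O_\alpha)) = b_\alpha \OPT(I)$ (safe-to-safe), $w(D_\alpha) \leq \alpha\, w(\varphi(D_\alpha)) = \alpha b_0 \OPT(I)$ (unsafe-to-safe), $w(F_\alpha) \leq w(\varphi(F_\alpha)) = c_0 \OPT(I)$ (the unsafe starting endpoint guarantees $w(e) \leq w(\varphi(e))$ regardless of the side of $\varphi(e)$, possibly via $w(e) \le \alpha w(\varphi(e))$), and $w(S_\alpha) \leq \tfrac{1}{\alpha} w(\varphi(S_\alpha)) = \tfrac{c_\alpha}{\alpha} \OPT(I)$ (worst case when $\varphi(e) \in F$). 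Summing gives
\[
  w(T_\alpha) \;\leq\; \bigl(b_\alpha + \alpha b_0 + c_0 + c_\alpha/\alpha\bigr)\OPT(I).
\]

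Mirroring the construction in Lemma~\ref{lemma:onerst:refinedbound:C*}, I set $Y := \varphi(D_\alpha) \cup \bigcup_{j=1}^{r} E(C_j)$. To establish feasibility I need to cover each unsafe tree edge of $T_\alpha / T_\alpha^S$, i.e., each $f \in D_\alpha \cup F_\alpha$: if $f \in D_\alpha$ then $\varphi(f) \in \varphi(D_\alpha) \subseteq Y$, whereas if $f \in F_\alpha$ then $\varphi(f) \in E(T) \setminus \sce \subseteq \bigcup_j E(C_j) \subseteq Y$. In either case, the exchange-bijection property supplies a cycle in $T_\alpha + \varphi(f)$ through $f$ which survives the contraction of $T_\alpha^S$, so $\varphi(f)$ covers $f$ in the \WTAP instance. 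Using that $\varphi(D_\alpha) \subseteq \sce$ is disjoint from $\bigcup_j E(C_j)$ and $w(\bigcup_j E(C_j)) = w(Z^*) - w(\sce) = (c_0 + c_\alpha + \xi)\OPT(I)$, one obtains $w(Y) \leq (b_0 + c_0 + c_\alpha + \xi)\OPT(I)$, and combining with the bound on $w(T_\alpha)$ and collecting terms yields the claim. The main subtlety is verifying feasibility of $Y$ for the contracted \WTAP instance; the choice of the slightly wasteful augmentation $\bigcup_j E(C_j)$ (rather than the tighter $\varphi(F_\alpha)$) produces precisely the coefficients $(\tau+1)c_0$, $(\tau+1/\alpha)c_\alpha$ and $\tau\xi$ used in the subsequent min-max-min analysis.
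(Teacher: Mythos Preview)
Your proposal is correct and follows essentially the same approach as the paper. Your augmentation $Y = \varphi(D_\alpha) \cup \bigcup_{j} E(C_j)$ coincides with the paper's $Y_\alpha = \varphi(D_\alpha) \cup (Z^* \setminus \sce)$, and both the bound on $w(T_\alpha)$ via the $\alpha$-monotone exchange bijection and the bound $w(Y)\le (b_0+c_0+c_\alpha+\xi)\OPT(I)$ are derived in the same way; your feasibility argument for the \WTAP instance is in fact spelled out more carefully than the paper's.
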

\begin{proof}
  Let $T_\alpha$ be the $\alpha$-MST computed by Algorithm \algoB and let
  $H_\alpha$ be the solution to the \WTAP instance $(G, T_\alpha', w)$
  computed by Algorithm \algoB. The following bound on $w(T_\alpha)$
  follows in a straight-forward manner from Definition~\ref{def:alphaexchange}.
  \setcounter{myclaim}{0}
  \begin{myclaim}
    We have $w(T_\alpha) \leq \left( \alpha b_0 + b_\alpha + c_0 + c_\alpha / \alpha \right) \cdot \OPT(I)$.
    \label{claim:treecost}
  \end{myclaim}
  \begin{proof}
    Since $T_\alpha$ is an $\alpha$-MST and $\varphi : T_\alpha \rightarrow T$
    is an $\alpha$-monotone exchange bijection, we have that for each edge $e$ of
    $T_\alpha$ exactly one of the following cases applies.
    \begin{enumerate}
      \item If $e \in D_\alpha$ then we have $w(e) \leq \alpha \cdot\varphi(e)$, or
      \item if $e \in O_\alpha \cup F_\alpha$ then we have $w(e) \leq w(\varphi(e))$, or
      \item if $e \in S_\alpha$ then $w(e) \leq \frac{1}{\alpha} w(\varphi(e))$.
    \end{enumerate}
    
    By summing over the above inequalities and applying the definition of $b_0, b_\alpha, c_0, c_\alpha$ we have
\begin{align*}
      w(T_\alpha) & \leq \left( \alpha \cdot w(\varphi (D_\alpha)) + w(\sce \setminus \varphi (D_\alpha)) + \varphi(S_\alpha) / \alpha + \varphi(F_\alpha) \right) \cdot \OPT(I)\\
      & \leq \left( \alpha \cdot b_0 + b_\alpha + c_0 + c_\alpha / \alpha \right) \cdot \OPT(I)
\end{align*}
    as claimed.
  \end{proof}

  It remains to bound the cost of $H_\alpha$. 
  Note that $H_\alpha$ is a $\tau$-approximate solution of an optimal solution for the
  augmentation problem.
  To bound the cost of an optimal augmentation we demonstrate the existence of a feasible augmentation.
  We show that $Y_\alpha \coloneqq \varphi(D_\alpha) \cup \varphi^{-1}(Z^* - \sce)$ is such a feasible augmentation. 
  \begin{myclaim}
    \label{claim:augmentationcost}
    $T_\alpha \cup Y_\alpha$ is a feasible solution to $I$.
  \end{myclaim}
  \begin{proof}
  Clearly $E(T_\alpha) \cup Y_\alpha$ is a connected spanning subgraph since it contains the tree $T_\alpha$.
  	It remains to show that $E(T_\alpha) \cup Y_\alpha -f$ is connected for each $f \in F \cap (E(T_\alpha) \cup Y_\alpha)$.
  	This is clearly true for the edges in $F_\alpha$ since we added $\varphi^{-1}(Z^* - \sce)$.
  	Thus let $f \in D_\alpha$. Since $f \in F$ and $\varphi(f) \in \safe{F}$, we have that $f \neq \varphi(f)$.
  	By the definition of exchange bijections we then have that there is a cycle in $T_\alpha \cup \varphi(f)$ going through $f$. This concludes the proof.
  \end{proof}
By the definition of $b_0, b_\alpha, c_0, c_\alpha$ and $\xi$ we have that
\begin{align*}
w(Y_\alpha) & \leq w(\varphi(D_\alpha)) + w(Z^* - \sce)\\
& \leq w(\varphi(D_\alpha)) + w(\varphi(F_\alpha)) + w(\varphi(S_\alpha)) + \xi \\
& \leq \left( b_0 +  c_0 + c_\alpha + \xi \right) \cdot \OPT(I).
\end{align*}

  Since Algorithm~\ref{alg:onerst:approx} uses a $\tau$-approximation for \tap,
  we have that $ w(H_\alpha) \leq \tau \cdot w(Y_\alpha)$.

  Using these bounds and claims~\ref{claim:treecost} and~\ref{claim:augmentationcost}, we obtain
  \[
        w(Z_\alpha^B) \leq \left( (\tau + \alpha) b_0 + b_\alpha + (\tau + 1) c_0 + (\tau + \frac{1}{\alpha}) c_\alpha + \tau \cdot \xi \right) \cdot\OPT(I)\enspace.
  \]
  as claimed.
\end{proof}

Next, we provide a bound on the cost of the solution returned by Algorithm \algoA.
\begin{lemma}
  \label{lemma:approx:cost:algoA}
  Let $\alpha \in [0, 1]$ and let $Z^A$ be a solution computed by Algorithm $\algoA$ in Algorithm~\ref{alg:onerst:approx}. Then 
  \[
        w(Z^A) \leq \left( \lambda + \lambda \cdot (\alpha \cdot b_0 + b_\alpha) \right) \cdot\OPT(I)\enspace.
  \]
\end{lemma}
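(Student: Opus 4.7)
The plan is to mimic the argument of Lemma~\ref{lemma:onerst:refinedbound:A*}: since Algorithm~\algoA returns a $\lambda$-approximate 2-edge-connected spanning subgraph of $G$, it suffices to exhibit a feasible 2-ECSS whose cost is at most $(1 + \alpha b_0 + b_\alpha) \cdot \OPT(\I)$. The natural candidate is
\[
  Y_A \;\coloneqq\; Z^* \;\cup\; \{\, \varphi^{-1}(e) \mid e \in \sce \,\}\enspace,
\]
where $\varphi: E(T_\alpha) \to E(T)$ is the fixed $\alpha$-monotone exchange bijection. Recall that by construction we may assume $\varphi^{-1}(e) \neq e$ for every safe edge $e$ (each safe edge has a parallel unsafe copy of the same cost), so that $\varphi^{-1}(e)$ genuinely doubles the cut corresponding to the bridge $e \in \sce$ of $Z^*$.

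First I would verify feasibility. Since $Y_A \supseteq Z^*$, the graph $(V, Y_A)$ is connected. The only edges that are bridges in $Z^*$ are the safe cut edges $\sce$, and for each $e \in \sce$ the exchange bijection property guarantees that $T_\alpha - \varphi^{-1}(e) + e$ is a spanning tree, i.e., there is a cycle in $T_\alpha + e$ through both $e$ and $\varphi^{-1}(e)$; this cycle survives in $Y_A$, making $e$ non-bridge. Hence $Y_A$ is 2-edge-connected.

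Next I would bound $w(Y_A)$. Using the partition of $E(T_\alpha)$ into $D_\alpha, O_\alpha, F_\alpha, S_\alpha$, the cut edges $\sce \subseteq E(T)$ split disjointly as $\sce = \varphi(O_\alpha) \dot\cup \varphi(D_\alpha)$ (the sets $F_\alpha$ and $S_\alpha$ map into $E(T) \setminus \sce$). Therefore $w(\varphi(O_\alpha)) = b_\alpha\cdot \OPT(\I)$ and $w(\varphi(D_\alpha)) = b_0 \cdot \OPT(\I)$. For $e \in \varphi(O_\alpha)$, both $e$ and $\varphi^{-1}(e)$ are safe, so $\alpha$-monotonicity gives $w(\varphi^{-1}(e)) \leq w(e)$; for $e \in \varphi(D_\alpha)$, the preimage is unsafe while $e$ is safe, so $\alpha$-monotonicity gives $w(\varphi^{-1}(e)) \leq \alpha \cdot w(e)$. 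Summing,
\[
  w\big(\{\varphi^{-1}(e) : e \in \sce\}\big) \;\leq\; b_\alpha \cdot \OPT(\I) + \alpha \cdot b_0 \cdot \OPT(\I)\enspace,
\]
hence $w(Y_A) \leq (1 + \alpha b_0 + b_\alpha) \cdot \OPT(\I)$.

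Finally, applying the $\lambda$-approximation guarantee of Jain's algorithm used inside Algorithm~\algoA yields $w(Z^A) \leq \lambda \cdot w(Y_A) \leq (\lambda + \lambda(\alpha b_0 + b_\alpha))\cdot \OPT(\I)$, as required. The only delicate point is the feasibility check: one must notice that $\varphi^{-1}(e)$ may coincide with $e$ for a safe edge, which is why the preprocessing step adding parallel unsafe copies is essential. All other steps are routine bookkeeping of the $\alpha$-monotone inequalities against the definitions of $b_\alpha$ and $b_0$.
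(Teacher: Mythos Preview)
Your proposal follows the paper's own proof almost verbatim (and is the same construction as in Lemma~\ref{lemma:onerst:refinedbound:A*}, which you cite). The only cosmetic difference is that the paper doubles the bridges $e\in\varphi(O_\alpha)$ by their parallel unsafe copies $e'$ and the bridges $e\in\varphi(D_\alpha)$ by the preimages in $D_\alpha$, whereas you use $\varphi^{-1}(e)$ uniformly; since $w(\varphi^{-1}(e))\le w(e)=w(e')$ whenever $\varphi^{-1}(e)\in O_\alpha$, your version yields the same cost bound $w(Y_A)\le (1+\alpha b_0+b_\alpha)\,\OPT(\I)$.

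One step deserves more care. You assert that the cycle in $T_\alpha+e$ through $e$ and $\varphi^{-1}(e)$ ``survives in $Y_A$''. This is not literally justified: that cycle uses edges of $T_\alpha$, and $T_\alpha$ is generally \emph{not} contained in $Y_A$. What is actually needed for $e$ to become a non-bridge of $Y_A$ is that $\varphi^{-1}(e)$ crosses the fundamental cut of $e$ in $T$ (equivalently, that $T-e+\varphi^{-1}(e)$ is a tree). The exchange-bijection property only gives you the \emph{other} direction, namely that $T_\alpha-\varphi^{-1}(e)+e$ is a tree, and one-sided exchange bijections are not symmetric in general. The paper's own proof is equally terse at this point, so you are matching it; but the sentence as written does not by itself establish that $Y_A$ is $2$-edge-connected.
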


\begin{proof}
We demonstrate the existence of a feasible solution $Y \subseteq E$ to $I$.
Recap that Algorithm \algoA computes a 2-ECSS solution to $I$ where each safe edge $e \in \safe{F}$
has a parallel edge $e'$.
Let $Y\coloneqq Z^* \cup \varphi(D_\alpha) \cup \{e' : e \in \sce \setminus \varphi(D_\alpha) \}$.
Since $Z^* \subseteq Y$, each edge in $Z^* \setminus \sce$ is contained in a 2-edge connected component.
Furthermore, since $Y$ additionally contains $\varphi(D_\alpha) \cup \{e' : e \in \sce \setminus \varphi(D_\alpha) \}$, each edge in $\sce$ is also contained in a 2-edge connected component.
Thus $Y$ is a 2-ECSS. 

It remains to bound the cost of $Y$.
We bound the cost of each part individually.
Clearly we have $w(Z^*) = \OPT(I)$. 
By the definition of $b_0$ and $b_\alpha$, we have $w(\varphi(D_\alpha)) \leq \alpha \cdot b_0 \cdot \OPT(I)$ and
$w(\{e' : e \in \sce \setminus \varphi(D_\alpha) \}) \leq b_\alpha \cdot \OPT(I)$.

Since Algorithm \algoA uses a $\lambda$-approximation for the 2-ECSS problem, we obtain
  \[
        w(Z^A) \leq \left( \lambda + \lambda \cdot (\alpha \cdot b_0 + b_\alpha) \right) \cdot\OPT(I)\enspace.
  \]
\end{proof}

\begin{proposition}
    Algorithm~\ref{alg:onerst:approx} is an approximation algorithm for \onerst with
    ratio 
     \[ 
     \min \{ 1 + \tau, \frac{\lambda \cdot (4 \tau^2 + \sqrt{1+4 \tau} -2 \tau-1)}{(1- \lambda) \cdot \sqrt{1+4 \tau}+2 \tau^2+(2 \lambda -2) \cdot \tau -1 + \lambda} \}\enspace.
     \]
  \label{prop:28general}
\end{proposition}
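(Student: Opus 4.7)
The plan is to establish each of the two upper bounds in the proposition independently; since Algorithm~\ref{alg:onerst:approx} returns the minimum-weight solution among all computed candidates, the approximation ratio is bounded by the smaller of the two. The bound $1+\tau$ comes from the solution $Z_1^\algoB$ of Algorithm~\algoB with scaling factor $\alpha=1$: substituting $\alpha=1$ into Lemma~\ref{lemma:approx:cost} and applying the identity $b_0+b_\alpha+c_0+c_\alpha+\xi=1$ telescopes the right-hand side to $(1+\tau)-\tau b_\alpha-\xi\leq 1+\tau$, and this candidate is considered by Algorithm~\ref{alg:onerst:approx}.

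For the second bound, the plan is to pick a specific scaling factor $\alpha^*\in[0,1]$ and combine Lemmas~\ref{lemma:approx:cost:algoA} and~\ref{lemma:approx:cost} in a worst-case fashion. For any $\alpha$, the ratio is at most $\max_x\,\min(f^A(x),f^B_\alpha(x))$ over $x=(b_0,b_\alpha,c_0,c_\alpha,\xi)$ with $\sum_i x_i=1$ and $x\geq 0$, where $f^A$ and $f^B_\alpha$ are the bounds from those lemmas. Because among $\{c_0,c_\alpha,\xi\}$ only $c_\alpha$ has the largest coefficient $\tau+1/\alpha$ in $f^B_\alpha$ and none of the three appears in $f^A$, transferring mass from $c_0$ or $\xi$ to $c_\alpha$ only increases $f^B_\alpha$ while leaving $f^A$ unchanged; thus we may assume $c_0=\xi=0$. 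Linearizing via an auxiliary variable $z\leq\min(f^A,f^B_\alpha)$, the resulting LP in $\{b_0,b_\alpha,c_\alpha,z\}$ attains its optimum at a basic feasible solution with exactly two non-zero coordinates among $\{b_0,b_\alpha,c_\alpha\}$.

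Of the three candidate pairs, $(b_0,b_\alpha)$ is dominated because with $c_\alpha=0$ the bound $f^B_\alpha$ never exceeds $\tau+\alpha$, which lies strictly below the formula at $\alpha^*$. The remaining pairs $(b_0,c_\alpha)$ and $(b_\alpha,c_\alpha)$ yield, by setting $f^A=f^B_\alpha$, the balanced values
\[
v_1(\alpha)=\lambda\cdot\frac{1+\alpha+\alpha^2(\tau-1)}{1+\alpha^2(\lambda-1)}, \qquad v_2(\alpha)=\lambda\cdot\frac{2+\alpha(2\tau-1)}{1+\alpha(\lambda+\tau-1)},
\]
respectively. To minimize $\max(v_1(\alpha),v_2(\alpha))$ over $\alpha$, I would equalize the two values; cross-multiplying $v_1(\alpha)=v_2(\alpha)$ produces a cubic in $\alpha$ that factors as $(\tau\alpha^2+\alpha-1)\bigl((\tau-\lambda)\alpha+1\bigr)=0$. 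The unique root in $[0,1]$ comes from the quadratic factor and is $\alpha^*=(\sqrt{1+4\tau}-1)/(2\tau)$, independent of $\lambda$. Substituting $\alpha^*$ back into $v_2$ while using $\tau(\alpha^*)^2+\alpha^*=1$ and $\sqrt{1+4\tau}=2\tau\alpha^*+1$ produces the claimed closed form.

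The main obstacle will be the algebraic verification: guessing the quadratic factor $\tau\alpha^2+\alpha-1$ of the cubic (the linear factor then follows by polynomial division) and simplifying $v_2(\alpha^*)$ into the exact expression $\frac{\lambda(4\tau^2+\sqrt{1+4\tau}-2\tau-1)}{(1-\lambda)\sqrt{1+4\tau}+2\tau^2+(2\lambda-2)\tau-1+\lambda}$ stated in the proposition. A secondary concern is making the LP reduction exhaustive: confirming that pair $(b_0,b_\alpha)$ and any pair involving $c_0$ or $\xi$ yield strictly smaller balanced values at $\alpha^*$, which is a routine dominance check using $\tau+1/\alpha^*\geq\tau+1\geq\tau$ together with $\tau\geq 1$.
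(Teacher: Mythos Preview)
Your proposal is correct and reaches the same conclusion as the paper, but your route to the second bound is organized differently. The paper substitutes $c_\alpha=1-b_0-b_\alpha$, equates $w(Z^\algoA)=w(Z^\algoB_\alpha)$ to solve for $b_0$ in terms of $b_\alpha$, plugs back into $w(Z^\algoA)$, and then sets $\alpha=(\sqrt{1+4\tau}-1)/(2\tau)$; what the paper does not spell out is that at this specific $\alpha$ the resulting expression becomes independent of $b_\alpha$ (the coefficient of $b_\alpha$ is $-(\tau\alpha^2+\alpha-1)/(1+\alpha^2(\lambda-1))$, which vanishes exactly there). Your LP-vertex decomposition into the three two-variable cases, followed by equalizing the balanced values $v_1(\alpha)$ and $v_2(\alpha)$, makes this choice of $\alpha^*$ transparent: the factorization $(\tau\alpha^2+\alpha-1)\bigl((\tau-\lambda)\alpha+1\bigr)=0$ you obtain is precisely the vanishing condition hidden in the paper's substitution. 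So the two arguments are the same computation viewed from different angles; yours is somewhat more systematic about why $\alpha^*$ is the right scaling factor, while the paper's is shorter but leaves the role of $b_\alpha$ implicit.
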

\begin{proof}
The guarantee of Algorithm \ref{alg:onerst:approx} is clearly bounded by the weight of the solution 
computed by Algorithm \algoB. This weight is at most $1 + \tau$, which is the first part of the minimum.
For the second part we know that $w(Z^\algoA) = w(Z^\algoB_\alpha)$.
By lemmas~\ref{lemma:approx:cost} and \ref{lemma:approx:cost:algoA} we have
  \[
        w(Z_\alpha^B) \leq \left( (\tau + \alpha) b_0 + b_\alpha + (\tau + 1) c_0 + (\tau + \frac{1}{\alpha}) c_\alpha + \tau \cdot \xi \right) \cdot\OPT(I)\enspace \text{ and}
  \]

  \[
        w(Z^A) \leq \left( \lambda + \lambda \cdot (\alpha \cdot b_0 + b_\alpha) \right) \cdot\OPT(I)\enspace.
  \]
  Let $\algo(I)$ be the cost of a solution computed by Algorithm~\ref{alg:onerst:approx}.
  An upper bound on the approximation ratio of Algorithm \ref{alg:onerst:approx} is
\begin{equation}
    \begin{aligned}
          \algo(I) \leq \max_{\substack{b_0, b_\alpha, c_0, c_\alpha, \xi \in [0, 1] }} \;\min_{\alpha \in [0, 1]} \qquad&  \{w(Z^\algoA), w(Z^\algoB_\alpha) \}\\
          \text{subject to}\qquad& b_0 + b_\alpha + c_0 + c_\alpha + \xi = 1
    \end{aligned}
    \label{eq:maxmin:proof}
\end{equation}
  It is easy to see that the maximum is obtained if $c_0 = \xi = 0$. 
  By substituting $c_\alpha = 1 - b_0 - b_\alpha$ we obtain

  \[
	  w(Z_\alpha^B) \leq \left( (\tau + \alpha) b_0 + b_\alpha + (\tau + \frac{1}{\alpha}) (1 - b_0 - b_\alpha) \right) \cdot\OPT(I)\enspace.
  \]
  By putting 
  $w(Z^\algoA) = w(Z^\algoB_\alpha)$ we obtain
  $b_0 = \frac{(-b_\alpha \cdot \lambda - \lambda) \cdot \alpha - \alpha \cdot b_\alpha \cdot \tau+b_\alpha \cdot \alpha + \tau \alpha - b_\alpha + 1}{\alpha^2 \cdot \lambda - \alpha^2 + 1}$.
  Plugging this into $w(Z^\algoA)$ and setting $\alpha = \frac{-1+\sqrt{1+4 \tau}}{2 \tau}$ (recall that we are free to choose $\alpha$) we obtain
  \[ 
	  \algo(I) \leq \frac{\lambda \cdot (4 \tau^2 + \sqrt{1+4 \tau} -2 \tau-1)}{(1- \lambda) \cdot \sqrt{1+4 \tau}+2 \tau^2+(2 \lambda -2) \cdot \tau -1 + \lambda}\enspace.
  \]

\end{proof}

By setting $\lambda = \tau = 2$ we directly obtain the following result.

\begin{corollary}
\label{cor:alg:2.8}
Algorithm~\ref{alg:onerst:approx} is a polynomial-time $14/5$-approximation algorithm for \onerst
for $\tau = \lambda = 2$.
\end{corollary}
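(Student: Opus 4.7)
The plan is to derive the corollary by direct substitution into the formula established in Proposition~\ref{prop:28general}, together with the polynomial running-time justification already established earlier in the paper. First, I would recall that for \tecs and \WTAP, the results of Jain~\cite{Jain2001} and Frederickson-J\'aJ\'a~\cite{frederickson_jaja_81} give $\lambda = 2$ and $\tau = 2$ respectively, so these values are realizable in Algorithm~\ref{alg:onerst:approx} and the hypothesis of the corollary is justified.

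Second, I would evaluate the two arguments of the minimum in Proposition~\ref{prop:28general} at $\lambda = \tau = 2$. The first argument yields $1 + \tau = 3$. For the second argument, observe that $\sqrt{1 + 4\tau} = \sqrt{9} = 3$. The numerator becomes $\lambda(4\tau^2 + \sqrt{1+4\tau} - 2\tau - 1) = 2(16 + 3 - 4 - 1) = 28$, and the denominator becomes $(1-\lambda)\sqrt{1+4\tau} + 2\tau^2 + (2\lambda - 2)\tau - 1 + \lambda = -3 + 8 + 4 - 1 + 2 = 10$. Thus the second argument equals $28/10 = 14/5 < 3$, so the minimum is $14/5$.

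Third, I would address the polynomial-time claim. By Proposition~\ref{prop:onerst:threshold:compute}, the set $\Alphas$ of threshold values has size at most $|V(G)|-1$ and each threshold can be computed in polynomial time. Inside Algorithm~\ref{alg:onerst:approx}, for each scaling factor in $\Alphas \cup \{0,1\}$ we invoke Algorithms \algoB and \algoC, each of which consists of a minimum spanning tree computation, a contraction, and a call to a polynomial-time $\lambda$-approximation for \tecs or $\tau$-approximation for \WTAP. Algorithm \algoA is also polynomial. Hence the total running time is polynomial, and combining with the approximation bound of $14/5$ yields the claim.

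The only subtlety (not really an obstacle, more of a sanity check) is verifying that the expression derived in Proposition~\ref{prop:28general} is indeed continuous and correctly evaluated at the chosen values, and that the minimum is attained at the second argument rather than the first. Since $14/5 = 2.8 < 3 = 1 + \tau$, the second argument governs, and no additional case analysis is required.
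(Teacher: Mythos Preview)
Your proposal is correct and follows exactly the paper's approach: the paper simply states that the corollary follows by setting $\lambda = \tau = 2$ in Proposition~\ref{prop:28general}, and you carry out that substitution explicitly (together with a routine running-time check). There is no difference in strategy, only in the level of detail.
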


\section{Further Results}
\label{sec:further}

In this section we give an approximation algorithm for the unweighted
$k$-\flex, which asks for a minimum cardinality subgraph of a given
graph, such that the removal of any $k$ \emph{unsafe} edges results in a
connected graph.  Note that $1$-\flex is simply \flex. Furthermore, we show
that a generalization of \flex to matroids is \textsc{Set Cover}-hard to
approximate.

\subsection{Unweighted \krst}
\label{sec:unweighted}

Recall that we denote by $\vartheta_k$ the approximation ratio of a
polynomial-time algorithm for the problem unweighted $k$-ECSS.  Currently, the
best values known for $\vartheta_k$ are $5/4$ for $k=2$ due to a result of \c{C}ivril
~\cite{ccivril20195}, $1 + 2/(k+1)$ for $3 \leq k \leq 6$ due to
Cheriyan and Thurimella~\cite{CT:00}, and $1 + 1/2k + O(1/k^2)$ for $k \geq 7$
due to Gabow and Gallagher~\cite{Gabow:12}. In the remainder of this section we
prove the following theorem.

\unweightedkflex*

Note that with the current best value for $\vartheta_2$,
Theorem~\ref{thm:unweightedkflex} gives a $23/16$-approximation guarantee for
unweighed \onerst.  In the following, let $I = (G, \safe{F}, k)$ be an instance
of unweighted \krst. The approximation algorithm proceeds as follows. First, we
compute a maximum forest $X$ on the safe edges $\safe{F}$. We then compute an
$\vartheta_{k+1}$-approximate $(k+1)$-edge connected spanning subgraph $Y$ of
$G/X$ and output the graph $X \cup Y$. 

\begin{lemma}
  \label{lemma:krst:structure}
  Let $H \subseteq G$ be a feasible solution to $I$. Then $H/({\safe{F}\cap
  E(H)})$ is $(k+1)$-edge connected. 
\end{lemma}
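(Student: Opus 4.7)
The plan is to prove the contrapositive via a standard cut-correspondence argument between $H$ and its safe-edge contraction. Set $H' := H/(\safe{F} \cap E(H))$, and suppose for contradiction that $H'$ fails to be $(k+1)$-edge connected. Then there exists an edge set $S \subseteq E(H')$ with $|S| \leq k$ such that $H' - S$ is disconnected (treating the single-vertex case as trivially $(k+1)$-edge connected, which is fine because that case forces $|S| = 0$ and $H'$ connected).

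The key observation I would use next is that every edge of $H'$ corresponds to an edge of $H$ that was \emph{not} contracted, i.e., to an edge in $E(H) \setminus \safe{F} = E(H) \cap F$. Hence $S$, viewed as a subset of $E(H)$, consists entirely of unsafe edges, and $|S| \leq k$. So $S$ is a permissible failure set for the \krst instance $I$.

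It remains to argue that $H - S$ is itself disconnected, which contradicts the feasibility of $H$ and completes the proof. For this I would use the fact that contraction commutes with edge deletion for edges not being contracted: $(H - S)/(\safe{F} \cap E(H)) = H' - S$. Since contracting edges cannot disconnect a connected graph, and its converse is what we need, I would argue that if $H - S$ were connected then so would be its contraction $H' - S$, contradicting the choice of $S$. Equivalently, take a bipartition $(A', B')$ of $V(H')$ witnessing the disconnection of $H' - S$; pulling back along the contraction map (each super-vertex is an equivalence class of $V(H)$ under connectivity through $\safe{F} \cap E(H)$) yields a bipartition $(A, B)$ of $V(H)$ with $A, B \neq \emptyset$, and every $H$-edge between $A$ and $B$ is necessarily in $S$ (a non-contracted edge that crosses) — so $H - S$ has no edge between $A$ and $B$ and is disconnected, the desired contradiction.

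The proof is short and the only step requiring a moment of care is the cut-pullback: one must verify that within each super-vertex the corresponding vertex set of $H$ is connected in $H - S$ (which holds because that set is spanned by safe edges in $H$, none of which are in $S$), so the bipartition $(A, B)$ really does separate $H - S$. No genuine obstacle is anticipated.
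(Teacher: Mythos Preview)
Your proof is correct and follows essentially the same approach as the paper: argue by contradiction, take a small cut in the contracted graph, observe it consists only of unsafe edges, and lift it back to a small unsafe cut in $H$ to contradict feasibility. Your version is in fact more carefully written, as you spell out the cut-pullback step that the paper handles in a single sentence.
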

\begin{proof}
  Suppose for a contradiction that $G' \coloneqq H / (\safe{F} \cap E(H)$ is at most
  $k$-edge connected. That is, there are two vertices $v, w \in V(G')$ that
  are connected by at most $k$ edge-disjoint paths. Therefore, there is a cut
  $F' \subseteq E(G')$ of size at most $k$ separating $u$ and $v$. But then
  $F'$ is also a cut of size at most $k$ in $H$ and $F'$ consists only of
  \unsafe edges.  Therefore, $H$ is not feasible, a contradiction.
\end{proof}

Let $Z^* \subseteq G$ be an optimal solution to $I$ and let $H = X + Y$ be a solution
computed by the algorithm described above. Suppose that a maximum forest in $G -
F$ has size $\ell = |E(X)|$.
Let $Y^*$ be a minimum $(k+1)$-edge connected
spanning subgraph of $G / E(X)$. We may compute in polynomial-time a $(k+1)$-ECSS of
$G / E(X)$ of size $\vartheta_{k+1} \cdot |E(Y^*)|$.
Therefore, the solution $X + Y$ output by the algorithm has size 
\[
  |E(X)| + |E(Y)| \leq \ell + \vartheta_{k+1} \cdot |E(Y^*)| \leq  \ell + \vartheta_{k+1} \cdot \OPT(I)
\]
On the other hand, we have that
\[
  |E(X)| + |E(Y)| \leq \ell + 2 (k+1) (n- \ell) \leq 2 \OPT(I) - (2k+1) \ell
\]

Hence $|X| + |Y| \leq \min \{\ell + \vartheta_{k+1} \cdot \OPT(I),\,  2 \OPT(I) -
 (2k+1) \ell\} \leq \OPT(I)(2 + \vartheta_{k+1} (2k+1))/(2k+2)$ as claimed.

\subsection{Approximation Hardness on Transversal Matroids}
\label{sec:onerst:hardness}

Our main technical tool in the analysis of our approximation algorithm for
\onerst are exchange bijections, which are based on a matroid basis exchange
argument. So it is natural to ask whether our results can be transferred to a
matroid setting entirely. Let us consider a generalization of \onerst as follows.

\begin{quote}
    \textsc{Flexible Matroid Basis}\\
    \textbf{instance:} matroid $M$ on a ground set $\groundset$, weights $w \in \Z_{\geq 0}^X$, \unsafe items $F \subseteq \groundset$\\
    \textbf{task:} Find minimum-weight $E \subseteq \groundset$ such that for each $f \in F$, the set $E-f$ contains a basis of $M$.
\end{quote}

Observe that for graphic matroids this problem corresponds to \onerst.
Please note also that $\alpha$-MSTs and their threshold properties
 as well as $\alpha$-monotone
exchange bijections (Section~\ref{sec:onerst:technical}) generalize in
a natural way to matroids by replacing ``spanning tree'' by ``matroid basis''
in the respective lemmas. We show that despite these promising observations,
under standard complexity assumption, we cannot hope for a polynomial-time
constant-factor approximation algorithm for the problem \textsc{Flexible
Matroid Basis}. 
Let $G = (U,
V, E)$ a bipartite graph and let $\mathcal{I} \coloneqq \{ F \subseteq U \mid
\text{there is a matching $M$ of $G$ that covers $F$}\}$. Then $\mathcal{I}$ is
the set of independent sets of a matroid. Matroids that can be obtained in this
manner are called \emph{transversal matroids}, see for instance~\cite{Oxley}
for an introduction to the theory of transversal matroids. The next theorem shows that
\textsc{Flexible Matroid Basis} on transversal matroids is as hard to
approximate as \textsc{Set Cover}.

\begin{theorem}
  \label{thm:onerst:hardness}
  \textsc{Flexible Matroid Basis} on transversal matroids admits no
  polynomial-time $(1-\varepsilon)\log |X|$-approximation algorithm for
  any $\varepsilon > 0$ unless $\P = \NP$. 
\end{theorem}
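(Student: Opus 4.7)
The plan is to establish the theorem via an approximation-preserving reduction from \setcover, using the classical $(1-\varepsilon)\ln n$-inapproximability of \setcover (with $n$ the size of the universe) under $\P \neq \NP$.

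Given a \setcover instance with universe $U = \{u_1, \ldots, u_n\}$ and family $\mathcal{S} = \{S_1, \ldots, S_m\}$, assumed WLOG to cover every element, I would construct the bipartite graph $H = (L, R, E_H)$ with $L = \{s_1, \ldots, s_m\} \cup \{b_1, \ldots, b_n\}$, $R = \{r_1, \ldots, r_n\}$, and edges $s_i \sim r_j$ iff $u_j \in S_i$, together with $b_j \sim r_j$. Define the \textsc{Flexible Matroid Basis} instance as the transversal matroid $M$ on ground set $X := L$, with weights $w(s_i) := 1$, $w(b_j) := 0$, and $F := X$ (every element unsafe). The ``buffer'' vertices $b_j$ ensure that $M$ has rank $n$ and that $\{b_1, \ldots, b_n\}$ is a zero-cost basis which, however, is destroyed by the deletion of any single $b_j$.

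The core of the proof is to show that the optima of the two instances coincide. For the upper bound, given a set cover $C \subseteq \mathcal{S}$, the set $E := \{s_i : S_i \in C\} \cup \{b_1, \ldots, b_n\}$ of weight $|C|$ is feasible: deleting any $s_i$ leaves the buffer basis $\{b_j\}$ intact, whereas deleting $b_j$ still yields a basis by matching $r_j$ to some $s_i \in E$ with $u_j \in S_i$ (which exists because $C$ covers $u_j$) and routing each remaining $r_{j'}$ to $b_{j'}$. For the lower bound, in any optimum $E^*$ one may assume $\{b_1, \ldots, b_n\} \subseteq E^*$ by free augmentation (the buffers cost nothing and adding them preserves feasibility because $E^*$ itself already contains a basis). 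Then for each $j$, feasibility after deleting $b_j$ forces the basis in $E^* - b_j$ to match $r_j$ to some element of $E^*$ adjacent to $r_j$; the only remaining candidates are the $s_i \in E^*$ with $u_j \in S_i$. Hence $\{S_i : s_i \in E^*\}$ is a set cover of cardinality at most $w(E^*)$.

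The final step is to transfer the $(1-\varepsilon)\ln n$-hardness of \setcover into a $(1-\varepsilon')\log|X|$-hardness of \textsc{Flexible Matroid Basis}. Since $|X| = m + n$ and the standard \setcover hardness constructions operate in the regime $m = n^{O(1)}$, we have $\log|X| = O(\log n)$, so a hypothetical $(1-\varepsilon')\log|X|$-approximation for \textsc{Flexible Matroid Basis} would yield a $(1-\varepsilon)\ln n$-approximation for \setcover via the reduction. The main technical obstacle will be the quantitative bookkeeping of this transfer: to obtain the statement ``for every $\varepsilon > 0$'', one has to invoke a version of \setcover hardness in which the polynomial blow-up is mild enough that $\log|X|/\log n$ can be driven arbitrarily close to $1$, absorbing the loss into the $\varepsilon$-slack.
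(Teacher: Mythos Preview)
Your proposal is correct and follows essentially the same reduction as the paper: both build the transversal matroid on the disjoint union of set-vertices (weight $1$) and zero-cost ``buffer'' vertices, each buffer matched privately to one right-side vertex, and argue that feasible solutions correspond to set covers of the same cost. Your write-up is in fact slightly more careful than the paper's---you explicitly include the buffer edges $b_j r_j$ and you address the $\log|X|$ versus $\log n$ bookkeeping, both of which the paper glosses over---and your choice $F=X$ (rather than just the buffers) is harmless since deleting any $s_i$ still leaves the buffer basis intact.
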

\begin{proof}
    Let $I = (U, \mathcal{S})$ be an instance of \textsc{Set Cover}, where
    $U = \{ u_1, u_2, \ldots, u_n \}$ is the set of items to be covered and
    $\mathcal{S} = \{ S_1, S_2, \ldots, S_m \}$ is a familiy of subsets of $U$.
    We construct an instance $I \coloneqq (M(G), w, F)$ of \textsc{Flexible Matroid Basis} as follows.
    Let  $G = (A, B, E)$ be a bipartite graph given by
    \begin{align*}
        A \coloneqq & \{u_1, u_2, \ldots, u_n \} \cup \{ S_1, S_2, \ldots, S_m \} \\
        B \coloneqq & \{v_1, v_2, \ldots, v_n \} \\
        E \coloneqq & \{ S_iv_j \mid u_j \in S_i,\,1 \leq i \leq m,\,1 \leq j \leq n\}\enspace.
    \end{align*}
    Let  $M(G)$ be the transversal matroid that arises from the subsets of $A$
    that are covered by a matchings of $G$.
    Clearly, the set $U$ is a basis of $M(G)$. Let $F \coloneqq A \setminus U$ be the
    set of \unsafe items and let the weights $w \in \Z_{\geq 0}^A$ be given by
    $w(e) = 0$ if $e \in U$ and $w(e) = 1$ otherwise. This concludes the
    construction of the instance $I' = (M(G), w, F)$ of \textsc{Flexible Matroid Basis}.
    
    Let $Z^* \subseteq A$ be a
    optimal solution to the instance $I'$. We may assume that $Z^*$ contains $U$. We claim that $Z^* \setminus U$ is an optimal solution
    to the instance $I$. First, suppose that $Z^* \setminus U$ is not a cover. Then there
    is an item $u \in U$ that is not covered by $Z^* \setminus U$. But then $Z^* -
    u$ does not contain a basis of $M(G)$. Now suppose that $Z^* \setminus U$ is
    not optimal for $I$. Then there is a familiy $C \subseteq \mathcal{S}$ of
    sets that cover $U$, which is cheaper than $Z^* \setminus U$. But then $U
    \cup C$ is cheaper solution to $I'$ than $Z^*$, which contradicts the optimality of
    $Z^*$. We conclude that the values of optimal solutions to $I$ and $I'$ are the same.
    Therefore, any
    polynomial-time $\rho$-approximation algorithm for \textsc{Robust
    Matroid Basis} gives a polynomial-time $\rho$-approximation
    algorithm for \textsc{Set Cover}. Hence, the approximation hardness result
    for \textsc{Set Cover} by Dinur and Steurer~\cite{DS:14} implies that there is no
    polynomial-time $(1-\varepsilon)\log |X|$-approximation algorithm
    for \textsc{Robust Spanning Tree} unless $\P = \NP$.
\end{proof}

By a theorem of Piff and Welsh~\cite{PW:70}, any transversal matroid is
representable over any sufficiently large field. Therefore, \textsc{Flexible
Matroid Basis} is \textsc{Set Cover}-hard to approximate on vector matroids.
Note that in the proof of Theorem~\ref{thm:onerst:hardness}, we essentially
show approximation hardness of a generalization of \WTAP to matroids. It is an
interesting open question, whether this generalization admits constant-factor
approximation algorithms for subclasses of transversal or vector matroids, for instance,
regular, binary, and bicircular  matroids.

\bibliography{references}
\appendix

\end{document}